\documentclass[a4paper,USenglish,cleveref, autoref,thm-restate, numberwithinsect,pdfa]{lipics-v2021}

\pdfoutput=1 %
\hideLIPIcs  %

\newboolean{long}
\newcommand{\AppendixSymbol}{\ding{72}}
\setboolean{long}{true} %
\ifthenelse{\boolean{long}}{
	\NewDocumentEnvironment{prooflater}{m}{\begin{proof}}{\end{proof}\ignorespacesafterend}
	\NewDocumentEnvironment{proofsketch}{o +b}{}{\ignorespacesafterend}
	\newcommand{\restateref}[1]{}
	\NewDocumentEnvironment{statelater}{m}{}{}

	\NewDocumentCommand{\onlyShort}{+m}{}
	\NewDocumentCommand{\onlyLong}{+m}{#1}
	\NewDocumentCommand{\shortLong}{+m +m}{#2}
}{
	\NewDocumentEnvironment{prooflater}{m +b}{%
		\expandafter\global\expandafter\def\csname#1\endcsname{\begin{proof}#2\end{proof}}%
	}{\ignorespacesafterend}
	\NewDocumentEnvironment{proofsketch}{O{Proof sketch.}}{\begin{proof}[#1]}{\end{proof}\ignorespacesafterend}
	\usepackage{apptools}
	\newcommand{\restateref}[1]{[\IfAppendix{\hyperref[#1]{\AppendixSymbol{}}}{\hyperref[#1*]{\AppendixSymbol{}}}]}

	\NewDocumentEnvironment{statelater}{m +b}{%
		\expandafter\global\expandafter\def\csname#1\endcsname{#2}%
	}{\ignorespacesafterend}

	\NewDocumentCommand{\onlyShort}{+m}{#1}
	\NewDocumentCommand{\onlyLong}{+m}{}
	\NewDocumentCommand{\shortLong}{+m +m}{#1}
}

\graphicspath{{./graphics/}}%

\title{Revisiting Graph Modification via Disk Scaling:\\ From One Radius to Interval-Based Radii}

\titlerunning{Revisiting Graph Modification via Disk Scaling} %

\author{Thomas Depian}{Algorithms and Complexity Group, TU Wien, Austria}{tdepian@ac.tuwien.ac.at}{https://orcid.org/0009-0003-7498-6271}{Supported by the Vienna Science and Technology Fund (WWTF) under grant ICT22-029.}%

\author{Frank Sommer}{Institute of Computer Science, Friedrich Schiller University Jena, Germany}{frank.sommer@uni-jena.de}{https://orcid.org/0000-0003-4034-525X}{Supported by the Alexander von Humboldt Foundation.}

\authorrunning{T. Depian and F. Sommer} %

\Copyright{Thomas Depian and Frank Sommer} %

\ccsdesc[500]{Theory of computation~Parameterized complexity and exact algorithms}
\ccsdesc[500]{Theory of computation~Computational geometry}

\keywords{\NP-hardness, Parameterized Complexity, Unit Disk Graphs, Cluster Graphs, Connected Graphs} %

\category{} %

\relatedversion{} %

\nolinenumbers %

\EventEditors{John Q. Open and Joan R. Access}
\EventNoEds{2}
\EventLongTitle{42nd Conference on Very Important Topics (CVIT 2016)}
\EventShortTitle{CVIT 2016}
\EventAcronym{CVIT}
\EventYear{2016}
\EventDate{December 24--27, 2016}
\EventLocation{Little Whinging, United Kingdom}
\EventLogo{}
\SeriesVolume{42}
\ArticleNo{23}
\usepackage{todonotes}
\usepackage{mathtools}
\usepackage{xspace}
\usepackage{complexity}
\usepackage{dsfont}
\usepackage[ruled]{algorithm2e}
\usepackage{framed}
\usepackage{cite}
\usepackage{pifont}

\usetikzlibrary{arrows.meta,patterns,ipe}

\Crefname{observation}{Observation}{Observations}

\Crefname{property}{Property}{Properties}
\theoremstyle{claimstyle}
\newtheorem{branching}{Branching}
\Crefname{branching}{Branching}{Branchings}
\newtheorem{drule}{Rule}
\Crefname{drule}{Rule}{Rules}
\DeclareMathOperator{\argmax}{arg \max}

\let\oldrestatable\restatable
\def\restatable{\expandafter\oldrestatable}

\newcommand{\probname}[1]{{\normalfont\textsc{#1}}}
\newcommand{\prob}[3]{
		\begin{framed}
			\noindent \probname{#1} \\
			\textbf{Input:} #2 \\
			\textbf{Question:} #3
		\end{framed}
}

\newcommand{\instance}{\ensuremath{\mathcal{I}}\xspace}
\newcommand{\instanceLong}{\ensuremath{(\mathcal{S}, \rmin, \rmax, k)}\xspace}
\newcommand{\instanceLongRadius}{\ensuremath{(\mathcal{S}, \mathcal{T}, H, \rmin, \rmax)}\xspace}

\NewDocumentCommand{\PiScaling}{O{\ensuremath{\Pi}}}{#1\probname{-Scaling}\xspace}
\NewDocumentCommand{\PiScalingLong}{O{\ensuremath{\Pi}}}{\probname{Scaling To} #1\xspace}
\NewDocumentCommand{\PiRecognition}{O{\ensuremath{\Pi}}}{#1\probname{-Recognition}\xspace}
\NewDocumentCommand{\Cluster}{O{Cluster}}{\probname{Scaling To #1}\xspace}
\NewDocumentCommand{\ISc}{O{$c$-IS}}{\probname{Scaling To #1}\xspace}
\NewDocumentCommand{\ISt}{O{IS}}{\probname{Scaling To #1}\xspace}
\newcommand{\Complete}{\PiScalingLong[\probname{Complete}]}
\newcommand{\Connected}{\PiScalingLong[\probname{Connected}]}

\NewDocumentCommand{\GridTiling}{O{\ensuremath{>}}}{\probname{Grid Tiling With #1}\xspace}
\newcommand{\VC}{\probname{Vertex Cover}\xspace}
\newcommand{\IS}{\probname{Independent Set}\xspace}
\newcommand{\RadiusFeasibility}{\probname{ConScal}\xspace}
\newcommand{\RadiusFeasibilityLong}{\probname{Constrained Scaling To Graph}\xspace}

\newcommand{\Unit}{\ensuremath{\mathds{1}}}

\newcommand{\Size}[1]{\ensuremath{\left\vert #1 \right\vert}}
\newcommand{\Dist}[1]{\ensuremath{\left\Vert #1 \right\Vert_2}}
\newcommand{\BigO}[1]{\ensuremath{\mathcal{O}(#1)}}
\newcommand{\OO}{\mathcal{O}}
\newcommand{\col}{\ensuremath{\mathsf{col}}}
\newcommand{\red}{\ensuremath{\mathsf{red}}}
\newcommand{\blue}{\ensuremath{\mathsf{blue}}}
\newcommand{\green}{\ensuremath{\mathsf{green}}}
\newcommand{\dist}{\ensuremath{\mathsf{dist}}}
\newcommand{\far}{\ensuremath{\mathsf{far}}}
\newcommand{\clo}{\ensuremath{\mathsf{clo}}}
\newcommand{\rmin}{\ensuremath{r_{\min}}}
\newcommand{\rmax}{\ensuremath{r_{\max}}}
\newcommand{\nil}{\ensuremath{\textsf{nil}}}
\newcommand{\customEnumerateItem}[1]{\item[\textcolor{lipicsGray}{\sffamily\bfseries\upshape\mathversion{bold} #1}]}

\newcommand{\NewText}[1]{{#1}}

\begin{document}

\maketitle

\begin{abstract}
	For a fixed graph class $\Pi$, the goal of \probname{$\Pi$-Modification} is to transform an input graph $G$ into a graph~$H\in\Pi$ using at most $k$ modifications.
	Vertex and edge deletions are common operations, and their (parameterized) complexity for various $\Pi$ is well-studied.
	Classic graph modification operations such as edge deletion do not consider the geometric nature of intersection graphs such as (unit) disk graphs.
	This led Fomin et al.~[ITCS'~25] to introduce scaling as a geometric graph modification operation for unit disk graphs: For a given radius $r$, each modified disk will be rescaled to radius $r$.
	
	In this paper, we generalize their model by allowing rescaled disks to choose a radius within a given interval $[\rmin, \rmax]$ and study the (parameterized) complexity (with respect to $k$) of the corresponding problem \PiScaling.
	We show that \PiScaling is in \XP\ for every graph class~$\Pi$ that can be recognized in polynomial time. %
	Furthermore, we show that \PiScaling: (1) is \NP-hard and \FPT\ for cluster graphs, (2) can be solved in polynomial time for complete graphs, and (3) is \W[1]-hard for connected graphs.
	In particular, (1) and (2) answer open questions of Fomin et al.\ and (3) generalizes the hardness result for their variant where the set of scalable disks is restricted.

\end{abstract}

\section{Introduction}
\label{sec:introduction}

For a fixed graph class $\Pi$, the task in the \probname{$\Pi$-Modification} problem is to transform an input graph $G$ into a graph~$H\in\Pi$ using at most $k$ modifications.
Common modification operations include vertex deletion, edge deletion, and edge addition~\cite{CDFG23}, but other operations such as vertex-splitting have also been studied~\cite{ABUKHZAM2025}.
The parameterized complexity of \probname{$\Pi$-Modification} is well-understood, especially when parameterized by the number $k$ of modifications~\cite{CDFG23,CFK+.PA.2015,FLSZ19,Komusiewicz18,Yannakakis81,LY80}. %
Usually, we aim for an \FPT-algorithm, i.e., an algorithm with running time~$f(k)\cdot n^{\OO(1)}$, where $f$ is a computable function that depends only on~$k$, and $n$ is the overall instance size~\cite{CFK+.PA.2015}.
Some parameterized problems, however, are \W[$t$]-hard with respect to~$k$ for some $t \geq 1$.
This rules out an \FPT-algorithm for $k$ under common complexity assumptions, and we then aim for an \XP-algorithm, i.e., with running time~$n^{f(k)}$.

Due to their applications in wireless communication~\cite{HS95} and capabilities to model sensor networks~\cite{Santi05}, \emph{geometric intersection graphs} are an important graph family.
There, each vertex corresponds to a geometric object and two vertices are adjacent if and only if the two corresponding objects intersect.
The classical (non-parameterized) complexity of problems restricted to geometric intersection graphs is well-understood~\cite{HK01,Fishkin03}, %
and %
also their parameterized complexity has been studied recently~\cite{XZ25,FLS12,FLPSZ19,BBKMZ18,PSZ24}. %
Disk graphs are probably the most prominent family of geometric graphs.
In a disk graph $G(\mathcal{S}, r)$, each vertex corresponds to a point $p \in \mathcal{S}$ in the plane 
and two vertices $p$ and $q$ are adjacent if and only if their corresponding \emph{closed} disks with radius $r(p)$ and $r(q)$ intersect; see also \Cref{fig:example-modification} for an example.
In unit disk graphs, each disk has radius one and their geometric nature can lead to specialized, more efficient, algorithms on these graphs. %
One prominent example is the \probname{Clique} problem (find $k$~vertices which are pairwise adjacent), which is \NP-hard on general graphs~\cite{Karp72}, but can be solved in polynomial-time on unit disk graphs~\cite{CCJ90,Eppstein09,EKM.FMC.2023}.

Traditionally, vertex- and edge-based modifications were the primary modification operations for \probname{$\Pi$-Modification}, even when restricted to %
geometric intersection graphs~\cite{XZ25}. 
However, some of these operations, such as edge addition, ignore the underlying geometry as seen in \Cref{fig:example-modification}. %
This raises the question of their appropriateness for unit disk and other geometric intersection graphs.
As a response to this, two ``geometry-preserving'' modification operations were recently studied: (a)~disk dispersion~\cite{FG00Z23,DKHO24,DKHO25,DKHOW25}, where disks can move within a bounded radius, and (b)~disk scaling~\cite{FG00Z25}, where disks can receive a radius different from~1.%

\begin{figure}
	\centering
	\includegraphics[page=1]{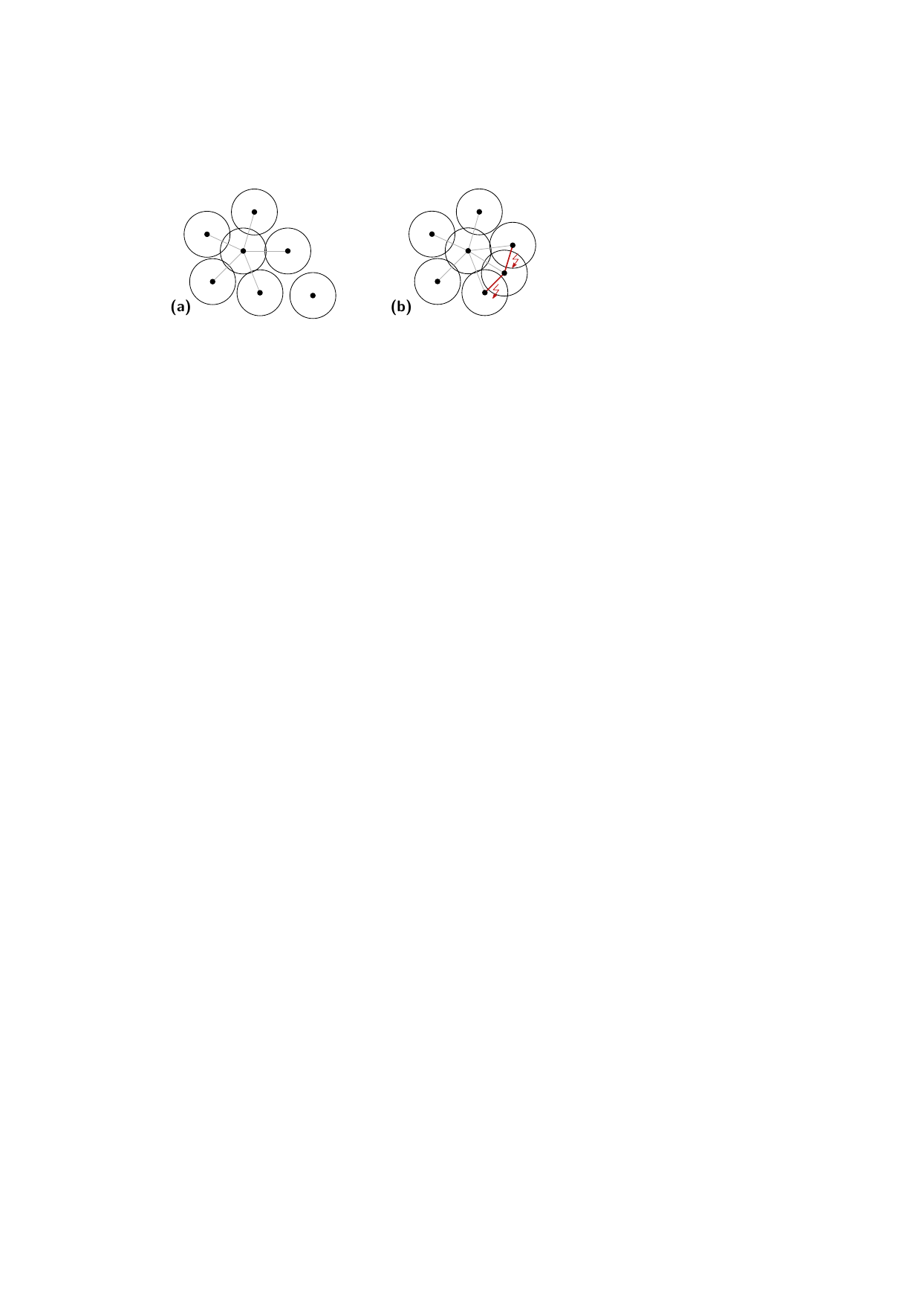}
	\caption{\textbf{\textsf{(a)}} Turning the underlying abstract graph into a $K_{1,6}$ by adding a single edge is easy. However, this is not possible while maintaining a unit disk graph as \textbf{\textsf{(b)}} demonstrates.}
	\label{fig:example-modification}
\end{figure}

We focus on the latter model, motivated by the observation that in applications such as sensor networks, the position of a sensor is fixed but its transmission range, modeled as the radius, can be adapted~\cite{Santi05}. %
More precisely, in the %
model introduced by Fomin et al.~\cite{FG00Z25}, the input contains a non-negative number~$\alpha \neq 1$, %
and each modified disk receives radius $\alpha$, i.e., all scaled disks are either \emph{shrunken} ($\alpha< 1$) or \emph{expanded} ($\alpha > 1$).
They studied disk-scaling %
for obtaining an independent set, an acyclic graph, or a connected graph, and provided \FPT-algorithms with respect to~$k$, as well as (complementing) \W[1]- and \NP-hardness results. 
A limitation of their model is that each scaled disk receives the same radius~$\alpha$.
In particular, 
all modified disks are either shrunken or expanded.\footnote{
	Fomin et al.~\cite{FG00Z25} also studied an optimization variant where one scales (at most) $k$~disks within a given interval~$[\alpha,1]$ for~$0<\alpha<1$ and minimizes the sum of all radii.
	This model also has the limitation that all disks are either shrunken or expanded.
	We do not consider this ``$\min$-variant'' in this work, but discuss it, along with other interesting generalizations, in \Cref{sec:conclusion}.}
In this work, we generalize their model by allowing each rescaled disk to choose a radius within a given interval~$[\rmin, \rmax]$, hence enabling disk shrinking and expanding simultaneously whenever~$r_{\min}<1$ and~$r_{\max}> 1$: %

\prob{%
	\PiScalingLong~(\PiScaling)
}%
{
	Set $\mathcal{S}$ of $n$ points in the plane, $0 < \rmin \leq \rmax \in \mathbb{R}$, and $k \in \mathbb{N}$.  
}%
{
	Does there exist a set $\mathcal{T} \subseteq \mathcal{S}$ and a radius assignment $r\colon \mathcal{S} \to \mathbb{R}_{> 0}$ such that $\Size{\mathcal{T}} \leq k$, $\rmin \leq r(p) \leq \rmax$ for all $p \in \mathcal{T}$ and $r(p) = 1$ otherwise, and $G(\mathcal{S}, r) \in \Pi$?
}
Intuitively, \PiScaling asks whether one can modify at most $k$ radii in a given unit disk graph $G(\mathcal{S}, \Unit)$ to values in $[\rmin,\rmax]$ such that the resulting disk graph belongs to~$\Pi$.

\subparagraph*{Our Contributions.}
We first show that \PiScaling is in \XP\ with respect to~$k$ for every graph class $\Pi$ whose recognition problem can be solved efficiently~(\Cref{thm:xp-framework}).
At first glance this might be surprising, since, although we can guess the scaled disks~$\mathcal{T}$, we cannot branch on the radius assignment $r$ or the target graph $H \in \Pi$.
To this end, we first guess for each~$p\in\mathcal{T}$ its \emph{furthest unscaled neighbor}~$\far(p)$ in $G(\mathcal{S}, r)$.
This allows us to reconstruct all neighbors of $p$ and eventually the target graph $H$ due to the geometric nature of the problem.
Afterwards, we devise a linear programming (LP) formulation (\Cref{sec:linear-programming}) that determines suitable radii once the disks to scale, i.e., $\mathcal{T}$, and the resulting intersection graph $H$ is fixed. 

Next, we focus on specific graph classes and start with cluster graphs (see \Cref{sec:preliminaries} for the definition of the graph classes). 
Cluster graphs are in particular challenging since they can require to choose a specific radius inside the interval~$[r_{\min}, r_{\max}]$; see also %
\Cref{fig:example-cluster}.
We improve the upper bound implied by \Cref{thm:xp-framework} and provide an \FPT-algorithm for \Cluster 
parameterized by~$k$ (\Cref{thm:cluster-fpt}).
The algorithm uses a branch-and-bound procedure to iteratively determine %
a new disk $p$ to scale.
However, since we do not know the final radius of $p$ beforehand, we have to further branch to determine the adjacencies of~$p$; essentially gradually fixing the target cluster graph $H$.
To ensure \FPT-time, we %
exploit the geometry within our problem.  %
In particular, we show that there is only a bounded number of ``ambiguous'' neighbors and non-neighbors of $p$: $\far(p)$, this time the furthest unscaled disk in the \emph{same} cluster as~$p$ in the solution, and $\clo(p)$, which is the \emph{closest} unscaled disk not adjacent to~$p$, i.e., $\clo(p)$ and $p$ are in \emph{different} clusters in the solution.
Once $\far(p)$ and $\clo(p)$ are determined, we can essentially ``read-off'' the (unscaled) neighbors of $p$ in~$H$.
To complement this result, we show that \Cluster is \NP-hard for any fixed rationals~$0 < r_{\min} \leq r_{\max}$ (\Cref{thm:cluster-np-hard-general}), except for the case when~$r_{\min}=1=r_{\max}$.

\begin{figure}
	\centering
	\includegraphics[page=1]{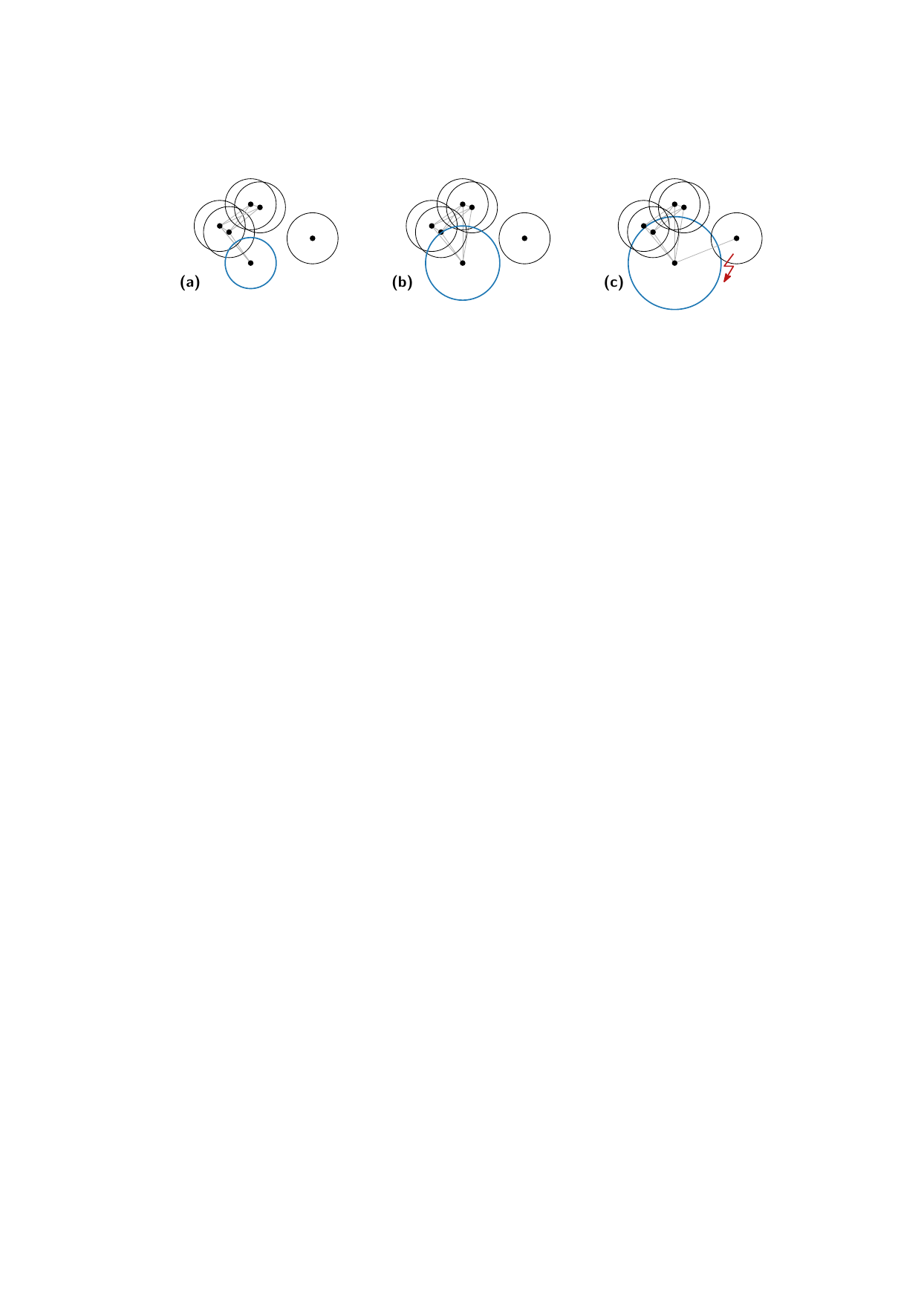}
	\caption{We can turn \textbf{\textsf{(a)}} into a cluster graph by scaling the blue disk as shown in \textbf{\textsf{(b)}}.
		However, if we enlarge the disk too much as in \textbf{\textsf{(c)}}, the resulting disk graph might no longer be a cluster graph.}
	\label{fig:example-cluster}
\end{figure}

As the second special graph class we consider complete graphs.
Here, it is optimal to scale each modified disk to~$r_{\max}$. 
We show that the corresponding problem \Complete can be solved in polynomial time (\Cref{thm:complete}) by leveraging the polynomial-time algorithm for \probname{Clique} in unit disk graphs~\cite{EKM.FMC.2023}. %
\Cref{thm:complete,thm:cluster-fpt} settle two open questions of Fomin et al.~\cite{FG00Z25};
for cluster graphs, even in our more general model with interval-based radii, for complete graphs, there is no (algorithmic) difference in the models.

As our last result, we show that the general \XP-algorithm (\Cref{thm:xp-framework,remark:xp-framework}) is for some %
graph classes~$\Pi$ the best we can hope for, i.e., cannot be improved to an \FPT-algorithm.
For example, if we want to obtain a graph that contains an independent set of size at least~$c$, the problem \ISc is \W[1]-hard for~$c$, even when~$k=0$.
This follows from the known \W[1]-hardness of detecting a size~$c$ independent set in a unit disk graph~\cite[Theorem~14.34]{CFK+.PA.2015}.
In \Cref{sec:connected}, we focus on a graph class where obtaining \W[1]-hardness is ``less-trivial''.
More concretely, we consider connected graphs, and show that \Connected is \W[1]-hard parameterized by~$k$ (\Cref{thm:connected-w1-hard}).
More precisely, in our reduction all scaled disks get expanded (with the same radius).
Hence, our result generalizes a result of Fomin et al.~\cite{FG00Z25}, who proved \W[1]-hardness %
with respect to~$k$ when~$r_{\min}=r_{\max}>1$ in the variant where the set of scalable (expandable) disks is restricted. 
This is in sharp contrast to the case when \emph{at least} $k$ disks must be shrunken to a radius~$\alpha$, which is \FPT\ with respect to~$k$~\cite[Theorem 3]{FG00Z25}.
Thus, the complexity of \PiScaling can differ drastically depending on~$r_{\min}\ge 1$ or~$r_{\max}\le 1$.

\begin{figure}
	\centering
	\input{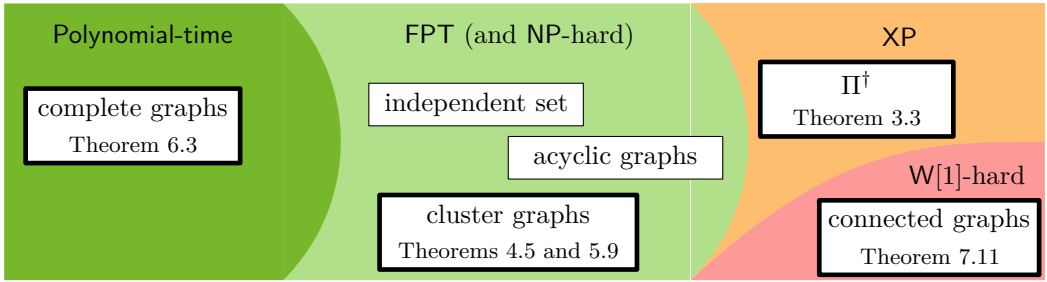}
	\caption{The (parameterized) complexity of \PiScaling. Bold boxes are results from this paper and the remaining follow from~\cite{FG00Z25}. \dag: Assuming that recognizing graphs in $\Pi$ takes polynomial time.}
	\label{fig:results}
\end{figure}

Finally, we would like to emphasize that many (algorithmic) results obtained by Fomin et al.~\cite{FG00Z25} directly transfer to our more general model.
This is due to the observation that they considered graph classes where it is always safe to either choose only~$r_{\max}$ or only~$r_{\min}$ as the radius for each scaled disk. %
We summarize our results and this observation in \Cref{fig:results}.

\onlyShort{
	\smallskip
	\noindent
	\emph{Due to space constraints, full proofs of statements marked by \AppendixSymbol{} are deferred to the appendix.}
}

\section{Preliminaries}
\label{sec:preliminaries}
We assume familiarity with the basic notions from graph theory~\cite{Die.GT4.2012} and the foundations of 
parameterized complexity theory~\cite{CFK+.PA.2015}.
For our algorithmic results, we assume the \emph{Real RAM} computational model, i.e., that we can perform arithmetic operations on real numbers in unit time~\cite{FG00Z25,EvdHM.SGN.2024}.
For an integer $p \geq 1$, we define $[p]\coloneqq\{1, 2, \ldots, p\}$.
Let $G$ be a simple and undirected graph with vertex set $V(G)$ and edge set $E(G)$.
We write $V$ and $E$ if $G$ is clear from the context. %
For a vertex set $V' \subseteq V$, we let $G[V']$ denote the subgraph of $G$ \emph{induced} by $V'$.
For a vertex $v \in V$, we let $N_G(v) = \{u \in V : uv \in E\}$ denote the \emph{neighborhood} of $v$. %

\subparagraph*{Graph Classes.}
A graph $G$ is \emph{complete} if and only if $uv \in E$ for every pair $u,v\in V$.
We call~$G$ a \emph{cluster graph} if every connected component $C$ of $G$ is a clique, also referred to as \emph{cluster}.
Equivalently,~$G$ is a cluster graph if and only if it does not contain an \emph{induced $P_3$}, i.e., three vertices $u,v,w \in V$ such that $uv, vw \in E$ but $uw \notin E$.
\NewText{Note that throughout this paper, we only} consider induced~$P_3$s.
Hence, a \emph{$P_3$-free} graph is a graph without an induced~$P_3$, i.e., a cluster graph.

\subparagraph*{Disk Graphs.}
For two points $p, p' \in \mathbb{R}^2$ in the plane, we denote with $\Dist{p - p'}$ their \emph{Euclidean distance}.
We use $x(p)$ and $y(p)$ to denote the $x$- and $y$-coordinates of $p$, respectively.
Let $\mathcal{S} = \{p_1, \ldots, p_n\}$ be a set of $n$ points in the plane. %
A \emph{radius assignment} $r\colon \mathcal{S} \to \mathbb{R}_{> 0}$ assigns to each point $p \in \mathcal{S}$ a positive \emph{radius} $r(p)$.
Given $\mathcal{S}$ and $r$, we 
define the corresponding \emph{disk graph} $G(\mathcal{S}, r) \coloneqq (\mathcal{S}, \{p_ip_j : i < j, \Dist{p_i - p_j} \leq r(p_i) + r(p_j)\})$ as the graph that contains one vertex for each point and two vertices $p_i$ and $p_j$ are adjacent if and only if their closed disks of radii $r(p_i)$ and $r(p_j)$ intersect.
If $r(p_i) = r(p_j)$ holds for all $p_i, p_j \in \mathcal{S}$, then $G(\mathcal{S}, r)$ is a \emph{unit disk graph}. %
In this case, we may assume without loss of generality $r(p) = 1$ for all $p \in \mathcal{S}$ and write $G(\mathcal{S}, \Unit)$ as shorthand. %
Throughout this paper, we identify disks in $G(\mathcal{S}, r)$ with their center point $p \in \mathcal{S}$.
Let $\instance = \instanceLong$ be an instance of \PiScaling.
We also call $k$ the \emph{budget}.
For a radius assignment $r$, we define $\mathcal{T}(r) \coloneqq \{p \in \mathcal{S} : r(p) \neq 1\}$.
We call the disks $\mathcal{T}(r)$ \emph{scaled} and all remaining disks $\mathcal{S} \setminus \mathcal{T}(r)$ \emph{unscaled}.
Note that in \PiScaling, it is sufficient to ask for a radius assignment $r$, since we can set $\mathcal{T} = \mathcal{T}(r)$.
Although \PiScaling is defined as a decision problem for complexity-theoretic reasons, we remark that all presented algorithms can output, for yes-instances, a witnessing radius assignment $r$ (i.e., a \emph{solution}).

\section{A General \textsf{XP}-Membership Framework}
\label{sec:frameworks}
Before we study \PiScaling for different graph classes $\Pi$, we first introduce a general framework that implies \XP-membership with respect to $k$ of \PiScaling for many graph classes~$\Pi$.
We condition the applicability of our framework on properties of the problem \PiRecognition, which asks, for a given graph $H$, whether $H \in \Pi$ holds.
We next present a linear programming formulation that is a key building block in our framework.

\subsection{A Linear Program for Finding Radii}
\label{sec:linear-programming}
In this section, we consider the problem \RadiusFeasibilityLong (\RadiusFeasibility), defined as follows.
Given a set $\mathcal{S} \subset \mathbb{R}^2$ of $n$ points, a set $\mathcal{T} \subseteq \mathcal{S}$, two real numbers $\rmin \leq \rmax \in \mathbb{R}_{> 0}$, and a graph $H = (\mathcal{S}, E)$, the question is whether there exists a radius assignment $r\colon \mathcal{S} \to \mathbb{R}_{>0}$ such that (a) $\rmin \leq r(p) \leq \rmax$ if $p \in \mathcal{T}$ and $r(p) = 1$ otherwise, and (b) $G(\mathcal{S}, r) = H$?
\RadiusFeasibility can be seen as a restricted version of \PiScaling, where the set of scaled disks is fixed and the target graph $H \in \Pi$ is known.
Later we will ``reduce'' \PiScaling to \RadiusFeasibility.
Let $\instance = \instanceLongRadius$ be an instance of \RadiusFeasibility. 
Without loss of generality, we assume $G(\mathcal{S} \setminus \mathcal{T}, \Unit) = H[\mathcal{S} \setminus \mathcal{T}]$; otherwise \instance is a no-instance.
\shortLong{%
	We sketch a linear program~$\mathcal{P}(\instance)$ to solve \RadiusFeasibility efficiently; see \Cref{app:linear-programming} for details.\footnote{Fomin et al.~\cite{FG00Z25} also used reduction to linear programs to determine suitable radii in their ``min-variant''.}
	
	We introduce one variable $x_p \in \mathbb{R}$ for each $p \in \mathcal{T}$ representing its radius $r(p) = x_p$.
	Furthermore, we introduce one ``slack-variable'' $\varepsilon \geq 0$ that helps us to model strict inequalities.
	To enforce $\rmin \leq r(p) \leq \rmax$, we add the constraints $\rmin \leq x_p$ and $x_p \leq \rmax$.
	Next, we consider the \emph{unscaled neighborhood} $N_H(p) \setminus \mathcal{T}$ of each $p \in \mathcal{T}$.
	We must ensure that $up \in E(G(\mathcal{S}, r))$ for every $u \in N_H(p) \setminus \mathcal{T}$.
	Since $u \notin \mathcal{T}$ implies $r(u) = 1$, this yields the constraint $x_p + 1 \geq \Dist{p - u}$.
	Similarly, for every point $u \in \mathcal{S} \setminus (N_H(p) \cup \mathcal{T})$, we must have $up \notin E(G(\mathcal{S}, r))$, resulting in the constraint $x_p + 1 \leq \Dist{p - u} - \varepsilon$.
	Observe that this encodes a strict inequality if $\varepsilon > 0$ holds.
	Finally, we consider every pair $p,q \in \binom{\mathcal{T}}{2}$.
	If $pq \in E(H)$, we introduce the constraint $x_p + x_q \geq \Dist{p - q}$, and otherwise we add $x_p + x_q \leq \Dist{p - q} - \varepsilon$.
	These constraints resemble the definitions of (non-)edges in disk graphs provided that $\varepsilon > 0$ holds.
	We ``reward'' this by maximizing $\varepsilon$ in the optimization function.
	One can now show:
}{%
	In the following, we devise a linear programming formulation~$\mathcal{P}(\instance)$ that we can use to solve \RadiusFeasibility in time $f(\Size{\mathcal{T}}) \cdot n^{\BigO{1}}$.
	At this point, we would like to note that also Fomin et al.~\cite{FG00Z25} used reduction to LPs to determine suitable radii in their algorithms for their ``$\min$-variant'' of the problem, where the amount of scaling should be minimized.
}

\begin{statelater}{linearProgrammingFormulation}
	\onlyShort{In the following, we devise a linear programming formulation~$\mathcal{P}(\instance)$ that we can use to solve \RadiusFeasibility in time $f(\Size{\mathcal{T}}) \cdot n^{\BigO{1}}$.}

	\subparagraph*{Variables.}
	We introduce one variable $x_p \in \mathbb{R}$ for every point $p \in \mathcal{T}$.
	The value of $x_p$ determines the radius for the disk~$p$ in the final radius assignment $r$, i.e., $r(p) = x_p$ will hold.
	Furthermore, we introduce an auxiliary variable $\varepsilon$ that will help us to represent strict inequalities.

	\subparagraph*{Constraints.}
	Since $\varepsilon$ should help us to model strict inequalities, we force $\varepsilon \geq 0$.
	For every point $p \in \mathcal{T}$, we introduce the following constraints.
	First, we introduce the constraints $\rmin \leq x_p$ and $x_p \leq \rmax$ to force that $\rmin \leq r(p) \leq \rmax$ holds.
	Next, we consider the neighborhood $N_H(p)$ of $p$ in $H$ and focus on points $u \in N_H(p) \setminus \mathcal{T}$, the so-called \emph{unscaled neighbors}.
	We must ensure that $up \in E(G(\mathcal{S}, r))$ for every $u \in N_H(p) \setminus \mathcal{T}$.
	Since $u \notin \mathcal{T}$ implies $r(u) = 1$, we can only control the existence of the edge via the variable $x_p$ and thus introduce the constraint $x_p + 1 \geq \Dist{p - u}$.
	We proceed similar for the points $u \in \mathcal{S} \setminus (N_H(p) \cup \mathcal{T})$, this time ensuring that the edge $up$ does not exist in $G(\mathcal{S}, r)$.
	More concretely, we introduce the constraint $x_p + 1 \leq \Dist{p - u} - \varepsilon$.
	We use the variable $\varepsilon$ as a slack-variable to model strict inequalities (for $\varepsilon > 0$).
	Later, we describe how to ensure $\varepsilon > 0$ using the objective function.
	
	Until now, we only ensured that our solution $r$ correctly represents the (non-)adjacency with respect to unscaled disks, i.e., points in $\mathcal{S} \setminus \mathcal{T}$.
	We now consider every pair $p,q \in \binom{\mathcal{T}}{2}$.
	If they are adjacent in $H$, we introduce the constraint $x_p + x_q \geq \Dist{p - q}$.
	On the other hand, if $p \notin N_H(q)$, we introduce the constraint $x_p + x_q \leq \Dist{p - q} - \varepsilon$.
	
	\subparagraph*{Objective Function.}
	We use the objective function: $\max \varepsilon$.
	This way, we consider radius assignments that also adhere to the strict inequalities modeled via our slack-variable $\varepsilon$.
	This completes the definition of the linear program $\mathcal{P}(\instance)$ and we now establish its correctness.
	
\end{statelater}

\begin{restatable}\restateref{lem:linear-program-correctness}{lemma}{lemmaLinearProgramCorrectness}
	\label{lem:linear-program-correctness}
	\instance is a yes-instance of \RadiusFeasibility if and only if
	\NewText{the linear program} $\mathcal{P}(\instance)$ admits a solution with $\varepsilon > 0$.
\end{restatable} 
\begin{prooflater}{plemmaLinearProgramCorrectness}
	We show both directions separately.
	
	\proofsubparagraph{($\boldsymbol{\Rightarrow}$)}
	Assume \instance is a yes-instance of \RadiusFeasibility and let $r$ be a radius assignment such that $G(\mathcal{S}, r) = H$ and $\rmin \leq r(p) \leq \rmax$ if $p \in \mathcal{T}$ and $r(p) = 1$ otherwise.
	We construct a variable assignment for $X$ by setting $x_p = r(p)$ for all $p \in \mathcal{T}$.
	For $\varepsilon$, we can set it to a value that is smaller than any distance between disk boundaries that do not intersect, i.e., $\varepsilon < \min_{pq \notin E(G(\mathcal{S}, r))} \Dist{p - q} - r(p) - r(q)$.
	Observe that since $r$ is a solution, we can a find a value for $\varepsilon$ such that $\varepsilon > 0$ holds.
	We now argue that this is also a solution to $\mathcal{P}(\instance)$ (with $\varepsilon > 0$).
	
	Towards a contradiction, assume that there exists an unsatisfied constraint.
	By the definition of $r$, the constraints $\rmin \leq x_p$ and $x_p \leq \rmax$ are satisfied for every $p \in \mathcal{T}$.
	Hence, one of the constraints derived from the vertex-pairs must be unsatisfied.
	Without loss of generality, assume that it is the constraint $x_p + 1 \geq \Dist{p - u}$ for some $pu \in E(H)$ (the other cases, i.e., $x_p + 1 \leq \Dist{p - u} - \varepsilon$, $x_p + x_q \geq \Dist{p - q}$, and $x_p + x_q \leq \Dist{p - q}- \varepsilon$ follow from analogous arguments together with the observation that $\varepsilon > 0$ holds).
	As the constraint is not satisfied, we have $x_p + 1 < \Dist{p - u}$.
	Since $r(p) = x_p$ and $r(u) = 1$ by definition of $r$, this means that the disks $p$ and $u$ do not intersect, i.e., $pu$ is not an edge in $G(\mathcal{S}, r)$.
	As $pu \in E(H)$, this contradicts $G(\mathcal{S}, r) = H$ and consequently the variable assignment must be a solution to $\mathcal{P}(\instance)$.
	
	\proofsubparagraph{($\boldsymbol{\Leftarrow}$)}
	Assume that $\mathcal{P}(\instance)$ admits a solution with $\varepsilon > 0$, i.e., there exists a value assignment for the variables that satisfies all constraints, and we have $\varepsilon >0$.
	We construct a radius assignment by setting $r(p) = x_p$ if $p \in \mathcal{T}$ and $r(p) = 1$ otherwise, i.e., if $p \in \mathcal{S} \setminus \mathcal{T}$.
	The constraints $\rmin \leq x_p$ and $x_p \leq \rmax$ ensure $\rmin \leq r(p) \leq \rmax$ for all $p \in \mathcal{T}$.
	We now show that $G(\mathcal{S}, r) = H$ holds and assume that there exists an edge $uv \in E(H) \setminus E(G(\mathcal{S}, r))$, i.e., the edge $uv$ is not in $G(\mathcal{S}, r)$.
	By our pre-processing check, we can assume without loss of generality that $u \in \mathcal{T}$ or $v \in \mathcal{T}$ (or both) holds.
	If $u \in \mathcal{T}$ and $v \notin \mathcal{T}$, we observe that the constraint $x_u + 1 \geq \Dist{v - u}$ is not satisfied.
	An analogous observation holds for $u \notin \mathcal{T}$ and $v \in \mathcal{T}$.
	If $u,v \in \mathcal{T}$, we observe that the constraint $x_u + x_v \geq \Dist{v - u}$ is not satisfied.
	Both cases lead to a contradiction.
	
	We now rule out the case $uv \in E(G(\mathcal{S}, r)) \setminus E(H)$ by analogous arguments.
	Assume that there exists an edge $uv \in E(G(\mathcal{S}, r))$ such that $uv \notin E(H)$.
	By our pre-processing check, we can assume $u \in \mathcal{T}$ or $v \in \mathcal{T}$ (or both).
	Consider the case $u \in \mathcal{T}$ and $v \notin \mathcal{T}$.
	As $uv \notin E(H)$, we have the constraint $x_u + 1 \leq \Dist{u - v} - \varepsilon$ in our linear program, which $x_u$ must fulfill.
	Since $\varepsilon > 0$ by assumption, this constraint actually represents $x_u + 1 < \Dist{u - v}$.
	This corresponds to $r(u) + 1 < \Dist{u - v}$, which contradicts the existence of the edge $uv$ in $G(\mathcal{S}, r)$; recall $v \notin \mathcal{T}$.
	The case $u \notin \mathcal{T}$ and $v \in \mathcal{T}$ is symmetric and for the case $u \in \mathcal{T}$ and $v \in \mathcal{T}$ we can use the constraint $x_u + x_v \leq \Dist{u - v} - \varepsilon$ to arrive at a contradiction.
	Since all cases lead to a contradiction, we conclude that $G(\mathcal{S}, r) = H$ must hold, i.e., $r$ is a solution to \instance.
\end{prooflater}

\NewText{Note that \Cref{lem:linear-program-correctness} in particular guarantees the existence of a solution when $\varepsilon$ is unbounded in the linear program $\mathcal{P}(\instance)$.}

State-of-the-art linear programming solvers are polynomial in the number of variables, constraints, \emph{and} encoding size of the involved numbers~\cite{JSWZ.fas.2021,Ren.pta.1988}. %
\NewText{We cannot obtain a bound on the encoding size of the numbers involved in $\mathcal{P}(\instance)$}. %
However, it is known that linear programs with $\eta$ variables and $m$ constraints can be optimized in $2^{\BigO{\eta\cdot\log(\eta)}} \cdot m$ 
time (in the Real RAM model)~\cite{CHAN18}.
Since the linear program~$\mathcal{P}(\instance)$ has $\Size{\mathcal{T}}+1$ variables and $\BigO{n^2}$ constraints, we get:

\begin{proposition}
	\label{prop:radius-feasibility-running-time}
	\RadiusFeasibility can be solved in $2^{\BigO{\Size{\mathcal{T}}\cdot\log(\Size{\mathcal{T}})}}\cdot n^{\BigO{1}}$
	time.
\end{proposition}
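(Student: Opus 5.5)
The plan is to reduce \RadiusFeasibility to a single optimization call on the linear program $\mathcal{P}(\instance)$, and then read off the answer using \Cref{lem:linear-program-correctness}.

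\textbf{Preprocessing.} First we check whether $G(\mathcal{S} \setminus \mathcal{T}, \Unit) = H[\mathcal{S} \setminus \mathcal{T}]$. This amounts to comparing, for each of the $\BigO{n^2}$ pairs of unscaled points, their distance with $2$. If the check fails, we reject, as allowed by the assumption stated before the construction. The check takes $n^{\BigO{1}}$ time in the Real RAM model.

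\textbf{Building $\mathcal{P}(\instance)$.} Next we construct $\mathcal{P}(\instance)$ explicitly. It has $\eta = \Size{\mathcal{T}}+1$ variables: one $x_p$ for each $p \in \mathcal{T}$, plus $\varepsilon$. Its constraints are:
\begin{itemize}
\item $\varepsilon \geq 0$;
\item two box constraints per $p \in \mathcal{T}$;
\item one constraint for each pair $(p,u)$ with $p \in \mathcal{T}$ and $u \in \mathcal{S}\setminus\mathcal{T}$;
\item one constraint for each pair in $\binom{\mathcal{T}}{2}$.
\end{itemize}
So there are $m = \BigO{n^2}$ constraints in total. Each coefficient is either a constant or a Euclidean distance, so it is computable in unit time in the Real RAM model. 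Hence the construction takes $\BigO{n^2}$ time.

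\textbf{Solving it.} We apply the algorithm of~\cite{CHAN18}, which optimizes an LP with $\eta$ variables and $m$ constraints in $2^{\BigO{\eta\log\eta}}\cdot m$ time. Here this gives $2^{\BigO{\Size{\mathcal{T}}\log\Size{\mathcal{T}}}}\cdot n^{2}$ time. The solver reports one of three outcomes:
\begin{itemize}
\item \emph{Infeasible.} Then no solution exists at all, in particular none with $\varepsilon>0$, and we reject.
\item \emph{Unbounded.} Then $\varepsilon$ can be made arbitrarily large, so some feasible solution has $\varepsilon>0$, and we accept.
\item \emph{Optimal solution.} We accept if and only if the optimal $\varepsilon^*$ satisfies $\varepsilon^*>0$. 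This is correct because a solution with $\varepsilon>0$ exists if and only if the maximum is positive.
\end{itemize}
In all three cases, \Cref{lem:linear-program-correctness} turns the outcome into the correct answer for \instance. In the accepting cases, the values $x_p$ also give a witnessing radius assignment.

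\textbf{Main obstacle.} The only delicate point is the strict inequalities: the LP solver cannot handle them directly. This is resolved by the objective $\max\varepsilon$ together with the case analysis above; in particular, the unbounded case must be treated as a yes-answer. Beyond that, we only need that~\cite{CHAN18} applies in the Real RAM model with real (irrational) distance coefficients, which is exactly why that solver is used instead of bit-complexity-based ones.
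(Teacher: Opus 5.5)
Your proposal is correct and follows the paper's argument. It builds $\mathcal{P}(\instance)$ with $\Size{\mathcal{T}}+1$ variables and $\BigO{n^2}$ constraints (after the preprocessing check on unscaled pairs), solves it with the real-RAM low-dimensional LP algorithm of~\cite{CHAN18} in $2^{\BigO{\eta\log\eta}}\cdot m$ time, and reads off the answer via \Cref{lem:linear-program-correctness}, with your infeasible/unbounded/optimal case split only spelling out what the paper notes in passing (that an unbounded $\varepsilon$ also certifies a yes-instance).
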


\subsection{The \textsf{XP}-Membership Framework}
We build upon \Cref{prop:radius-feasibility-running-time} and show that \PiScaling is in \XP\ with respect to $k$ for every graph class $\Pi$ for which \PiRecognition can be solved in polynomial time.

The main idea is to 
provide a Turing-reduction to %
$n^{f(k)}$ instances of %
\RadiusFeasibility.
To construct these instances, we
exhaustively consider all possibilities to construct the set~$\mathcal{T}$ of at most $k$ scaled disks and apply the following steps separately for each constructed~$\mathcal{T}$\onlyLong{---in the remainder of the paper we refer to this operation as \emph{branching} or \emph{guessing}}. 
Observe that if the radius of the modified disks would be fixed in the input, then the graph~$H$ that we obtain after scaling is solely determined by our choice of $\mathcal{T}$.
However, since each modified disk can choose a radius from $[\rmin, \rmax]$, we must further branch to determine which target graph $H$ we want to obtain.
To this end, for each possible set $\mathcal{T}$, we
determine for the scaled disks their (possible) neighborhood in $H$ by
\begin{figure}
	\centering
	\includegraphics[page=1]{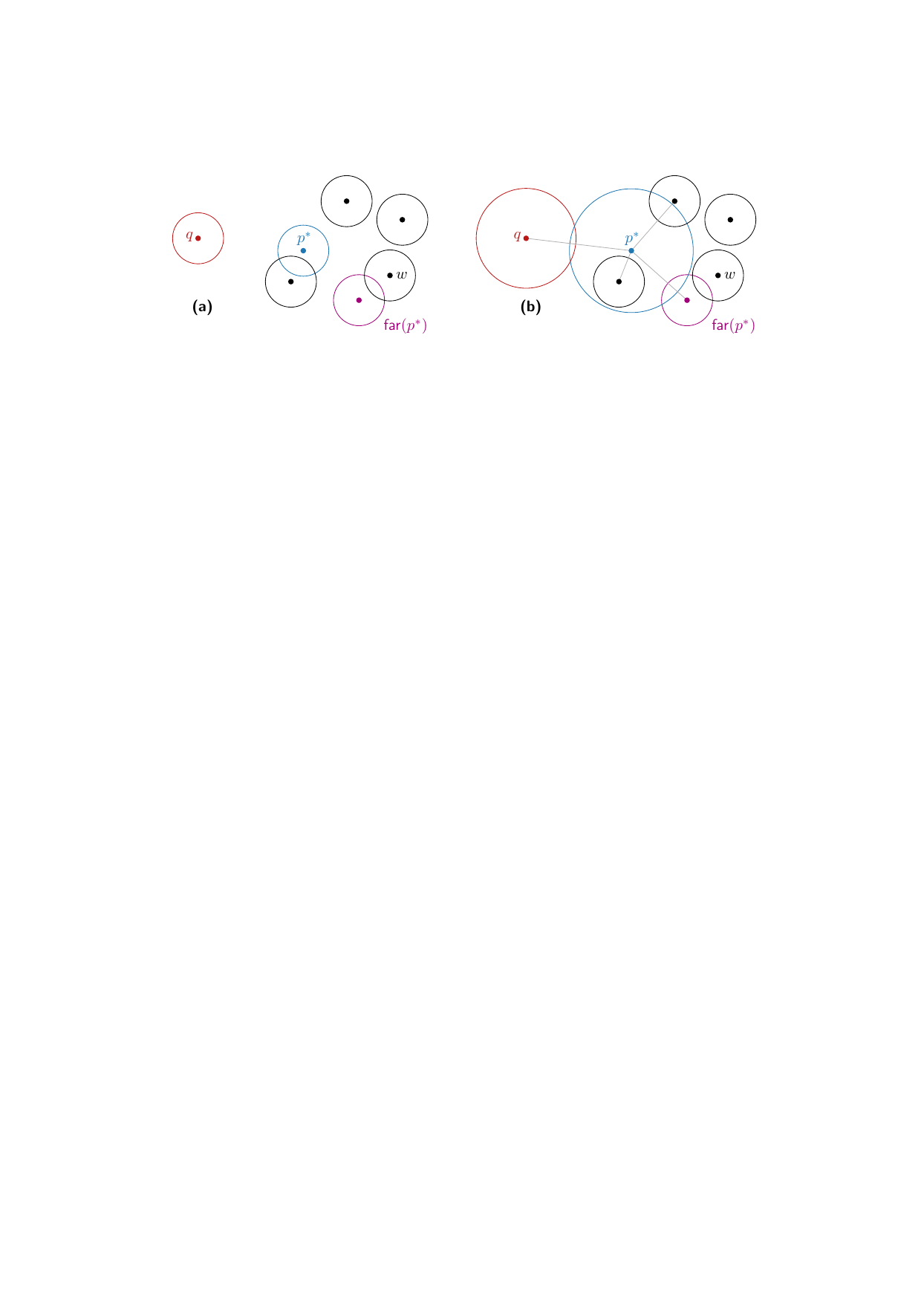}
	\caption{Let $\mathcal{T} = \{p^*,q\}$. The furthest unscaled neighbor $\far(p^*)$ (purple) of $p^*$ (blue) determines if $p^*p' \in E(H)$ for $p' \in \mathcal{S} \setminus \mathcal{T}$. For the edge to $q$ (red), we branch to fix if it exists (as here) or not. The disk graph \textbf{\textsf{(a)}} before and \textbf{\textsf{(b)}} after the scaling operations.}
	\label{fig:xp-framework-neighborhood}
\end{figure}
guessing for each $p^* \in \mathcal{T}$ its furthest unscaled neighbor $\far(p^*)$, i.e., $\far(p^*) = \argmax_{p \in (\mathcal{S} \setminus \mathcal{T}) \cap N_H(p^*)} \Dist{p - p^*}$ (if it exists).
Observe that the neighborhood of $p^*$ in the final disk graph cannot be arbitrary.
In particular, as we increase $r(p^*)$, the set of (unscaled) neighbors of $p^*$ can only increase.
Thus, $\far(p^*)$ determines all unscaled neighbors of~$p^*$:
For $p' \in \mathcal{S} \setminus \mathcal{T}$ we have $p^*p' \in E(H)$ if and only if $\Dist{p^* - p'} \leq \Dist{p^* - \far(p^*)}$; see \Cref{fig:xp-framework-neighborhood}.
Finally, we guess the adjacencies among scaled disks.
Note that this captures all relevant, i.e., possible, graphs $H$ we could obtain by scaling the disks in $\mathcal{T}$.
In particular,
each of the $2^{\BigO{k^2}}\cdot n^{\BigO{k}}$ branches corresponds to an instance of \RadiusFeasibility. %
We discard branches with $H \notin \Pi$, which can be checked efficiently since \PiRecognition can be solved in polynomial time.
\begin{restatable}\restateref{thm:xp-framework}{theorem}{theoremXPFramework}
	\label{thm:xp-framework}
	Let $\Pi$ be a graph class such that \probname{$\Pi$-Recognition} can be solved in polynomial time.
	Then, $\PiScaling$ can be solved in $2^{\BigO{k^2}}\cdot n^{\BigO{k}}$ time. %
\end{restatable}
\begin{prooflater}{ptheoremXPFramework}
	Consider a graph class $\Pi$ such that \PiRecognition can be solved in polynomial time and let $\instance = \instanceLong$ be an instance of \PiScaling.
	We now provide a Turing reduction from \instance to $n^{f(k)}$-many instances $\instanceLongRadius$ of \RadiusFeasibility, each representing a different choice for $\mathcal{T}$ and $H$.
	This reduction consists of three major steps.
	
	In the first step, we branch on the $n^k$ different sets $\mathcal{T} \subseteq \mathcal{S}$ such that $\Size{\mathcal{T}} \leq k$.
	If \instance is a yes-instance, there exists a solution $r$ and at least one of the branches will correspond to $\mathcal{T} = \mathcal{T}(r)$. 
	In the following, let us consider one of these branches, i.e., one fixed set $\mathcal{T}$.
	What remains from defining $\instanceLongRadius$ is the graph $H$.
	Steps two and three will take care of this.
	
	In the second step, we consider the disks in $\mathcal{T}$ and determine their neighborhood with respect to the unscaled disks $\mathcal{S} \setminus \mathcal{T}$.
	To this end, consider a hypothetical graph $H$.
	For each disk $p \in \mathcal{T}$ we define $\far(p) \coloneqq \argmax_{q \in (\mathcal{S} \setminus \mathcal{T}) \cap N_H(p)} \Dist{p - q}$.
	In case of ties, we break them arbitrarily.
	It could be that the set is empty, i.e., that we have $pq \notin E(H)$ for every disk $q \in \mathcal{S} \setminus \mathcal{T}$.
	In other words, in a solution $r$, $p$ either has no incident edge or is only adjacent to scaled disks in $G(\mathcal{S}, r)$.
	In such cases, we set $\far(p) = \nil$.
	We now make the following observation.
	For every unscaled disk $q \in \mathcal{S} \setminus \mathcal{T}$ we have $pq \in E(H)$ if and only if $\far(p) \neq \nil$ and $\Dist{p-q} \leq \Dist{p - \far(p)}$.
	This is because, if $\far(p) \neq \nil$, we have $r(p) \geq \Dist{p - \far(p)} - 1 \geq \Dist{p-q} - 1$ since $\far(p)$ is the furthest unscaled neighbor of~$p$.	
	In particular, observe that any edge $pq$ with $\Dist{p-q} > \Dist{p - \far(p)}$ would contradict the choice of $\far(p)$.
	Thus, once we have $\far(p)$, the set $N_H(p) \cap (\mathcal{S} \setminus \mathcal{T})$, i.e., the unscaled neighbors of $p$ in $H$, is completely determined.
	Consequently, we branch for each such $p \in \mathcal{T}$ into at most $n$ possible choices for $\NewText{\far(p) \in \mathcal{S} \setminus \mathcal{T}}$.
	Together, this yields $n^k$ different branches for all disks in $\mathcal{T}$.
	By above observation, the branching step is correct.
	
	At this point, what is missing from the construction of $H$ are edges (or non-edges) between two disks from $\mathcal{T}$.
	In the third step, we branch to determine these edges.
	Recall that $\Size{\mathcal{T}} \leq k$.
	Hence, there are at most $2^{\BigO{k^2}}$ different branches.
	Since we perform exhaustive branching, correctness is maintained.
	
	Combining Steps 1--3, we obtain $n^k \cdot n^k \cdot 2^{\BigO{k^2}} \in 2^{\BigO{k^2}} \cdot n^{2k}$ different branches, each corresponding to a different instance $\instance' = \instanceLongRadius$ of \RadiusFeasibility.
	In each branch, we check if $H \in \Pi$ holds.
	As \PiRecognition can be solved in polynomial time, this takes $n^{\BigO{1}}$ time, and we reject the branch if the check fails.
	Otherwise, it only remains to check if for the particular choice of $\mathcal{T}$ and $H$, there exists a radius assignment $r$ such that $\rmin \leq r(p) \leq \rmax$ if $p \in \mathcal{T}$ and $G(\mathcal{S}, r) = H$.
	Note that if such an $r$ exists, it witnesses that~\instance is a yes-instance of \PiScaling.
	By \Cref{prop:radius-feasibility-running-time}, we can check for the existence of~$r$ in $2^{\BigO{k \cdot \log(k)}}\cdot n^{\BigO{1}}$ time and combining everything, the theorem follows.
\end{prooflater}

\shortLong{
	Note that we only require the polynomial-time algorithm for \PiRecognition to efficiently verify that $H \in \Pi$ holds.
	Relaxing this restriction leads to different flavors of our meta-result:%
}{%
	A closer analysis of the proof of \Cref{thm:xp-framework} reveals that the existence of a polynomial-time algorithm for \PiRecognition is only required to efficiently verify that $H \in \Pi$ holds for the graph $H$ that we obtain in our branching-procedure.
	\NewText{In fact, we can replace the polynomial-time algorithm for \PiRecognition with an algorithm that runs in time $f(\kappa) \cdot n^{\BigO{1}}$ or $\BigO{n^{f(\kappa)}}$, for some computable function $f$, i.e., with an \FPT- or \XP-algorithm with respect to some parameter $\kappa$, respectively.
	This results in an additional multiplicative factor in the running time.
	More concretely, in both scenarios we can bound the overall running time by $2^{\BigO{k^2}}\cdot n^{\BigO{k + f(\kappa)}}$.
	Hence, by relaxing the prerequisite on \PiRecognition, we can obtain a weaker version of our meta-result, as summarized in the following remark.
	}
}%
\begin{remark}
	\label{remark:xp-framework}
	Let $\Pi$ be a graph class such that \probname{$\Pi$-Recognition} is \FPT\ or in \XP\ with respect to some parameter $\kappa$.
	Then, $\PiScaling$ is in \XP\ with respect to $k + \kappa$.
\end{remark}

\onlyLong{
	While \Cref{thm:xp-framework} (and \Cref{remark:xp-framework}) provide(s) us with an upper bound for many graph classes $\Pi$, it is, of course, a priori not clear if the obtained \XP-running time is the best we can achieve for specific choices of $\Pi$.
	In the upcoming sections, we take a closer look at \PiScaling for different graph classes $\Pi$ and examine if \Cref{thm:xp-framework} can be improved to a fixed-parameter or even polynomial-time algorithm. 
}

\newcommand{\showClusterEnlargeEdgeFigure}{
	\begin{figure}
		\centering
		\includegraphics[page=1]{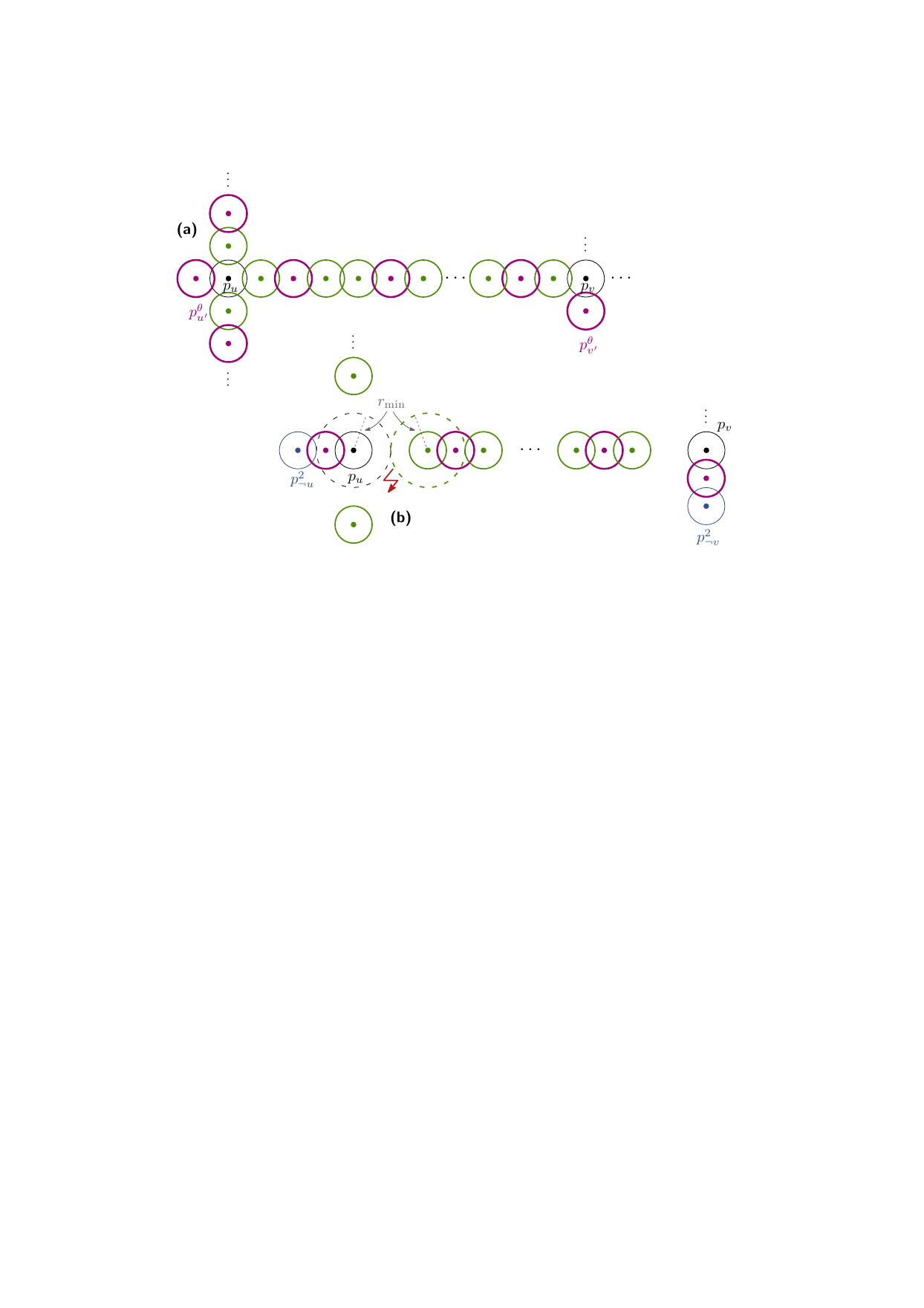}
		\caption{
			The edge $uv$ for \textbf{\textsf{(a)}} $\rmin > 1$ and \textbf{\textsf{(b)}} $\rmin < 1$. Black, blue, green, and purple disks are 1-, 2-, $\eta^2$-, and $\theta$-heavy, respectively.
			In \textbf{\textsf{(a)}}, if we enlarge $p_u$ and the left endpoint of a heavy $P_3$, then two $\theta$-heavy disks are in the same connected component but not adjacent as $\theta > k$.
		}
		\label{fig:cluster-hardness-enlarge-shrink-edge}
	\end{figure}
}

\newcommand{\showClusterHardnessExample}{
	\begin{figure}
		\centering
		\includegraphics[page=1]{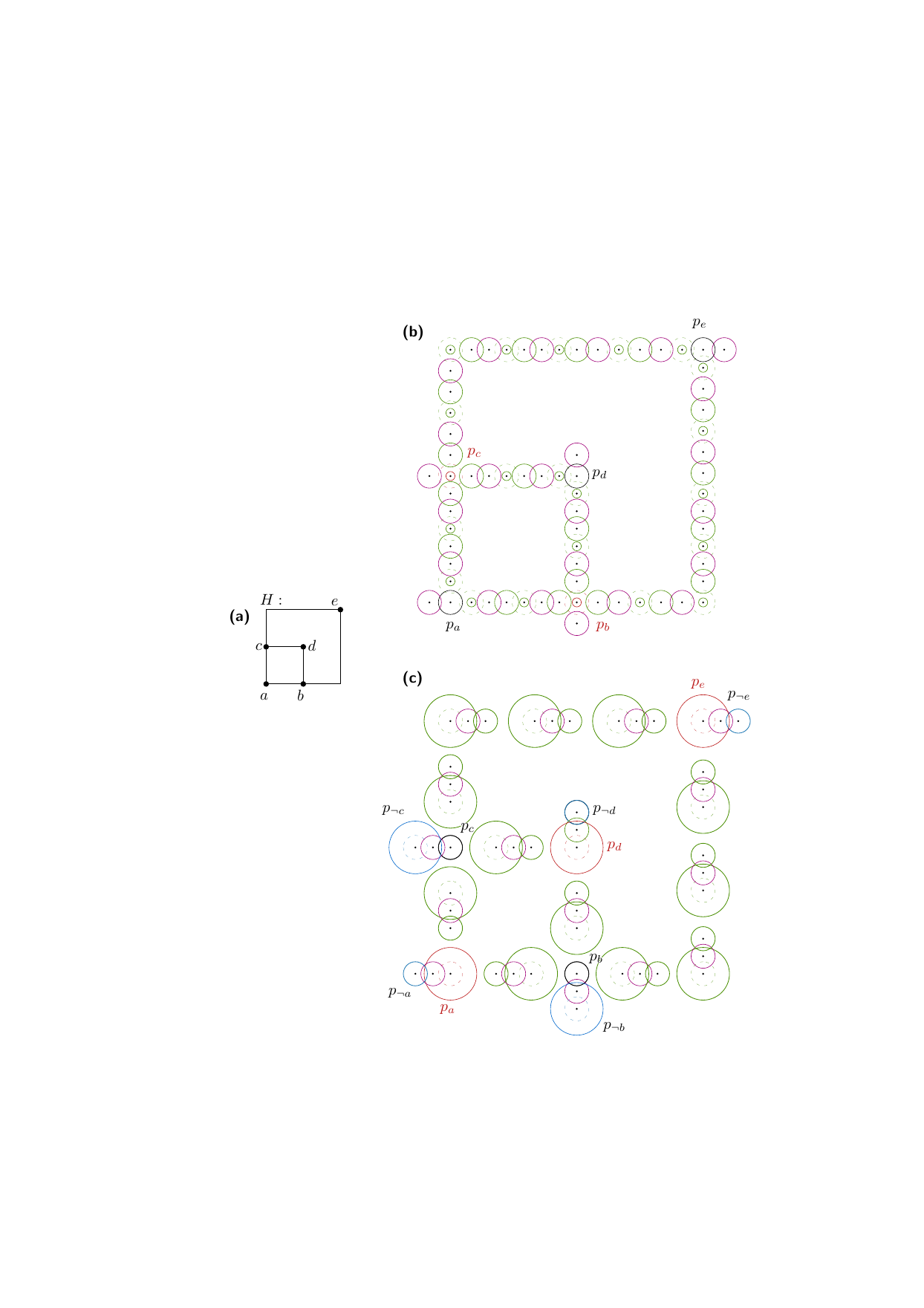}
		\caption{
			\textbf{\textsf{(a)}} A graph $H$ and the corresponding instances of \Cluster for \textbf{\textsf{(b)}} $\rmin < 1$ and \textbf{\textsf{(c)}} $\rmin > 1$. 
			If a disk representing a vertex is red, then the vertex belongs to a solution to the instance \textbf{\textsf{(b)}} $(H, 2)$ of \VC and \textbf{\textsf{(c)}} $(H, 3)$ of \IS.
			Note that we can always slightly shift some disks to make space along an edge of $H$ if required.
			We simplified $H$ and the constructed instances to ease readability.
			Colors have the same meaning as in \Cref{fig:cluster-hardness-enlarge-shrink-edge}.
		}
		\label{fig:cluster-hardness-example-combined}
	\end{figure}
}

\section{A Fixed-Parameter Tractable Algorithm for \textsc{Scaling To Cluster}}
\label{sec:cluster-fpt}
We now consider \NewText{the setting where $\Pi$ is the class of all cluster graphs} %
and show that \Cluster is \FPT\ parameterized by the \shortLong{budget $k$}{number $k$ of allowed scaling operations}.
This improves the \XP-algorithm implied by \Cref{thm:xp-framework}\shortLong{ since cluster graphs can be recognized \shortLong{efficiently}{in polynomial time} due to their characterization as being $P_3$-free}{; observe that cluster graphs can be recognized in polynomial time thanks to their characterization as being $P_3$-free}.

Our algorithm follows the overall strategy of the general \XP-approach behind \Cref{thm:xp-framework}, but uses structural properties of cluster graphs to obtain an \FPT-running time.
We provide in \Cref{fig:cluster-overview} an overview of the algorithm.
Let us fix an instance $\instance = \instanceLong$ of \Cluster.
Consider a hypothetical solution $r$.
The algorithm has two main phases.
In the first phase, we determine the disks in $\mathcal{T}(r)$\onlyLong{, i.e., Step~1 of the algorithm behind \Cref{thm:xp-framework}}.
For each such disk~$p\in\mathcal{T}(r)$, we also fix two, in some sense ``extremal'', unscaled disks that allow us to deduce the membership of $p$ in clusters from $G(\mathcal{S}, r)$:
We determine (a) the furthest \emph{unscaled} disk $\far(p)$ which is in the \emph{same} cluster as~$p$ in $G(\mathcal{S},r)$, and (b) the closest \emph{unscaled} disk $\clo(p)$ which is in a \emph{different} cluster than~$p$ in $G(\mathcal{S},r)$\onlyLong{.
	This resembles Step~2 of the algorithm behind \Cref{thm:xp-framework}}; see also \Cref{fig:cluster-overview}b--d.
With this, we can almost completely fix the neighborhood of $p$ in $G(\mathcal{S},r)$ via %
\FPT-many branches %
thanks to the structure of cluster graphs and the geometric aspect of the problem.
In Phase~2, we resolve the (non-) edges between the at most~$k$ disks in $\mathcal{T}(r)$.
\shortLong{Na{\"i}ve branching}{In principle, we could use the na{\"i}ve branching algorithm for Step~3 of our \XP-algorithm presented in \Cref{thm:xp-framework}, but this} would result in an overall running time of~$2^{\OO(k^2)}\cdot n^{\OO(1)}$.
We argue that for cluster graphs, this step can be done in $2^{\OO(k\cdot \log(k))}\cdot n^{\OO(1)}$~time.
Finally, \onlyLong{analogously to Step~3 of the algorithm behind \Cref{thm:xp-framework}, }we employ the LP from \Cref{sec:frameworks} to check if suitable radii for the scaled disks exist. %
\shortLong{We now give an overview of the phases; see \Cref{app:cluster-fpt} for details.}{In the following, we first provide in \Cref{sec:cluster-fpt-intuition} a more detailed overview of the two phases before we turn in \Cref{sec:cluster-fpt-algorithm} to a detailed algorithmic description of them. Finally, in \Cref{sec:cluster-fpt-correctness}, we establish correctness of our algorithm and bound its running time.}

\onlyLong{	
	\subsection{The Intuition of the Two Phases}
	\label{sec:cluster-fpt-intuition}
	We now present a more detailed overview of the two phases of our algorithm.
	Without loss of generality, we assume that $\Size{\mathcal{S}} > k$.
	Otherwise, we replace the following Phase~1 by %
	simply guessing the set of scaled disks, which is asymptotically faster than %
	Phase~1. %

}

\subparagraph*{\onlyShort{Intuition of }Phase 1.}
In this phase, we aim to determine a set~$\mathcal{T}$ of scaled disks and, for each disk~$p\in \mathcal{T}$, the (a) furthest \emph{unscaled} disk~$\far(p)$ which is in the same cluster as~$p$ in the solution and the (b) closest \emph{unscaled} disk~$\clo(p)$ which is not in the same cluster as~$p$ in the solution. 
Initially, we %
compute in polynomial time a maximal vertex-disjoint~$P_3$-packing~$\mathcal{P}$ of~$G(\mathcal{S}, \Unit)$, that is, $\mathcal{P}$ is a set of vertex-disjoint~$P_3$s of~$G(\mathcal{S}, \Unit)$ such that no further vertex-disjoint~$P_3$ of~$G(\mathcal{S}, \Unit)$ can be added to~$\mathcal{P}$.  
By~$P$ we denote the set of points in~$\mathcal{S}$ whose corresponding disks are contained in at least one~$P_3 \in \mathcal{P}$.
Since every~$P_3$ requires at least one scaling operation, we can safely output no if~$|\mathcal{P}|\ge k+1$.
Moreover, $G':=G((\mathcal{S}\setminus P), \Unit)$ is a cluster graph. %
Next, we create a complete edge-colored auxiliary graph~$H$ on the entire vertex set~$\mathcal{S}$. %
The colors of~$H$ model the connectivity after some scaling operations and represent edges (blue), non-edges (red), and (still) undefined vertex-pairs (green) between two scaled disks that we resolve in Phase~2; see also \Cref{fig:cluster-overview}.
More precisely, let~$\mathcal{T}_i$ denote the set of the first $i$~disks we identified for scaling (hence~$\mathcal{T}_0=\emptyset$).
The graph~$H_i$ represents the connectivity of the disk graph resulting from~$G(\mathcal{S}, \Unit)$ if~$\mathcal{T}_i$ is scaled accordingly; 
recall that we only compute the precise radii of the scaled disks at the end and now we only determine the final cluster graph~$H$, which will represent the abstract intersection graph.
The graph~$H_i$ is helpful to identify new~$P_3$s after we have already scaled the first $i$~disks~$\mathcal{T}_i$.
In particular, whenever~$H_i$ contains a \emph{fully-defined} $P_3$ $X$, i.e., two edges (blue edges in~$H_i$), one non-edge (red edge in~$H_i$), and no undefined vertex-pair in $V(X)$ (green edge in~$H_i$), we branch to determine which of the involved disks we (additionally) need to scale (and thus add to $\mathcal{T}_{i + 1}$).
Existing $P_3$s containing undefined (i.e., green) vertex-pairs will be treated at the end.

We determine~$\far(p)$ and~$\clo(p)$ via branching but exploit the structure of cluster graphs and the geometric aspect of the problem to bound the number of branches in $k$.
To obtain $\far(p)$,
let~$W$ be the $k$~closest disks to~$p$ (excluding~$w$ itself).
Due to our budget~$k$, at least one disk of~$W$ is unscaled.
Let~$\mathcal{C}$ be the set of at most $k$~clusters of~$G'$ that contain the disks in~$W$.
If the  cluster in the solution which contains~$p$ also contains disks of a cluster~$C$ of~$G'$ that ``merges'' into it, we must have~$C\in\mathcal{C}$.
There are three possibilities for the disk $\far(p)$: (a) $\far(p)$ does not exist, (b) $\far(p)$ is an unscaled disk from the $P_3$-packing $\mathcal{P}$, or (c)
$\far(p)$ is among the $k$ furthest disks in one of the clusters of~$\mathcal{C}$ due to the budget $k$.
This way, we can bound the overall number of choices for~$\far(p)$ in $\BigO{k^2}$.
Next, after we determined~$\far(p)$, we branch on the $k$~closest disks to~$p$ which are further away than~$\far(p)$ from~$p$ to determine~$\clo(p)$ (we also consider the case that~$\clo(p)$ does not exist).
After fixing~$\far(p)$ and~$\clo(p)$, there may exist disks~$q$ with~$\Dist{\far(p)-p}<\Dist{q-p}<\Dist{\clo(p)-p}$.
Since $\far(p)$ and~$\clo(p)$ is the furthest/closest \emph{unscaled} disk which is in/not in the same cluster as~$p$ in the solution, each such disk~$q$ needs to be scaled as well; see \Cref{fig:cluster-overview}b--d.
Note that due to our budget these can be at most $k$~disks.
We maintain a set~$N_i$ that keeps track of the disks that require scaling due to the above reasoning and update~$H$ accordingly.
Whenever more than $k$~scaling operations are required, i.e., $\Size{\mathcal{T}_i \cup N_i} > k$, we abort the branch.

\subparagraph*{\onlyShort{Intuition of }Phase 2.}
Assume that the set~$\mathcal{T}$ of scaled disks has been correctly determined.
We now resolve the status of vertex-pairs where both endpoints are scaled disks, i.e., the undefined edges, in $2^{\OO(k\cdot \log(k))}\cdot n^{\OO(1)}$ time, %
and distinguish between scaled disks~$p$ with for which~$\far(p)$ exists and for which it does not exist.
The latter implies that~$p$ is in a cluster with scaled disks only.
All undefined vertex-pairs incident to disks for which~$\far(p)$ does not exist can be resolved in polynomial time. %
For the remaining scaled disks, we guess a clustering of them that also resolves all incident undefined vertex-pairs. %
Hence, the final %
auxiliary graph~$H$ %
contains no undefined edges.
If at some stage of this process, graph~$H$ does not correspond to a cluster graph, we can abort.
Otherwise, we use the LP from \Cref{sec:linear-programming} to check if suitable radii for the selected disks exist.
\onlyShort{Combining all, we obtain:}

\begin{figure}
	\centering
	\includegraphics[page=1]{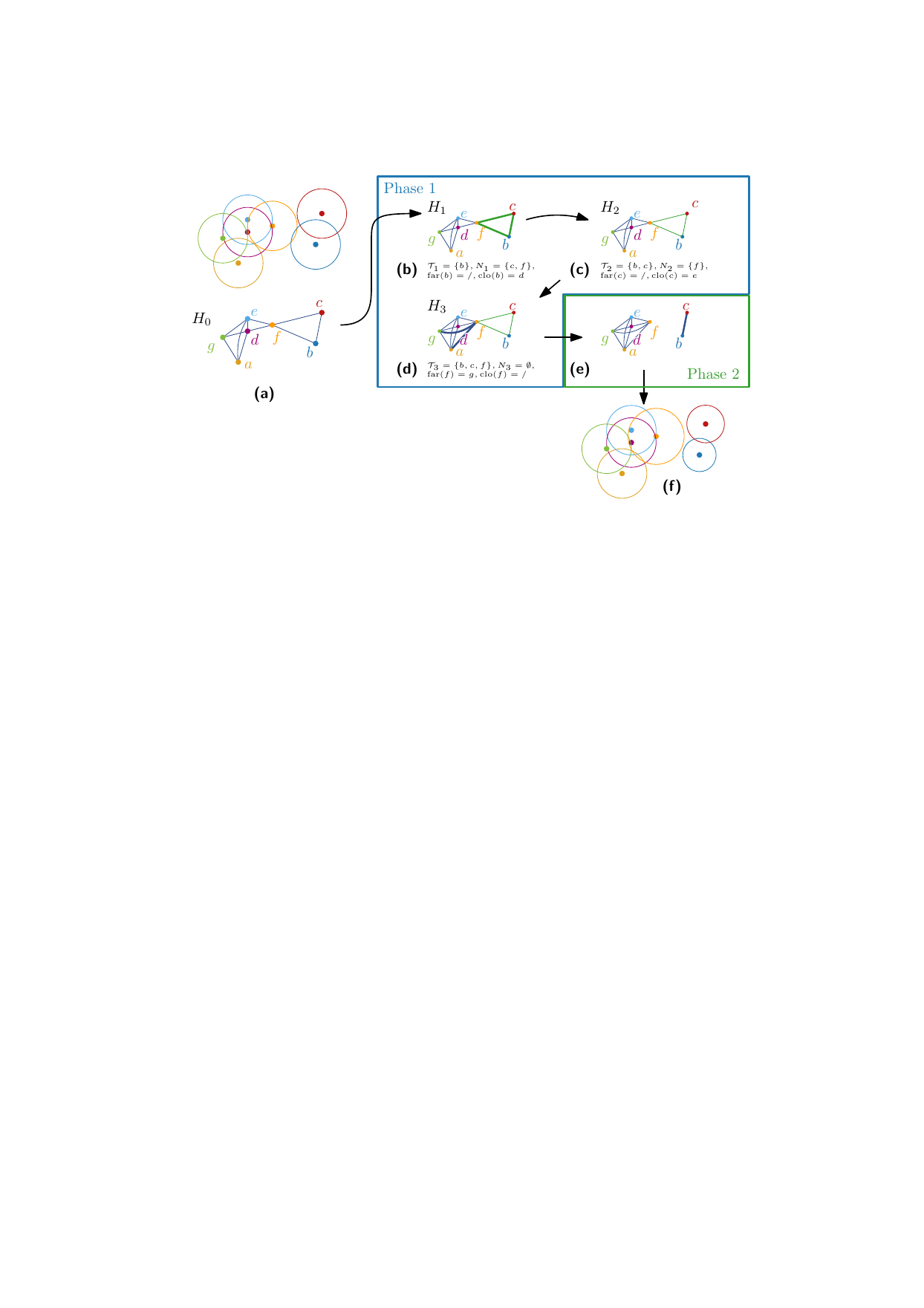}
	\caption{
		Overview of our \FPT-algorithm for \Cluster with a small example.
		\textbf{\textsf{(a)}}~An input instance and the corresponding edge-colored complete graph $H_0$ (we omit red edges for clarity).
		\textbf{\textsf{(b)}}--\textbf{\textsf{(d)}} In Phase 1, we iteratively identify disks to scale (the sets $\mathcal{T}_i$) and refine our edge-colored complete graph $H_i$ (changes highlighted in bold).
		\textbf{\textsf{(e)}} In Phase 2, we branch to fix a compatible cluster graph $H$, and check using \Cref{prop:radius-feasibility-running-time} if it can be realized.
		\textbf{\textsf{(f)}} The solution to \textbf{\textsf{(a)}}.
	}
	\label{fig:cluster-overview}
\end{figure}

\onlyLong{It remains to present the details of our algorithm and prove its correctness.}

\begin{statelater}{clusterFPTDetails}
	\subsection{The Algorithmic Description of the Two Phases}
	\label{sec:cluster-fpt-algorithm}
	In \shortLong{\Cref{sec:cluster-fpt}}{\Cref{sec:cluster-fpt-intuition}}, we gave an intuitive description of the two phases.
	Recall that we can safely assume that our instance has more than $k$~disks since otherwise the instance can be solved in the desired running time via brute force. 
	We now proceed with a more formal, algorithmic description of them.
	However, before we can do that, we first have to describe the underlying data structures.
	In the following, we define $G\coloneqq G(\mathcal{S}, \Unit)$ to be the unit disk graph of the instance \instance.
	
	\subparagraph*{Data Structures.}
	First, $\mathcal{T}_i$ is the set of disks which are scaled after $i$~steps of our algorithm.
	Thus, $|\mathcal{T}_i|=i$.
	Second, $F_i$ is a set of disks which is guaranteed to be not part of a solution after $i$~steps of our algorithm.
	More specifically, $F_i$ contains the disks~$\far(p)$ and~$\clo(p)$ which we determined as unscaled.
	Note that each disk contained in~$F_i$ is also contained in any~$F_j$ where~$j\ge i$, that is, such a disk will never be scaled (i.e., we have $F_i \subseteq F_j$). 
	Third, we use a set~$N_i$ of disks which still have to be scaled.
	Initially, $F_0=\emptyset=N_0$.
	More precisely, $N_i$ contains all disks~$q$ which are not yet scaled, but~$\Dist{\far(p)-p}<\Dist{q-p}<\Dist{\clo(p)-p}$ for some disk~$p$ which we already scaled, that is, $p\in \mathcal{T}_i$.
	Moreover, we use a set of auxiliary complete edge-colored graphs~$H_i$ to represent the adjacencies after step~$i$ of the algorithm, that is, after the disks in~$\mathcal{T}_i$ are scaled.
	More precisely, the vertex set of~$H_i$ represents the point set~$\mathcal{S}$ and there are three colors:
	\textbf{(1)} color \emph{red} represents a non-edge, 
	\textbf{(2)} color \emph{blue} represents an edge, and
	\textbf{(3)} color \emph{green} represents an undefined vertex-pair. %
	By~$\col_i(u,w)$ we denote the color of the vertex-pair~$u,w$ of~$H_i$.
	We may drop the index~$i$ if it is clear from the context.  
	Initially, before we start branching, $H_0$ corresponds to~$G$, that is, for any two points~$u,w\in\mathcal{S}$ we have~$\col(u,w)=\blue$ if~$\Dist{u-w}\le 2$, and~$\col(u,w)=\red$ if~$\Dist{u-w}> 2$.
	Thus, initially no edge of~$H_0$ is~$\green$, and thus undefined.
	We use color~$\green$ only during our algorithm for all vertex-pairs were both endpoints are from~$\mathcal{T}_i\cup N_i$.
	All $\green$~edges are resolved in Phase~2 after all disks for scaling have been determined.
	Three vertices~$u,v,w\in V(H_i)$ are a \emph{colorful~$P_3$} if~$\col_i(u,v)=\blue=\col_i(u,w)$, and~$\col_i(v,w)=\red$.
	Note since every edge in~$H_i$ where both endpoints are from~$\mathcal{T}_i\cup N_i$ is~\green, we have~$|\{u,v,w\}\cap (\mathcal{T}_i\cup N_i)|\le 1$, that is, at most one vertex of any colorful~$P_3$ of~$H_i$ is from~$\mathcal{T}_i\cup N_i$.

	\subparagraph*{Algorithmic Description of Phase~1.}
	In polynomial time, we compute a maximal $P_3$-packing~$\mathcal{P}$ of~$G$ and by~$P$ we denote the disks contained in~$\mathcal{P}$.
	If~$|\mathcal{P}|\ge k+1$, we face a no-instance.
	Consequently, we have~$|P|\le 3k$.
	Let~$G'$ be the cluster graph which is the result of removing the disks in~$P$ from~$G$. 
	Let~$\mathcal{T}_i, F_i$, and~$N_i$ be the currently scaled, forbidden, and not-yet-scaled disks, respectively.
	If~$|N_i|\ge 1$, or~$H_i$ contains a colorful~$P_3$, then we invoke our branching (see below), and otherwise we end Phase~1 and start Phase~2 (recall that initially~$\mathcal{T}_0=N_0=F_0=\emptyset$).
	
	We begin by describing our branching procedure.
	If~$|N_i\cup \mathcal{T}_i|>k$, we abort this branch.
	In the first step, we determine a disk which gets scaled.
	There are two cases: (a) $N_i\ne\emptyset$, then we pick an arbitrary disk~$w\in N_i$.
	Otherwise, (b) if~$N_i=\emptyset$, let~$Y$ be a colorful~$P_3$ of~$H_i$.
	Let~$Z:=V(Y)\setminus (\mathcal{T}_i\cup F_i)$.
	Note that~$Z\cap N_i=\emptyset$ since~$N_i=\emptyset$.
	If~$Z=\emptyset$, we abort this branch, and otherwise, we guess a vertex of~$Z$, say~$w$, which gets scaled.
	
	In the second step, we determine the edges of the solution which are incident with the selected disk~$w$ we aim to scale.
	More precisely, our aim is to determine a furthest unscaled disk~$\far(w)\in\mathcal{S}\setminus(\mathcal{T}_i\cup N_i)$ from~$w$ such that in the solution~$w$ and~$\far(w)$ are adjacent, and also a closest unscaled disk~$\clo(w)$ in~$\mathcal{S}\setminus(\mathcal{T}_i\cup N_i)$ from~$w$ such that in the solution~$w$ and~$\clo(w)$ are not adjacent. 
	Consequently, $\far(w)$ yields a lower bound for the new radius~$r(w)$, and~$\clo(w)$ yields an upper bound for~$r(w)$.
	In other words, we have~$\Dist{w-\far(w)}-1\le r_w < \Dist{w-\clo(w)}-1$.
	The precise value of~$r(w)$, however, is not yet clear due to interactions with other (so far not) scaled disks. 
	These radii will be determined at the end of Phase~2 via the LP from \Cref{sec:frameworks}.

	We first describe our branching to obtain~$\far(w)$.
	In principle, there are 3 possibilities:
	\begin{enumerate}
		\item $\far(w)$ does not exist,
		\item $\far(w)\in P$, and
		\item $\far(w)$ is contained in some cluster of~$G'$. 
	\end{enumerate}
	
	We now provide a more detailed description of the branching step.
		To do so, we need the following notation:
		For each cluster~$C$ of~$G'$ (the unit disk graph~$G'=G((\mathcal{S}\setminus P),\Unit)$ obtained after removing each unit disk in the $P_3$-packing~$\mathcal{P}$), we let~$\dist_{G'}(C,w)=\min_{p\in C\setminus{w}}\Dist{p-w}$ denote the \emph{smallest distance of any point except~$w$ corresponding to a vertex in~$C$ to~$w$}.
		We now order all cluster of~$G'$ according to~$\dist_{G'}$ and by~$\mathcal{Y}$ we denote the set of $k$ closest clusters to~$w$.
		Note that if~$w$ is its unique cluster, then cluster~$\{w\}$ is \emph{not} part of~$\mathcal{Y}$.

	\begin{branching}
		\label{branching-for-far(w)}
		We branch on the following options.
		\begin{enumerate}
			\item $\far(w)$ does not exist.
			\item $\far(w)\in P$.
			\item For each of the $k$ closest clusters~$\mathcal{Y}$ to $w$, consider the $k$ furthest disks to $w$ and branch among these $k^2$ disks.
		\end{enumerate}
	\end{branching}

	Note that \textbf{1.} represents the case that the cluster of the solution which contains~$w$ does only contain scaled disks.

	After fixing~$\far(w)$ due to the above branching, we now branch to determine~$\clo(w)$ as follows.
	We use the following notation: 
	If~$\far(w)$ exists, we set~$d_{\far(w)}:=\Dist{w-\far(w)}$, and otherwise, if~$\far(w)$ does not exist, we set~$d_{\far(w)}=0$. 
	
	\begin{branching}
		\label{branching-for-clo(w)}
		We branch on the following options.
		
		\begin{enumerate}
			\item $\clo(w)$ does not exist.
			\item 
			We sort all disks~$q$ of~$\mathcal{S}\setminus (\mathcal{T}_i\cup N_i)$ with~$d_{\far(w)}<\Dist{q-w}$, that is, all not yet (or later) scaled disks with sufficiently large distance from~$w$, ascendingly according their distance to~$w$.
			By~$D_j$ we denote the set of disks which all have the $j$-th smallest distance to~$w$.
			Now, let~$\ell$ be minimal such that~$\sum_{j=1}^{\ell}|D_j|\ge k$.
			If~$\sum_{j=1}^{\ell}|D_j|> k$, we redefine~$D_\ell$ as some arbitrary subset of~$D_\ell$ such that~$\sum_{j=1}^{\ell}|D_j|= k$.
			Let~$Q:=\bigcup_{j=1}^\ell D_j$.
			Now, branch on~$Q$.
		\end{enumerate}	
	\end{branching}

	Note that the non-existence of~$\clo(w)$ implies that the cluster of the solution which contains~$w$, contains all unscaled disks. 
	Consequently, at least one of~$\far(w)$ and~$\clo(w)$ exists, because otherwise our instance would have at most $k$~disks (which we assume is not the case).
	We use the following notation:
	If~$\clo(w)$ exists, we set~$d_{\clo(w)}:=\Dist{w-\clo(w)}$, and otherwise, if~$\clo(w)$ does not exist, we set~$d_{\clo(w)}=\infty$.
	 
	After determining~$\far(w)$ and~$\clo(w)$, we update our data structures as follows:
	We set~$\mathcal{T}_{i+1}:= \mathcal{T}_i\cup\{w\}$, and~$F_{i+1}:=F_i\cup\{\far(w), \clo(w)\}$.
	Let~$X=\emptyset$.
	To obtain~$H_{i+1}$ and~$N_{i+1}$ we do the following.
	
	\begin{drule}
		\label{rule-compute-H-i+1}
		To obtain~$H_{i+1}$, for each disk~$x\in \mathcal{S}\setminus (\mathcal{T}_{i+1}\cup N_i)$, we recolor~$xw$ as follows:
		
		\textbf{(1)} if $\Dist{w-x}\le d_{\far(w)}$, then we make~$xw$ \blue{} in~$H_{i+1}$, 
		
		\textbf{(2)} if $\Dist{w-x}\ge d_{\clo(w)}$, then we make~$xw$ \red{} in~$H_{i+1}$, and 
		
		\textbf{(3)} otherwise, we make~$xw$ \green{} in~$H_{i+1}$ and we add~$x$ to~$X$.\\
		
		Afterwards, we perform the following steps.
		
		\textbf{(a)}  We check if~$X\cap F_{i+1}\ne\emptyset$.
		If yes, we abort this branch. 
		
		\textbf{(b)}  We next set $N_{i+1}:=(N_i\setminus\{w\})\cup X$.
		
		\textbf{(c)}  Finally, we recolor every edge~$p_1p_2$ of~$H_i$ with~$p_1,p_2\in \mathcal{T}_{i+1}\cup N_{i+1}$ as \green{} in~$H_{i+1}$.
	\end{drule}	
	
	The last step (Step~(c)) in \Cref{rule-compute-H-i+1} is done to ensure that the status of each vertex-pair where both vertices are from~$\mathcal{T}_{i+1}\cup N_{i+1}$ is (still) undefined (this will be settled in Phase~2).

	\subparagraph*{Algorithmic Description of Phase~2.}
	If $H_i$ does not contain any colorful~$P_3$ anymore, $N_i=\emptyset$, and the algorithm did not output no yet, we next settle the status of all \green{} edges of~$H_i$, that is, all undefined edges of~$H_i$.
	From now on we use~$H$ as a shorthand for~$H_i$, and analogously~$\mathcal{T}, N, F$ instead of~$\mathcal{T}_i, N_i, F_i$, where $i$ is the index of the last iteration.
	Observe that since~$N_i=\emptyset$, a vertex-pair~$uw$ is \green{} in~$H$ if and only if~$u$ and~$w$ are scaled disks.
	We let $B(w)$ denote the set of \emph{\blue~neighbors}~$\{u\in N_H(w): uw \text{ is \blue{} in } H\}$ of~$w$ in~$V(H)$. %
	Let~$X$ be the subset of all scaled disks with no \blue{} neighbors, that is, $B(x)=\emptyset$ for each~$x\in X$.
	Now, consider a pair~$u,w$ of scaled disks such that the set of \blue-neighbors of at least one of them, say~$u$, is not empty, and thus $B(w)=\emptyset$ or~$B(w)\ne\emptyset$ are possible.
	This implies that $u \notin X$.
	The case that $B(u) = B(w) = \emptyset$, i.e., both have no blue neighbors, is handled later.
	We employ the following	 rule once before any edge of~$H$ is re-colored.

	\begin{drule}
		\label{rule-phase-2}
		Let~$u\in \mathcal{T}\setminus X$, that is, $B(u)\ne\emptyset$, and~$w\in \mathcal{T}$.
		We do the following.
		
		\textbf{Case 1:~$B(u)=B(w)$.} Then, $uw$ is re-colored \blue.
		
		\textbf{Case 2:~$B(u)\ne B(w)$ and~$B(u)\cap B(w)=\emptyset$.} Then, $uw$ is re-colored \red.
		
		\textbf{Case 3:~$B(u)\ne B(w)$ and~$B(u)\cap B(w)\ne\emptyset$.} In \Cref{claim-handling-green-correct}, we argue that this case is not possible.
	\end{drule}
	
	It remains to handle the vertices in~$X$, the set of scaled disks for which~$B(x)=\emptyset$. %
	In other words, in the solution, each disk of~$X$ can only be in clusters with other scaled disks.
	We guess a clustering~$\mathcal{C}$ (with at most $k$~clusters) of the disks of~$X$.
	For each pair~$u,w\in X$, if~$u,w$ are in the same cluster of~$\mathcal{C}$ we recolor~$uw$ as \blue, and otherwise, if~$u,w$ are not in the same cluster of~$\mathcal{C}$ we recolor~$uw$ \red.
	
	After this procedure ends, $H$ does not contain any colorful~$P_3$ anymore and also no \green~edge.
	More precisely, $H$ corresponds to a cluster graph (edges correspond to \blue{} edges of~$H$ and non-edges correspond to \red{} edges of~$H$).
	We now invoke the LP behind \Cref{sec:frameworks} to check whether we can find suitable radii for the set~$\mathcal{T}$ of scaled disks which yield the desired cluster graph corresponding to~$H$.
	If any of our branches yields a solution, we output yes, and otherwise, we abort this branch.
\end{statelater}

\onlyLong{
	\subsection{Combining Both Phases: Correctness and Running Time}
	\label{sec:cluster-fpt-correctness}
	\Cref{sec:cluster-fpt-algorithm} provides an algorithmic description of both main phases of our algorithm.
	We proceed by showing their correctness, i.e., that our algorithm reports that \instance is a yes-instance if and only if it admits a solution $r$.
	
}

\begin{statelater}{clusterFPTCorrectness}
	\begin{lemma}
		\label{lem:cluster-fpt-correctness}
		Our algorithm outputs yes if and only if \instance is a yes-instance.
	\end{lemma}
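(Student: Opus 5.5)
Soundness is the easy direction. Our algorithm outputs yes only when, in some branch, the LP $\mathcal{P}(\instance')$ for the final $\mathcal{T}$ and the fully colored graph $H$ (blue edges as edges, red as non-edges) has a solution with $\varepsilon>0$. By \Cref{lem:linear-program-correctness} this yields a radius assignment $r$ with $G(\mathcal{S},r)=H$ and $\rmin\le r(p)\le\rmax$ for $p\in\mathcal{T}$. Each branch is aborted as soon as $|\mathcal{T}_i\cup N_i|>k$, so $|\mathcal{T}|\le k$. Before the LP is invoked we check that $H$ is a cluster graph. Hence $r$ is a solution.

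For completeness, fix a solution $r^*$ with $\mathcal{T}^*=\mathcal{T}(r^*)$ and let $H^*=G(\mathcal{S},r^*)$. We show by induction on $i$ that some branch maintains the following invariant. First, $\mathcal{T}_i\cup N_i\subseteq\mathcal{T}^*$ and $F_i\cap\mathcal{T}^*=\emptyset$. Second, every blue or red pair of $H_i$ agrees with $H^*$. Third, for each $p\in\mathcal{T}_i$, the chosen $\far(p)$ and $\clo(p)$ are the furthest unscaled disk in $p$'s cluster of $H^*$ and the closest unscaled disk outside it (or $\nil$).

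The step works as follows. If $N_i\neq\emptyset$, the picked $w$ lies in $\mathcal{T}^*$ by the invariant. Otherwise, a colorful $P_3$ $Y$ of $H_i$ agrees with $H^*$ on its blue and red pairs, so it cannot survive in the cluster graph $H^*$. Hence some vertex of $Y$ lies in $\mathcal{T}^*$, and since it is not in $F_i$ it is a member of $Z$.

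Next we show that \Cref{branching-for-far(w),branching-for-clo(w)} contain the true values. For $\far(w)$, suppose $\far(w)\notin P$ exists. Then it lies in a cluster $C$ of $G'$, and the whole unscaled part of $C$ joins $w$'s cluster in $H^*$. Since $\far(w)$ is the furthest unscaled such disk and at most $k$ disks of $C$ are scaled, it is among the $k$ furthest disks of $C$ from $w$. It remains to show $C\in\mathcal{Y}$. The argument goes through the closest disks of $C$: if $C$ is not among the $k$ closest clusters, then each of the at least $k$ closer clusters must be separated from $w$'s cluster or merged into it. Using the budget together with the monotonicity of neighborhoods in the radius, one of these closer clusters contains an unscaled disk $q$ with $\Dist{w-q}<\dist_{G'}(C,w)$. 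I would need a geometric argument that then either $q$ is in $w$'s cluster, contradicting the clustering structure via an induced $P_3$ with unscaled disks, or $q$ is not, which forces the nearest disks of $C$ to be adjacent to $w$ while $q$ is not. This is the step I expect to be the main obstacle. It needs care with the monotonicity: every unscaled disk within distance $\Dist{w-\far(w)}$ is adjacent to $w$. Tie-breaking inside $\dist_{G'}$ also needs attention.

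For $\clo(w)$, every unscaled disk $q$ with $d_{\far(w)}<\Dist{q-w}<d_{\clo(w)}$ cannot exist, because it would be non-adjacent to $w$ yet closer than $\clo(w)$. So all disks strictly between the two distances are scaled, at most $k$ of them. Hence $\clo(w)$ is among the first $k$ disks in the sorted order, i.e.\ in $Q$, with the ties in $D_\ell$ handled by noting that any tied unscaled disk serves equally well. \Cref{rule-compute-H-i+1} then colors $xw$ blue exactly for unscaled disks that are at least as close as $\far(w)$, and these are adjacent in $H^*$ by monotonicity. It colors red those at least as far as $\clo(w)$; these are not in $w$'s cluster, because any such disk in $w$'s cluster would be a further unscaled disk, contradicting the choice of $\far(w)$. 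The set $X$ consists of scaled disks, so $X\cap F_{i+1}=\emptyset$ and the invariant is preserved. Since each step adds a disk of $\mathcal{T}^*$, Phase~1 terminates within $k$ steps.

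For Phase~2, the invariant gives $\mathcal{T}=\mathcal{T}^*$. I would prove the claim for \Cref{rule-phase-2} as follows. For $u$ with $B(u)\neq\emptyset$, $B(u)$ equals the unscaled part of $u$'s cluster in $H^*$, because all unscaled pairs incident to $u$ are colored correctly. So $B(u)=B(w)$ holds iff $u$ and $w$ share a cluster, and two distinct clusters have disjoint $B$-sets, which rules out Case~3. For the disks in $X$, one guessed clustering coincides with $H^*$. Thus the final $H$ equals $H^*$, and \Cref{lem:linear-program-correctness} makes the LP feasible, so the algorithm outputs yes.
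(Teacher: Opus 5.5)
There is a genuine gap at the sentence ``For Phase~2, the invariant gives $\mathcal{T}=\mathcal{T}^*$.'' Your induction only yields $\mathcal{T}\subseteq\mathcal{T}^*$. Phase~1 stops as soon as $N=\emptyset$ and $H$ has no colorful $P_3$, and nothing in your argument links this stopping condition to having found all of $\mathcal{T}^*$. For an arbitrary solution the equality is in fact false: if $G(\mathcal{S},\Unit)$ is already a cluster graph and $r^*$ rescales some far-away isolated disk, Phase~1 stops at once with $\mathcal{T}=\emptyset$.

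Relatedly, your second invariant is false as stated. Already $H_0=G(\mathcal{S},\Unit)$ disagrees with $H^*$ on pairs at scaled disks, and Rule~1 colors a pair $xw$ with a not-yet-detected $x\in\mathcal{T}^*$ as if $x$ had radius $1$. If the invariant held, no colorful $P_3$ could ever exist. What actually holds, and what your colorful-$P_3$ step really uses, is agreement with the graph $\widehat G_i$ in which only the disks of $\mathcal{T}_i$ receive their $r^*$-radii. With this corrected invariant, the blue and red pairs of the final $H$ describe $\widehat G$ (only $\mathcal{T}$ rescaled), not $H^*$. So your Phase~2 identification of $B(u)$ with the unscaled part of $u$'s cluster in $H^*$ is unsupported.

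The missing step is to show that $\widehat G$ is a cluster graph. After that, your Phase~2 argument goes through verbatim with $\widehat G$ in place of $H^*$. Alternatively, choose $r^*$ with $|\mathcal{T}(r^*)|$ minimum, which then forces $\mathcal{T}=\mathcal{T}^*$.

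To prove it, take an induced $P_3$ $U$ of $\widehat G$. If $|U\cap\mathcal{T}|\le 1$, it is a colorful $P_3$ of $H$. If $U\subseteq\mathcal{T}$, it is a $P_3$ of $H^*$. The real case is $U=\{u,w,x\}$ with $u,w\in\mathcal{T}$ and $x\notin\mathcal{T}$. Here the pair $uw$ is green, so the colorful-$P_3$ test cannot see $U$, and you need the far-values.

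\emph{If $uw$ is an edge}, say $wx$ is blue and $ux$ is red. Then $\far(w)$ exists, and $w\far(w)$ and $u\far(w)$ are blue because $u,w$ share a cluster of $H^*$. If $x\far(w)$ is red, $\{x,w,\far(w)\}$ is colorful; if it is blue, $\{u,x,\far(w)\}$ is colorful.

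\emph{If $uw$ is a non-edge}, then $\far(u)$ exists. Either $\{u,x,\far(u)\}$ or $\{w,x,\far(u)\}$ is colorful, using that $w\far(u)$ is red.

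This is the paper's ``completeness of $\mathcal{T}$ at the end of Phase~1'' argument, and your proposal has no substitute for it.

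By contrast, the step $C\in\mathcal{Y}$ that you leave open is easy. Take any cluster $C'\neq C$ of $G'$ ordered before $C$; its nearest disk $c'\neq w$ must be scaled. Otherwise $\Dist{w-c'}\le\Dist{w-\far(w)}$ makes $c'$ adjacent to $w$ in $H^*$. At the same time, $c'$ and $\far(w)$ are unscaled disks of different clusters of $G'$, hence non-adjacent, giving an induced $P_3$. Together with $w$, this exceeds $k$ scaled disks unless fewer than $k$ clusters precede $C$. Your second alternative, $q$ outside $w$'s cluster, cannot occur.

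Finally, in the preceding step use that $w$ itself is scaled, so at most $k-1$ of the $k$ furthest disks of $C$ are scaled (not $k$). Handle ties by re-choosing $\far(w)$, as the paper does.
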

	\begin{proof}
		Clearly, if our algorithm does not output no, we have a solution:
		The final graph~$H$ after Phase~2 is a cluster graph and we then invoke the LP to compute suitable radii. 
		Therefore, we focus on the backward direction.
		Suppose that the input instance of \Cluster has a solution in which the disks~$\widetilde{\mathcal{T}}$ where~$|\widetilde{\mathcal{T}}|=k$ are scaled and let~$r$ be the corresponding radii assignment.\footnote{Without loss of generality, we can assume~$|\widetilde{\mathcal{T}}|=k$ instead of~$|\widetilde{\mathcal{T}}|\le k$ here.} 
		Moreover, let~$\widetilde{G}$ be the resulting cluster graph.
		We now argue that our algorithm is able to detect this solution.
		We start by showing that the output of Phase~1 of our algorithm is correct, that is, we show that after the end of Phase~1 of our algorithm we have $(1)$~$\mathcal{T}=\widetilde{\mathcal{T}}$, where~$\mathcal{T}$ is the final set of scaled disks determined by our algorithm and $(2)$~the final auxiliary complete edge-colored graph~\emph{$H$ is equivalent to~$\widetilde{G}$ up to the \green~vertex-pairs} (\emph{green-equivalent}), that is, if edge~$uw\in E(H)$ is \blue/\red{}, then~$uw$ is an edge/a non-edge in~$\widetilde{G}$.
		Recall that our algorithm computed~$\mathcal{T}$ iteratively via a sequence~$\emptyset=\mathcal{T}_0, \mathcal{T}_1, \ldots, \mathcal{T}_k=\mathcal{T}$.
		More precisely, we argue inductively that
		~\textbf{(a)}~$\mathcal{T}_i\subseteq \widetilde{\mathcal{T}}$ holds, 
		\textbf{(b)} that also~$N_i\subseteq \widetilde{\mathcal{T}}$, 
		\textbf{(c)} that~$F_i\cap\widetilde{\mathcal{T}}=\emptyset$, 
		\textbf{(d)} that~$H_i$ is green-equivalent to~$\widetilde{G}_i$, and
		\textbf{(e)} that the corresponding guesses for~$\far(w)$ and~$\clo(w)$ for each disk in~$\mathcal{T}_i$ are correct.
		Here, $\widetilde{G}_i$ is the graph obtained from the input graph~$G$ after scaling the disks of~$\mathcal{T}_i$ with their corresponding radii in the solution. 
		Afterwards, we show that~$\mathcal{T}_k=\mathcal{T}=\widetilde{\mathcal{T}}$ and that the final auxiliary graph~$H_k=H$ is green-equivalent to~$\widetilde{G}_k=\widetilde{G}$.
		Moreover, also the final set~$N_k$ is empty, and also~$F_k\cap\widetilde{\mathcal{T}}=\emptyset$, as we show.
		
		\proofsubparagraph*{Correctness of properties \textbf{(a)}--\textbf{(e)} of Phase~1 via induction.}
		Initially, when~$i=0$, we have~$\mathcal{T}_0=\emptyset$ and~$G=H_0=\widetilde{G}_0$.
		Hence, our claim holds.
		Thus, in the following, we assume that the statement is true for all values of~$j\in[0,i-1]$, that is, $\mathcal{T}_j\subseteq \widetilde{\mathcal{T}}$, $N_j\subseteq\widetilde{\mathcal{T}}$, $F_j\cap\widetilde{\mathcal{T}}=\emptyset$, and~$H_j$ is green-equivalent to~$\widetilde{G}_j$, and that the computation of~$\far(w)$ and~$\clo(w)$ for each~$w\in \mathcal{T}_j$ was correct.
		We now argue that the same is true for the next index~$i$.
		
		\proofsubparagraph*{Correctness of \textbf{(a)}.}
		Initially, we argue that the choice of the newly disk~$w$ determined by our algorithm is correct, that is, that property~$(a)$~$\mathcal{T}_i\subseteq\widetilde{\mathcal{T}}$ is fulfilled. 
		There are two possibilities for our algorithm for choosing~$w$:
		First, by selecting a disk of~$N_{i-1}$, or second by branching on the three disks of a colorful~$P_3$ of~$H_{i-1}$ (not all three disks have to be suitable candidates since some are already scaled or guessed as unscaled).
		In the first case, that is, $w\in N_{i-1}$, since~$N_{i-1}\subseteq\widetilde{\mathcal{T}}$, we have~$w\in\widetilde{\mathcal{T}}$ and thus also~$\mathcal{T}_i\subseteq\widetilde{\mathcal{T}}$.
		Otherwise, if~$N_{i-1}=\emptyset$ our algorithm searches for a colorful~$P_3$ in~$H_{i-1}$.
		If there is none, the algorithm terminates Phase~1; we will argue below that in this case~$\mathcal{T}_{i-1}=\widetilde{\mathcal{T}}$ needs to hold.
		Otherwise, assume that~$H_{i-1}$ contains a colorful~$P_3$ with vertex set~$W$.
		Since~$H_{i-1}$ and~$\widetilde{G}_{i-1}$ are \green-equivalent, every colorful~$P_3$ of~$H_{i-1}$ corresponds to a~$P_3$ in~$\widetilde{G}_{i-1}$, and since the solution~$\widetilde{G}$ is a cluster graph, that is, contains no~$P_3$, we observe that~$W$ contains at least one disk~$w\in\widetilde{\mathcal{T}}\setminus \mathcal{T}_{i-1}$.
		According to our assumptions we have~$N_{i-1}=\emptyset$ and~$F_{i-1}\cap\widetilde{\mathcal{T}}=\emptyset$.
		Consequently, $w\notin N_{i-1}\cup F_{i-1}$ and thus at least one of the at most three branches on~$W$ leads to~$w$.
		Hence, in both cases a disk~$w\in\widetilde{\mathcal{T}}\setminus \mathcal{T}_{i-1}$ is determined and thus property \textbf{(a)}~$\mathcal{T}_i\subseteq \widetilde{\mathcal{T}}$ is verified. 
		In the following, we denote~$w$ as the unique disk of~$\mathcal{T}_i\setminus \mathcal{T}_{i-1}$.
		
		\proofsubparagraph*{Correctness of remaining properties \textbf{(b)}--\textbf{(e)}.}
		To show the remaining properties~$(b)$-$(e)$, we first argue that the computation of~$\far(w)$ and~$\clo(w)$ is correct, that is, $\far(w)$ is the furthest unscaled disk which is in the same cluster as~$w$ in the solution and~$\clo(w)$ is the closest unscaled disk which is not in the same cluster as~$w$ in the solution, which implies that \textbf{(e)} is correct.
		Recall that at most one of~$\far(w)$ and~$\clo(w)$ cannot exists, because otherwise our instance would have at most $k$~disks.
		Also, recall that the non-existence of~$\far(w)$ implies that the cluster of the solution which contains~$w$ does only contain scaled disks, and the non-existence of~$\clo(w)$ implies that the cluster of the solution which contains~$w$, contains all unscaled disks. 
		
		\proofsubparagraph*{Correctness of \textbf{(e)}, that is, correctness of computation of~$\boldsymbol{\far(w)}$ and~$\boldsymbol{\clo(w)}$.}
		We begin by showing that the computation of~$\far(w)$ is correct.
		Recall that~$G'$ is the cluster graph obtained from the input graph~$G$ after the removal of the initial maximal $P_3$-packing~$\mathcal{P}$.
		Note that for the correctness of \Cref{claim-correctness-far-w} we only rely on the cluster graph~$G'$ and the initial $P_3$-packing~$\mathcal{P}$ with vertex set~$P$, that is, the sets~$\mathcal{T}_{i-1}, N_{i-1}, F_{i-1}$, and the auxiliary graph~$H_{i-1}$ are not important.

		\begin{claim}
			\label{claim-correctness-far-w}
			The computation of~$\far(w)$ is correct.
		\end{claim}

		\begin{claimproof}
			Clearly, if~$\far(w)$ does not exist or if~$\far(w)\in P$, then one of the guesses in \textbf{1.} or \textbf{2.} of our algorithm in \Cref{branching-for-far(w)} yields the correct computation.
			Hence, in the following it is sufficient to focus on the case that~$\far(w)$ is contained in a %
			cluster of~$G'$.
			
			As we scale the disk $w$, our budget allows us to scale at most $k - 1$ other disks.
			Thus, if $\far(w)$ belongs to a cluster of $G'$, this cluster must be among the $k$ closest clusters.
			Note that this cluster is among the $k$ closest clusters and not among the $k-1$ closest clusters since $w$ is a scaled disk too.
			Otherwise, we would need to scale at least $k$ other disks to merge all the clusters.
			Let $Q$ denote the~$k$ furthest disks to $w$ in the cluster that contains $\far(w)$.
			Note that we consider $Q$ in our branching.
			If $\far(w) \in Q$, we are done, so assume $\far(w) \notin Q$.
			We now show that we can find some~$q$ which can take the role of~$\far(w)$ without losing correctness.
			As $\Size{Q} = k$, there exists at least one unscaled disk in $Q$.
			Let $q \in Q$ be the unscaled disk in $Q$ that is furthest away from $w$.
			By the definition of $\far(w)$, we have $\Dist{\far(w) - w} \geq \Dist{q - w}$.
			Since $\far(w)$ is not scaled and $q$ is not scaled, we have that $w,q,\far(w)$ are in the same cluster.
			Observe that we cannot have $\Dist{\far(w) - w} > \Dist{q - w}$ by the definition of $Q$ and our choice of $q \in Q$.
			Hence, $\Dist{\far(w) - w} = \Dist{q - w}$ must hold, and we can update $\far(w) \gets q$ while maintaining the defining properties of $\far(w)$.
			Consequently, the claim is proven.
		\end{claimproof}	
		
		Now, we argue that also the computation of~$\clo(w)$ is correct.
		
		\begin{claim}
			\label{claim-correctness-clo-w}
			The computation of~$\clo(w)$ is correct.
		\end{claim}
		\begin{claimproof}
			Recall that after the computation of~$\far(w)$ by our algorithm, $\clo(w)$ is computed as described in \Cref{branching-for-clo(w)}:
			(1) either~$\clo(w)$ does not exists (\textbf{1.} of \Cref{branching-for-clo(w)}), or
			(2) $\clo(w) \in Q =\bigcup_{j=1}^\ell D_j$ (\textbf{2.} of \Cref{branching-for-clo(w)}).
			Now, we argue that~$\clo(w)$ either does not exist, is a disk contained in~$Q$, or we can redefine~$\clo(w)$ as some disk in~$Q$.
			Suppose that this is not the case.
			Due to budget constraints, there exists at least one disk~$q\in Q$ which is not scaled.
			By definition of~$Q$, we have~$\Dist{\clo(w)-w}\le\Dist{q-w}$.
			Note that~$\Dist{\clo(w)-w}<\Dist{q-w}$ is not possible due to the definition of~$Q$.
			Hence, $\Dist{\clo(w)-w}=\Dist{q-w}$.
			Now, we can redefine~$\clo(w)$ as~$q$ since~$q$ is also not scaled in the solution.
			Thus, the computation of~$\clo(w)$ is correct and hence the claim is proven.
		\end{claimproof}
		
		\proofsubparagraph*{Correct computation of~$\boldsymbol{\far(w)}$ and~$\boldsymbol{\clo(w)}$ yields correctness of \textbf{(b)}--\textbf{(d)}.}
		Now, with the correct computation of~$\far(w)$ and~$\clo(w)$ at hand, we can finally verify the correctness of \textbf{(b)}--\textbf{(d)}.
		Since~$\far(w)$ and~$\clo(w)$ are in particular unscaled disks in the solution, setting~$F_i:=F_{i-1}\cup\{\far(w), \clo(w)\}$ is correct, and thus \textbf{(c)} is shown.

		Moreover, recall that set~$N_i$ and graph~$H_i$ is computed as described in \Cref{rule-compute-H-i+1}. 
		Note that~$X\cap F_{i}\ne\emptyset$ is impossible due to the correct computation of~$\far(w), \clo(w), F_i$, and the fact that there exists a solution.
		Consequently, every disk for which \textbf{(3)} of \Cref{rule-compute-H-i+1} applies needs to be rescaled, that is, is already rescaled (contained in~$\mathcal{T}_{i-1}$), or needs to be rescaled later.
		Thus, property \textbf{(b)}, that is, the computation of the set~$N_i$, is correct. 
		
		To see that also~$H_{i}$ is correctly computed, that is, $H_i$ is green-equivalent to~$\widetilde{G}_i$, observe that every vertex-pair with a different status (edge instead of non-edge or vice versa, or undefined edge instead of edge/non-edge) in~$\widetilde{G}_i$ compared to~$\widetilde{G}_{i-1}$ contains the currently scaled disk~$w$ or a vertex of~$N_i\setminus N_{i-1}$.
		Recall that~$H_{i-1}$ is green-equivalent to~$\widetilde{G}_{i-1}$ and to obtain~$H_i$ we do two things: 
		First, some edges incident to~$w$ receive a new color, and second, some edges not-incident to~$w$ are re-colored as \green.
		Hence, it is sufficient to argue that the changes made in the first case are correct.
		Observe that the correct computation of~$\far(w)$ implies that for any so-far unscaled disk~$x$, that is, $x\in\mathcal{S}\setminus \mathcal{T}_{i-1}$ such that~$\Dist{w-x}\le \Dist{w-\far(w)}$, we have that~$xw$ is an edge in~$\widetilde{G}_{i-1}$.
		Consequently, case \textbf{(1)} of \Cref{rule-compute-H-i+1} of recoloring such edge for~$x\in\mathcal{S}\setminus(\mathcal{T}_i\cup N_{i-1})$ is correct.
		Analogously, we can argue the correctness of case~\textbf{(2)} of \Cref{rule-compute-H-i+1}.
		All remaining edges incident to~$w$ are re-colored as \green{} which corresponds to an undefined state.
		Consequently, the computation of~$H_i$ is correct and thus \textbf{(d)} is verified.
		Thus, the inductive step is proven. 	
		
		\proofsubparagraph*{Completeness of set~$\boldsymbol{\mathcal{T}}$ at end of Phase~1.}
		Now, we have to argue that at the end of Phase~1 of our algorithm we have~$\mathcal{T}=\widetilde{\mathcal{T}}$ (again, by~$\mathcal{T}, N, F$, and~$H$ we denote our final data structures).
		Moreover, let~$\widehat{G}$ be the graph after all disks of~$\mathcal{T}$ are scaled with their correct radius according to the solution.
		Recall that our inductive proof shows $\mathcal{T} \subseteq \widetilde{\mathcal{T}}$.
		Thus, we focus on $\widetilde{\mathcal{T}} \subseteq \mathcal{T}$.
		
		Suppose towards a contradiction that this is not the case, that is, there exists at least one disk~$x\in\widetilde{\mathcal{T}}\setminus \mathcal{T}$, that is, $\mathcal{T}\subsetneq \widetilde{\mathcal{T}}$.
		This implies that~$\widehat{G}\ne\widetilde{G}$, that is, the graph~$\widehat{G}$ obtained by~$G$ after all disks of~$\mathcal{T}$ received their correct radii according to the solution is not the same as the graph~$\widetilde{G}$ corresponding to the solution.
		Also, this implies that~$\widehat{G}$ contains a~$P_3$ with vertex set~$U$ including disk~$x$, that is, $x\in U$. 
		We now argue that either~$H$ and~$\widehat{G}$ are \emph{not} green-equivalent, violating the correctness of the above induction, or that~$H$ contains a colorful~$P_3$, a violation of the termination conditions of Phase~1 of our algorithm.
		Clearly, $U\subseteq \mathcal{T}$ is not possible since all disks of~$\mathcal{T}$ already received their correct radius.
		Moreover, $|U\cap \mathcal{T}|\le 1$ is also not possible since~$|U\cap \mathcal{T}|\le 1$ implies that no edge with both endpoints in~$U$ can be \green{} in~$H$ and thus the~$P_3$ in~$\widehat{G}$ with vertex set~$U$ would be a colorful~$P_3$ in~$H$, a contradiction to the termination of Phase~1 of our algorithm.
		Hence, we have~$|U\cap \mathcal{T}|=2$. 
		
		We let~$U=\{u,w,x\}$.
		By our assumptions, we have~$u,w\in \mathcal{T}\subseteq\widetilde{\mathcal{T}}$ and thus both~$u$ and~$w$ already received their correct radius in~$\widehat{G}$. 
		Now, we distinguish the two cases whether the vertex-pair~$uw$ is \textbf{(a)} an edge in~$\widehat{G}$, or \textbf{(b)} a non-edge in~$\widehat{G}$. 
		First, we focus on case \textbf{(a)}, that is, $uw$ is an edge in~$\widehat{G}$.
		Since both~$u$ and~$w$ already received their correct radius in~$\widehat{G}$, there is a unique cluster~$C$ in the solution which contains both~$u$ and~$w$. 
		Since~$U$ induces a~$P_3$ in~$\widehat{G}$, we assume without loss of generality that~$wx$ is an edge in~$\widehat{G}$ and that~$ux$ is a non-edge in~$\widehat{G}$. 
		Recall that since~$x$ is currently unscaled, any edge incident with~$x$ in~$H$ is either \red{} or \blue, but \green{} is not possible since this is used only for all vertex-pairs where both endpoints are from~$\mathcal{T}$ (since due to termination of Phase~1 we have~$N=\emptyset$). 
		Moreover, due to the green-equivalence of~$H$ and~$\widehat{G}$, we conclude that~$xw$ is \blue{} in~$H$ and that~$xu$ is \red{} in~$H$.

		We now argue that~$C$ does \emph{not} only contain scaled disks.
		Assume towards a contradiction that this is not the case.  
		This implies that both~$\far(u)$ and~$\far(w)$ do not exist.
		Now, consider the situation of Phase~1 of our algorithm when disk~$w$ is scaled: only if~$\far(w)$ exists, all unscaled disks with distance at most~$\Dist{w-\far(w)}$ are re-colored as \blue{} (Case~\textbf{(1)} of \Cref{rule-compute-H-i+1}).
		Hence, $xw$ cannot have color \blue{} at that time.
		Since~$x$ is not scaled, later in Phase~1 we also do not re-color~$xw$ again.
		Thus, $xw$ cannot be \blue{} in~$H$, a contradiction to the above observation that~$xw$ has to be \blue{} in~$H$.
		
		Hence, it is safe to assume that~$C$ contains at least one unscaled disk.
		This implies that both~$\far(w)$ and~$\far(u)$ exist. 
		By definition of~$\far(w)$, the vertex-pair~$w$ and~$\far(w)$ has to be an edge in~$\widehat{G}$ and thus is \blue{} in~$H$ due to green-equivalence of~$H$ and~$\widehat{G}$. 
		Moreover, since~$\far(w)$ is not scaled in the solution, since~$u,w$ already received their correct radii, and since~$u,w,\far(w)$ are contained in the same cluster of the solution, also the edge between~$\far(w)$ and~$u$  is \blue{} in~$H$.
		Moreover, since both~$\far(w)$ and~$x$ are currently not scaled, the edge between~$x$ and~$\far(w)$ cannot be \green{} in~$H$.
		Also, the edge between~$x$ and~$\far(w)$ cannot be \red{} in~$H$ since then~$x,w,\far(w)$ would form a colorful~$P_3$ in~$H$, a contradiction to the termination of Phase~1.
		Thus, the edge between~$x$ and~$\far(w)$ is \blue{} in~$H$ and due to green-equivalence of~$H$ and~$\widehat{G}$, it is also an edge in~$\widehat{G}$.
		But now, $u,x,\far(w)$ yield a colorful~$P_3$ in~$H$, a contradiction to the termination of Phase~1 of our algorithm.
		
		Second, we consider the case \textbf{(b)}, that is, $uw$ is a non-edge in~$\widehat{G}$, and is thus \red{} in~$H$ due to green-equivalence of~$H$ and~$\widetilde{G}$.
		This implies that~$wx$ and~$ux$ are edges in~$\widehat{G}$ and thus \blue~edges in~$H$.
		Moreover, note that~$\far(u)$ has to exist: otherwise we can argue analogously as above that~$ux$ cannot be \blue{} in~$H$, violating the green-equivalence of~$H$ and~$\widehat{G}$.
		By definition of~$\far(u)$, the edge between~$u$ and~$\far(u)$ has to be \blue{} in~$H$.
		Thus, the edge between~$x$ and~$\far(u)$ also needs to be \blue{} since otherwise~$u,x,\far(u)$ would form a colorful~$P_3$ in~$H$, violating the termination of Phase~1 of our algorithm.
		Moreover, since both~$u$ and~$w$ already received their correct radius in~$\widehat{G}$, the cluster~$C_u$ containing~$u$ in the solution is a different cluster than~$C_w$ which contains~$w$ in the solution. 
		Consequently, since~$\far(u)\in C_u$ we have~$\far(u)\notin C_w$ and thus the edge between~$\far(u)$ and~$w$ has to be \red{} in~$H$.
		But now~$x,w,\far(u)$ form a colorful~$P_3$ in~$H$, again a contradiction to the termination of Phase~1 of our algorithm. 
		Hence, all cases led to a contradiction and thus at the end of Phase~1 of our algorithm, we have~$\mathcal{T}=\widetilde{\mathcal{T}}$, $N=\emptyset$, $F\cap \widetilde{\mathcal{T}}=\emptyset$, and that~$H$ and~$\widehat{G}$ are green-equivalent.
		
		\proofsubparagraph*{Correctness of Phase~2.}
		Thus, it remains to argue that Phase~2 is also correct. 
		Recall that our algorithm initially computes the set~$B(x)$ of \blue~neighbors of~$x$ in~$H$ and then in a first step~$(1)$, resolves all \green~edges of~$H$ where at least one endpoint has a non-empty set of \blue~neighbors  and in a second step~$(2)$, all remaining \green~edges (implying both endpoints have no \blue~neighbors in~$H$) are resolved.
		
		We begin by showing the correctness of~$(1)$.
		
		\begin{claim}
			\label{claim-handling-green-correct}
			\Cref{rule-phase-2} is correct.
		\end{claim}
		\begin{claimproof}
			We individually verify that each of the three cases is handled correctly.
			
			\textbf{Case 1:} Assume towards a contradiction that in the optimal solution~$uw$ is a non-edge, that is, has to be \red{} in~$H$ due to the green-equivalence of~$H$ and~$\widetilde{G}$.
			By definition, the vertex~$\far(u)$ is the furthest non-scaled vertex from~$u$ which is in the same cluster as~$u$.
			This implies~$\far(u)\in B(u)$.
			Since~$B(u)=B(w)$, we also have~$\far(u)\in B(w)$.
			Consequently, in the optimal solution~$\widetilde{G}$, the vertex-pairs~$u\far(u)$ and~$w\far(u)$ are edges, that is, correspond to \blue~edges in~$H$.
			Hence, the graph~$\widetilde{G}$ corresponding to the solution contains the $P_3$ with disks~$u,w,\far(u)$, a contradiction to the fact that the solution is a cluster graph.
			Thus, re-coloring~$uw$ as \blue{} is correct. 
			
			\textbf{Case 2:} Assume towards a contradiction that in the optimal solution~$uw$ is an edge, that is, has to be \blue{} in~$H$ due to the green-equivalence of~$H$ and~$\widetilde{G}$.
			According to our assumption, $B(u)\ne\emptyset$ and thus~$\far(u)$ exists since~$B(u)$ only consists of unscaled disks.
			According to the definition of~$\far(u)$, we have~$\far(u)\in B(u)$ and thus~$u\far(u)$ is \blue{} in~$H$, and thus also in the graph~$\widetilde{G}$ corresponding to the solution.
			Recall that~$\far(u)$ is an unscaled disk.
			Since~$B(u)\cap B(w)=\emptyset$, we thus observe that~$w\far(u)$ is \red{} in~$H$, and thus the vertex-pair~$w$ and~$\far(u)$ is a non-edge in~$\widetilde{G}$  due to the green-equivalence of~$H$ and~$\widetilde{G}$.
			Hence, $u,w,\far(u)$ form a~$P_3$ in~$\widetilde{G}$, a contradiction to the fact that~$\widetilde{G}$ is a cluster graph.
			Thus, re-coloring~$uw$ \red{} is correct.

			\textbf{Case 3:} Since~$B_{uw}:=B(u)\cap B(w)\ne \emptyset$, there exists at least one~$x\in B(u)\cap B(w)$.
			Moreover, since~$B(u)\ne B(w)$, either~$B(u)\setminus B_{uw}\ne\emptyset$ or~$B(w)\setminus B_{uw}\ne\emptyset$.
			Without loss of generality, assume the first condition is true and let~$z\in B(u)\setminus B_{uw}$.
			By definition of~$x$ and~$z$ we have that~$wx$ and~$zx$ are \blue{} in~$H$ and thus~$wx$ and~$zx$ are edges in~$\widetilde{G}$  due to the green-equivalence of~$H$ and~$\widetilde{G}$.
			Moreover, since~$x,z\in B(u)$, both~$x$ and~$z$ are unscaled disks.
			Also, since~$z\notin B(w)$ we conclude that~$wz$ is \red\ in~$H$, and thus~$wz$ is a non-edge in~$\widetilde{G}$.
			Consequently, $w,x,z$ form a~$P_3$ in~$\widetilde{G}$, a contradiction to the fact that~$\widetilde{G}$ is a cluster graph.%
		\end{claimproof}
		
		Hence, it remains to show that the handling of the remaining \green~edges, that is, the \green~edges between vertices which do not have \blue~neighbors, is correct. 
		Let~$X=\{x\in \mathcal{T}=\widetilde{\mathcal{T}}: B(x)=\emptyset\}$.
		Observe that all \green~edges of~$H$ containing \emph{exactly} one endpoint in~$X$, are already handled above in \textbf{Case~2} of \Cref{rule-phase-2} and correctly re-colored as \red, that is, correspond to non-edges in~$\widetilde{G}$.
		In other words, it remains to show that the \green~edges between vertices with both endpoints in~$X$ are resolved correctly. 
		Recall that in our algorithm, we test all possible clusterings of~$X$.
		Clearly, one of these choices has to be correct, since for each cluster~$C$ containing at least one disk of~$X$ we have~$V(C)\subseteq X$.
		Consequently, after resolving all \green~edges of~$H$, we have that~$H$ and~$\widetilde{G}$ are equivalent (there are no \green~edges anymore).
		Finally, to obtain the correct radii for the scaled disks in~$\mathcal{T}$ we invoke the LP of \Cref{sec:frameworks} and the correctness of this follows from \Cref{prop:radius-feasibility-running-time}.
		Hence, Phase~2 of our algorithm is also correct and thus the entire correctness of our algorithm is verified.
	\end{proof}
\end{statelater}

\onlyLong{We now use \Cref{lem:cluster-fpt-correctness} to show that \Cluster is \FPT\ with respect to $k$. %
	In particular, we establish the following result.}
\begin{restatable}\restateref{thm:cluster-fpt}{theorem}{theoremClusterFPT}
	\label{thm:cluster-fpt}
	\Cluster can be solved in $2^{\OO(k\cdot \log(k))}\cdot n^{\OO(1)}$~time.
\end{restatable}
\begin{prooflater}{ptheoremClusterFPT}
	Let \instance be an instance of \Cluster.
	We have established correctness of our algorithm in \Cref{lem:cluster-fpt-correctness}.
	It remains to show its running time.
	To this end, recall that our algorithm is a branch-and-bound algorithm. 
	In Phase~1, in each step exactly one new disk gets scaled (in fact, we only 'guess' the neighborhood of the scaled disk but not the correct radius) and consequently the depth of the corresponding branching tree is bounded by~$k$.
	Moreover, after Phase~1 is completed our algorithm first invokes a polynomial-time processing to resolve some \green~edges of~$H$ and then uses branching to resolve the remaining \green~edges.
	Afterwards, the resulting LP is solved as described in \Cref{sec:frameworks}.
	According to \Cref{prop:radius-feasibility-running-time}, this requires $2^{\OO(k\cdot \log(k))}\cdot n^{\OO(1)}$~time. 
	For this branching the depth is one.
	Hence, it remains to analyze the branching numbers of both phases. 
	
	\proofsubparagraph*{Analysis of Phase~1.}
	First, we determine a disk~$w$ which gets scaled.
	If~$N_i\ne\emptyset$, then we pick an arbitrary but fixed disk of~$N_i$, and otherwise, if~$N_i=\emptyset$, we search for a colorful~$P_3$ in~$H_i$ and branch among the at most three possibilities to guess~$w$.
	Hence, in at most three branches our algorithm guesses a correct disk~$w$.
	Second, our algorithm determines~$\far(w)$ and~$\clo(w)$.
	We now analyze the number of branches to determine them.

	\begin{claim}
		\label{claim-number-of-choices-furthest}
		The number of suitable choices for~$\far(w)$ is at most%
		~$k^2 + 3k + 1$.
	\end{claim}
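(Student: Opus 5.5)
This is a counting claim, so the plan is to read the bound off \Cref{branching-for-far(w)} by summing the three options. Correctness of restricting to these options is already settled by \Cref{claim-correctness-far-w}; here only the number of candidates has to be bounded.

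\textbf{Option 1.} The case ``$\far(w)$ does not exist'' contributes exactly one branch. This accounts for the additive $1$.

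\textbf{Option 2.} The case $\far(w)\in P$ contributes at most $|P|$ branches. Here $P$ is the vertex set of the maximal $P_3$-packing $\mathcal{P}$. We have already aborted if $|\mathcal{P}|\ge k+1$, so $|\mathcal{P}|\le k$. Each $P_3$ has three vertices, hence $|P|\le 3k$. This accounts for the term $3k$.

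\textbf{Option 3.} The case where $\far(w)$ lies in a cluster of $G'$ is the only place needing care. The set $\mathcal{Y}$ contains at most $k$ clusters, namely the $k$ clusters closest to $w$ under $\dist_{G'}$. For each of them we take the $k$ furthest disks to $w$, or all of its disks if the cluster has fewer than $k$. This gives at most $k\cdot k=k^2$ candidates.

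The main obstacle is making sure the case distinctions in Option 3 do not inflate the count. Ties in the ordering by $\dist_{G'}$, and ties among the furthest disks within a cluster, must be broken to fixed sets of size at most $k$, as the algorithm prescribes. The cluster of $w$ itself in $G'$ (when $w\notin P$) is handled by the definition of $\dist_{G'}$, which excludes $w$. It therefore either counts as one of the $k$ clusters in $\mathcal{Y}$ or is excluded when it is $\{w\}$; in neither case does it add extra candidates.

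Since the three options are exhaustive and each candidate is counted at most once in its option, summing gives at most $1+3k+k^2=k^2+3k+1$ choices, as claimed.
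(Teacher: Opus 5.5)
Your proposal is correct and matches the paper's proof. Both sum the three options of the branching for $\far(w)$: at most $k\cdot k=k^2$ candidates from the $k$ furthest disks in each of the $k$ closest clusters of $G'$, at most $|P|\le 3k$ candidates because $|\mathcal{P}|\le k$, and one branch for the case that $\far(w)$ does not exist.
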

	
	\begin{claimproof}
		In Step~1, we bound the number of possibilities in clusters and in Step~2, we obtain the final bound by also considering the disks~$P$ in the initial $P_3$-packing~$\mathcal{P}$.

		\emph{Step~1:} Recall that for the case that~$\far(w)$ is contained in a cluster, our algorithm considers the branches described in \textbf{3.} of \Cref{branching-for-far(w)}.
		Thus, there are at most $k^2$~choices for~$\far(w)$.

		\emph{Step~2:} Moreover, observe that each disk of~$P$ corresponding to the initial $P_3$-packing~$\mathcal{P}$ is also a valid candidate for~$\far(w)$.
		This yields $3k$~possibilities for~$\far(w)$ (see \textbf{2.} of \Cref{branching-for-far(w)}).
		Finally, note that it is also possible that~$\far(w)$ does not exist, that is, in the solution~$w$ is in a cluster with only scaled disks (see \textbf{1.} of \Cref{branching-for-far(w)}).
		In total this yields at most %
		$k^2 + 3k + 1$~choices for~$\far(w)$.
	\end{claimproof}

	\begin{claim}
		\label{claim-number-of-choices-closest}
		Once~$\far(w)$ is fixed, the number of choices for~$\clo(w)$ is at most~$k+1$.
	\end{claim}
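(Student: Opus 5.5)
The bound follows by reading off the options enumerated in \Cref{branching-for-clo(w)} once $\far(w)$, and hence $d_{\far(w)}$, is fixed. There are exactly two kinds of branches. First, the single branch \textbf{1.}, in which $\clo(w)$ does not exist, contributes one choice. Second, branch \textbf{2.} contributes one choice per element of the set $Q=\bigcup_{j=1}^{\ell} D_j$. So the plan is to show $\Size{Q}\le k$, which gives at most $k+1$ choices in total.

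To bound $\Size{Q}$, I will recall how $Q$ is built. We sort the candidate disks $q\in\mathcal{S}\setminus(\mathcal{T}_i\cup N_i)$ with $\Dist{q-w}>d_{\far(w)}$ by distance and group them into distance classes $D_1,D_2,\ldots$. We take $\ell$ minimal with $\sum_{j=1}^{\ell}\Size{D_j}\ge k$ and truncate $D_\ell$ so that the sum equals exactly $k$. By construction, then, $\Size{Q}\le k$.

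There is one degenerate case to handle. If fewer than $k$ candidates exist, no such $\ell$ exists, and I interpret $Q$ as the set of all candidates. This set has fewer than $k$ elements, so the bound $\Size{Q}\le k$ still holds.

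The only point needing care is that the count is taken \emph{per fixed} $\far(w)$. The threshold $d_{\far(w)}$ determines the candidate set, and the truncation of $D_\ell$ is a single arbitrary but fixed choice. Neither step introduces extra branching, so the number of branches stays at $\Size{Q}+1$. The correctness of restricting $\clo(w)$ to $Q$ is not part of this claim, since it was already established in \Cref{claim-correctness-clo-w}. I therefore expect no real obstacle here beyond stating the degenerate case cleanly.
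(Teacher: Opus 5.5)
Your proposal is correct and matches the paper's argument: one branch for the nonexistence of $\clo(w)$ plus at most $k$ branches from the set $Q$ in \Cref{branching-for-clo(w)}, giving $k+1$. Your explicit treatment of the case with fewer than $k$ candidates is a small extra detail that the paper leaves implicit.
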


	\begin{claimproof}
		For the possibilities for branching, see \Cref{branching-for-clo(w)}.
		Clearly, this yields $k$~possibilities if~$\clo(w)$ exists.
		Since~$\clo(w)$ might not exist, there are at most $k+1$~possibilities for~$\clo(w)$.%
	\end{claimproof}	
	
	Hence, Claims~\ref{claim-number-of-choices-furthest}--\ref{claim-number-of-choices-closest} yield that the total branching number of Phase~1 is~%
	$3\cdot (k^2+3k+1)\cdot (k+1)$.
	
	\proofsubparagraph*{Analysis of Phase~2.}
	Since any clustering on at most $k$~disks can contain at most~$k$ non-empty clusters, there are at most~$k^k$ different clusterings (since the number of these partitions corresponds to the $k$th Bell number~$B_k$ and~$B_k\le k^k$).
	Hence, the branching number of Phase~2 is~$k^k$.
	
	\proofsubparagraph*{Overall running time.} 
	Hence, the overall running time of our algorithm is~$(3\cdot(k^2 + 3k + 1)\cdot (k+1))^k\cdot k^k\cdot 2^{\OO(k\cdot \log(k))}\cdot n^{\OO(1)}\in (k^3)^{\OO(k)}\cdot 2^{\OO(k\cdot \log(k))}\cdot n^{\OO(1)}\in 2^{\OO(k\cdot\log(k))}\cdot n^{\OO(1)}$.
	Combining all, the theorem follows.
\end{prooflater}

\section{Scaling to Cluster Graphs is \textsf{NP}-hard}
\label{sec:cluster-hardness} 
\shortLong{
We now justify the \FPT-algorithm from \Cref{sec:cluster-fpt} and show%
}
{%
To justify our FPT-algorithm for \Cluster from \Cref{sec:cluster-fpt}, we now show%
}
that \Cluster is \NP-hard for all fixed rationals $0 < \rmin \leq \rmax$ except $\rmin = \rmax = 1$.
We provide two different reductions, depending on whether $\rmin < 1$ or $\rmin \geq 1$,
that use the same gadget.

\shortLong{
\subparagraph*{Heavy $\boldsymbol{P_3}$s.}
}
{
\subsection{A Key Gadget: Heavy $\boldsymbol{P_3}$s}
\label{sec:cluster-hardness-heavy}	
}
For a point $p \in \mathbb{R}^2$ and an integer $\delta > 1$, we say that $p$ is \emph{$\delta$-heavy}, denoted as $p^{\delta}$, if it represents $\delta$ co-located \emph{copies} $p^{\delta}(1)$, $p^{\delta}(2)$, $\ldots$, $p^{\delta}(\delta)$. %
We define $\mathcal{S}[p] \coloneqq \{p^{\delta}(i) : i \in [\delta]\}$ and also call the corresponding disks $\delta$-heavy.
A $\delta$-heavy disk induces a $\delta$-clique in the disk graph regardless of the radius assignment $r$, and we can assume that all copies receive equal radii in $r$, in particular, all are scaled or not scaled.\footnote{For this it also suffices to place all copies in a sufficiently small region around $p$ to avoid exact co-location.}
Our main gadget is a \emph{$\delta$-$\theta$-heavy $P_3$} with $\theta > k \geq \delta$, or simply \emph{heavy $P_3$} if $\delta$ and $\theta$ are clear from the context.
It consists of three disks $p_1$, $p_2$, and $p_3$ with \NewText{$1 < \Dist{p_1 - p_2}, \Dist{p_2 - p_3} \leq 2$}, one of which is a $\theta$-heavy disk, \shortLong{often}{in most cases this will be} $p_2$, and the remaining two are $\delta$-heavy.
\NewText{Furthermore, many heavy~$P_3$s will have $\Dist{p_1 - p_2} = \Dist{p_2 - p_3} = \xi$ for some $1 < \xi \leq 2$.
Note that we allow larger distances between the heavy disks during the construction of the instance.
We will ensure that the correct distances are maintained in the final instance by appropriately scaling the entire construction.}
\NewText{Observe that a $\delta$-$\theta$-heavy~$P_3$} induces $\delta^2\cdot\theta$ distinct $P_3$s: %
\begin{observation}
\label{obs:cluster-hardness-heavy}
Let \instance be an instance of \Cluster, let $P$ be a $\delta$-$\theta$-heavy~$P_3$ in \instance, and $r$ a solution.
We have $\Size{\mathcal{T}(r)} \geq \delta$ and $\mathcal{S}[p] \subseteq \mathcal{T}(r)$ for a
$\delta$-heavy disk $p \in V(P)$. %
\end{observation}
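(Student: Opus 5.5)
The plan is to count the distinct induced $P_3$s that the heavy gadget forces in the unscaled graph, and then show that no budget of at most $k < \theta$ scalings can destroy them all unless an entire $\delta$-heavy endpoint is scaled. Write $P = (p_1, p_2, p_3)$ with $p_2$ the $\theta$-heavy disk and $p_1, p_3$ the $\delta$-heavy ones; the case where another position is $\theta$-heavy is symmetric. As footnoted in the construction, we may assume $\Dist{p_1 - p_3} > 2$, which is the intended geometry (for instance $p_1, p_2, p_3$ collinear with distances $\xi > 1$). Then in $G(\mathcal{S}, \Unit)$ every triple $(p_1(i), p_2(j), p_3(l))$ with $i,l \in [\delta]$ and $j \in [\theta]$ is an induced $P_3$: the two edges come from $\Dist{p_1 - p_2}, \Dist{p_2 - p_3} \leq 2$, and the non-edge from $\Dist{p_1 - p_3} > 2$. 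This gives the $\delta^2 \cdot \theta$ distinct $P_3$s mentioned before the observation.

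Next, fix a solution $r$. Since $G(\mathcal{S}, r)$ is a cluster graph, every such triple must contain at least one scaled disk. A triple whose three disks are all unscaled keeps exactly its original adjacencies, because radii are unchanged, so it would still be an induced $P_3$ in $G(\mathcal{S}, r)$. Hence $\mathcal{T}(r)$ intersects every triple. Because $\Size{\mathcal{T}(r)} \leq k < \theta$, some copy $p_2(j^*)$ remains unscaled.

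Fixing this $j^*$, the triples $(p_1(i), p_2(j^*), p_3(l))$ for all $i, l \in [\delta]$ must each be hit by a scaled disk among the copies of $p_1$ or $p_3$. Suppose neither $\mathcal{S}[p_1]$ nor $\mathcal{S}[p_3]$ is fully contained in $\mathcal{T}(r)$. Then there are an unscaled $p_1(i)$ and an unscaled $p_3(l)$, and together with $p_2(j^*)$ they form an untouched induced $P_3$, which is a contradiction. Therefore $\mathcal{S}[p] \subseteq \mathcal{T}(r)$ for some $\delta$-heavy $p \in V(P)$, and consequently $\Size{\mathcal{T}(r)} \geq \Size{\mathcal{S}[p]} = \delta$.

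The only delicate step is the claim that an all-unscaled triple remains an induced $P_3$. This holds because adjacency in $G(\mathcal{S}, r)$ between two unscaled disks depends only on their distance and their radii, which both stay $1$. It requires that the copies of $p_1$ and $p_3$ are far apart, so the intended geometry $\Dist{p_1 - p_3} > 2$ is used here. The paper's phrase ``we can assume all copies receive equal radii'' is not needed: the argument above works copy by copy.
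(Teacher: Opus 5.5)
Your proof is correct and is essentially the paper's reasoning made explicit. The paper only notes that the gadget induces $\delta^2\cdot\theta$ distinct $P_3$s and relies on $\theta > k \geq \delta$ together with its assumption that all copies of a heavy disk receive equal radii; your copy-by-copy argument, which fixes an unscaled copy $p_2(j^*)$, does not need that equal-radii assumption. One small correction: the paper's footnote only concerns avoiding exact co-location, and $\Dist{p_1 - p_3} > 2$ does not come from it but is implicit in the gadget being a $P_3$, as the paper's $\delta^2\cdot\theta$ count presupposes.
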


\newcommand{\showClusterHardnessGadget}{
\begin{figure}
\centering
\includegraphics[page=1]{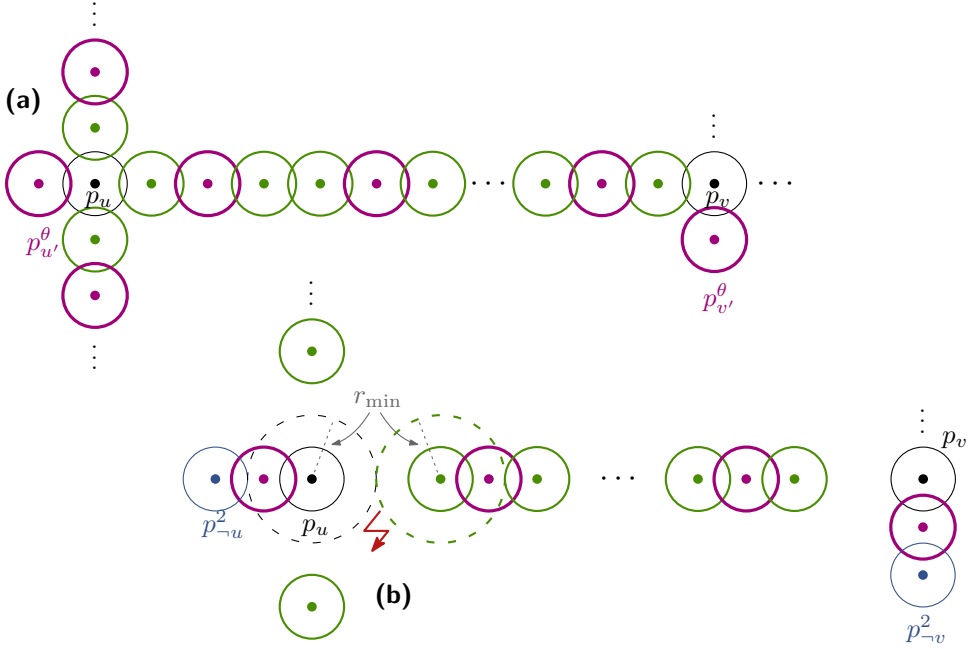}
\caption{
	The edge $uv$ for \textbf{\textsf{(a)}} $\rmin < 1$ and \textbf{\textsf{(b)}} $\rmin > 1$. Black, blue, green, and purple disks are 1-, 2-, $\eta^2$-, and $\theta$-heavy, respectively.
	In (b), if we enlarge $p_u$ and the left endpoint of a heavy $P_3$, then two $\theta$-heavy disks are in the same connected component but not adjacent as $\theta > k$.
}
\label{fig:cluster-hardness-enlarge-shrink-edge}
\end{figure}
}

\onlyShort{
We now provide an overview of the constructions; details can be found in \Cref{app:cluster-hardness}.
\showClusterHardnessGadget

\subparagraph*{Hardness for Shrinking Disks ($\mathbf{0 < \rmin < 1}$).}

The reduction is inspired by ideas from Fomin et al.~\cite{FG00Z25}.
We fix a rational value $0 < \rmin < 1$; the instance that we construct will later be scaled by a factor depending on $\rmin$.
Let $\instance_{\probname{VC}} = (H, \kappa)$ be an instance of \VC where $H$ is a planar cubic graph with $\Size{V(H)} = \eta$; \VC remains \NP-hard in this case~\cite{GJ.CIG.1979}.
We compute in polynomial time a planar rectilinear embedding $\Gamma$ of $H$ of polynomial area~\cite{LMS.LA2.1998} and scale it by an integer $\gamma > 1$ that depends on $\rmin$.
Similar to Fomin et al.~\cite{FG00Z25}, we represent edges of $H$ by evenly spaced disks such that the resulting unit disk graph encodes a suitable subdivision of $H$; see \Cref{fig:cluster-hardness-enlarge-shrink-edge}a.
Note that we can slightly compress the distance between adjacent disks to accommodate an additional disk when required.
Every vertex $v \in V(H)$ is represented by a disk $p_v$, and $v$ will be part of a vertex cover $C$ of $H$ if and only if $r(p_v) \neq 1$.
To ensure that every edge $uv \in E(H)$ is covered, %
we turn the disks along $uv$ into a series of interleaving $\eta^2$-$\theta$-heavy $P_3$s. %
Since $H$ is cubic, we can place a further $\theta$-heavy disk~$p_{v'}^{\theta}$ next to $p_v$, for every $v \in V(H)$, that does not interfere with the edges incident to~$v$; see \Cref{fig:cluster-hardness-enlarge-shrink-edge}a.
This makes $p_v$ part of four $\theta$-heavy $P_3$s.
If we do not scale~$p_v$, these $P_3$s, together with our budget $k$ specified below, force the disks $p_u$ for all vertices $u \in N_H(v)$ to be scaled (otherwise a $P_3$ remains along~$uv$).
We set $k = k_{\text{fix}} + \kappa$, where $k_{\text{fix}}$ is the number of enforced scaling operations to remove all heavy $P_3$s along edges of $H$, and $\theta = k + 1$.
For the value of $\rmax$, we observe that removing a $P_3$ can only be done via shrinking some disks, since initially multiple $\theta$-heavy disks are in the same connected component; recall $\theta > k$.
Thus, we can set any fixed rational $\rmax \geq \rmin$ in the construction.

}

\begin{statelater}{clusterHardnessShrinkDetails}
\subsection{Hardness-Construction for Shrinking Disks ($\mathbf{0 < \rmin < 1}$)}
\label{sec:cluster-hardness-shrink}
In this section, we show that \Cluster is \NP-hard for any fixed rational $0 < \rmin < 1$.
To establish \NP-hardness, we use a construction inspired by the reduction by Fomin et al.~\cite{FG00Z25} to obtain \NP-hardness for \PiScalingLong[\probname{Independence}].
In their work, Fomin et al.\ reduced from \probname{Independent Set}, which remains \NP-hard even if the graph $H$ is planar and cubic.
On a high-level, they considered a sufficiently scaled rectilinear embedding of $H$ and represented every edge with evenly spaced disks such that the resulting unit disk graph corresponds to a graph $H'$ obtained by subdividing every edge of $H$ an even number of times.
As this operation preserves the existence of an independent set, and $G(\mathcal{S}, r)$ should be independent, correctness follows.
In contrast to Fomin et al.~\cite{FG00Z25}, subdividing edges is no longer sufficient to obtain a correct reduction to \Cluster, since $G(\mathcal{S}, r)$ can be a cluster graph even though it is not edge-less.
To overcome this obstacle, we no longer place evenly spaced disks along edges but evenly spaced $P_3$s.
Since a solution must be $P_3$-free, we must remove the $P_3$ by scaling at least one disk.
By arranging the $P_3$s correctly, we can use this observation to forward information along an edge, in particular, which of its endpoints is (not) part of the solution.

In the following, we formalize the above-sketched idea.
Let us fix a rational, positive value $0 < \rmin < 1$ and define $\alpha \coloneqq \lceil \frac{2}{1 - \rmin'} \rceil \geq 2$, where $\rmin' = \rmin + \varsigma$ for some arbitrary small constant $0 < \varsigma \ll (1 - \rmin)$.\footnote{We use $\varsigma$ to ensure that two disks do not intersect after scaling one of them down to $\rmin$; recall that we consider closed disks.}
Let $\instance_{\probname{VC}} = (H, \kappa)$ be an instance of \VC, where $H$ is a planar cubic graph on $\eta = \Size{V(H)}$ vertices.
We note that \VC remains \NP-hard in this case~\cite{GJ.CIG.1979}.
Furthermore, let $\Gamma$ be a rectilinear planar embedding of $H$ of polynomial area and where each edge has at most three bends.
We can compute $\Gamma$ in polynomial time~\cite{LMS.LA2.1998}.
Note that every edge $e \in E(H)$ corresponds to a sequence of (up to three) segments in $\Gamma$.
We scale $\Gamma$ by an integer $\gamma \geq 1$ such that (1) the vertical/horizontal distance between any pair of vertical/horizontal segments is greater than $4\alpha$ and (2) any segment $s$ has an integer length $\Dist{s} > 0$ which is divisible by \NewText{$12\beta(2\alpha - 1)$}, where $\beta \coloneqq 2\lceil \frac{2\alpha - 2}{3} \rceil \geq \alpha \geq 2$.
The reasoning behind these values will become clear when we construct the instance, in particular, when we replace edges of $H$ with interleaving heavy $P_3$s.
We can assume that $\Gamma$ is drawn on an integer grid.
Hence, such a $\gamma$ must exist.
Let $\Gamma^{\gamma}$ denote the scaled drawing.
Recall our overall approach, which consists of representing each edge $uv \in E(H)$ with a series of $P_3$s to ensure that $u$ or $v$ are part of a vertex cover $C$ of $H$.
We now formalize this idea using $\delta$-$\theta$-heavy~$P_3$s, where we specify the values of $\delta$ and $\theta$ in the end.
Initially, the following construction is scaled up by a factor of $\alpha$ \NewText{in order to} ensure that we place every disk on integer coordinates.
In the end, we can scale it down to obtain an instance of \Cluster.

\subparagraph*{Constructing the Instance.}
Consider an edge $e = uv \in E(H)$.
Let $s(e)$ denote the segments of $e$ in $\Gamma^{\gamma}$ and let $s \in s(e)$ be such a segment. 
For the following discussion, order the endpoints of $s$ along the $u$-to-$v$ traversal of $e$.
Observe that $\Dist{s} = \lambda_s \cdot (3\beta(2\alpha - 1))$ must hold for some integer \NewText{$\lambda_s \geq 4$}; observe that $\lambda_s$ ``contains'' a factor \NewText{four} of $\Dist{s}$, which we will later use when placing disks along $s$.
We now differentiate between two cases, depending on the second endpoint of $s$ being $v$ or not.
Intuitively, if $v$ is the second endpoint of $s$, we have traversed the entire edge $uv$.
In this case, we introduce an additional disk which will later represent the vertex $v$.
But let us first discuss the easier case, where $v$ is not the second endpoint of $s$.
If this is the case, we place $3\lambda_s'+1$-many disks $p_{s,1}, p_{s,2}, \ldots, p_{s,3\lambda_s'+1}$ along~$s$, starting at the first endpoint of $s$, where $\lambda_s' = \lambda_s\cdot\beta$.
We distribute the disks along $s$ such that $\Dist{p_{s, i + 1} - p_{s,i}} = 2\alpha - 1$ for all $i \in [3\lambda_s']$; see \Cref{fig:cluster-hardness-enlarge-shrink-edge}a.
Observe that the distance between $p_{s,3\lambda_s'+1}$ and $p_{s,1}$ equals $(3\lambda_s')\cdot(2\alpha - 1) = \lambda_s \cdot (3\beta(2\alpha - 1))$.
Later, we will introduce $\lambda_s'$-many vertex-disjoint $\delta$-$\theta$-heavy $P_3$s along $s$, each composed of three consecutive disks.
The first disks on $s$ either represents the vertex $u$, if $u$ is the first endpoint of $s$, or will be used to connect different segments together.
\onlyLong{
\showClusterHardnessGadget
}

In the second case, where $v$ is the second endpoint of $s$, we want to place one additional disks on $s$, which later will represent the vertex $v$.
To this end, we use a similar trick as Fomin et al.~\cite{FG00Z25}.
More \NewText{concretely}, let \NewText{$\mu \coloneqq \frac{4\alpha + 6 \beta}{6 \beta + 2}$} and observe $1 < \mu < 2$ since $\beta \geq \alpha \geq 2$.
We place the first $3\beta + 1$-many disks $p_{s,1}, p_{s,2}, \ldots, p_{s,{3\beta+1}}$ along $s$ as above, i.e., such that $\Dist{p_{s,{i + 1}} - p_{s,i}} = 2\alpha - 1$ for all $i \in [3\beta]$.
For the next \NewText{$6\beta+2$-many} disks \NewText{$p_{s,3\beta+2}, p_{s,3\beta+3}, \ldots, p_{s,{9\beta+3}}$}, we now maintain $\Dist{p_{s,{i + 1}} - p_{s,i}} = 2\alpha - \mu$ for all \NewText{$3\beta + 1 \leq i < 9\beta + 3$}.
We observe that the distance between \NewText{$p_{s,{9\beta + 3}}$} and $p_{s,3\beta + 1}$ equals \NewText{$(6\beta + 2)(2 \alpha - \mu)$, which we can simplify to
\begin{align*}
(6\beta + 2)(2 \alpha - \mu) &= (6\beta + 2) \cdot 2\alpha - ((6\beta+2) \cdot \mu) = (6\beta + 2) \cdot 2\alpha - (4\alpha + 6\beta)\\
&=
12\alpha\beta + 4\alpha - 4\alpha - 6\beta = 6\beta \cdot (2\alpha - 1).
\end{align*}}%
Since $\Dist{s} = \lambda_s \cdot (3\beta(2\alpha - 1))$, for some integer \NewText{$\lambda_s \geq 4$}, and the first disks occupy a distance of \NewText{$9\beta(2\alpha - 1)$}, we observe that there remains on $s$ a stretch of length \NewText{$(\lambda_s - 3)(3\beta(2\alpha - 1))$}.
We place on this remaining stretch $3\beta(\lambda_s-2) + 1$-many disks $p_{s,{x + 1}}, p_{s,{x + 2}}, \ldots, p_{s,{x + 3\beta(\lambda_s-2) + 1}}$, where \NewText{$x = 9\beta +3$}, such that $\Dist{p_{s,{x + i + 1}} - p_{s,{x + i}}} = 2\alpha - 1$ for every \NewText{$i \in [(\lambda_s-3)3\beta]$}.
The distance between $p_{s,1}$ and \NewText{$p_{s,{x + 3\beta(\lambda_s-3) + 1}}$} equals
\NewText{$3\beta \cdot (2\alpha - 1)) + 6\beta \cdot (2\alpha - 1) + 3\beta\cdot(\lambda_s-3)\cdot(2\alpha - 1) = \lambda_s \cdot (3\beta(2\alpha - 1))$}, where we derived the first two terms above.
Hence, in the end we have covered the entire segment $s$ with (mostly evenly) distributed disks; see \Cref{fig:cluster-hardness-enlarge-shrink-edge}a.

We repeat this process for every segment $s$ of the edge $uv$ and every edge $e \in E(H)$.
Note that if the position of two disks coincides, we only create one of them.
The above construction forms the base layout of our reduction.
Let $\mathcal{S}'$ denote the point set constructed so far.
We observe that $\mathcal{S}'$ contains for every vertex $v \in V(H)$ one disk $p_v$ on the position of~$v$ in $\Gamma^{\gamma}$, i.e., with $\Gamma^{\gamma}(v) = p_v$.
In the following, we say that $p_v$ \emph{represents} the vertex $v$.
We will use the following equivalence between a vertex cover $C \subseteq V(H)$ of $H$ and the radius assignment $r(p_v)$ of $p_v$: $v \in C$ if and only if $r(p_v) \neq 1$, i.e., if and only if $p_v \in \mathcal{T}(r)$.

Observe that every edge $uv$ is represented by a series of \NewText{$3\cdot q(uv)$} disks along $uv$, ignoring $p_u$ and $p_v$, where \NewText{$q(uv) \coloneqq \sum_{s \in s(e)} \lambda_s\beta + 1$}.
The distance between any two consecutive disks along $uv$, including the disks $p_u$ and $p_v$, is at most $2\alpha - 1$.
The distance between any other pair of points in $\mathcal{S}'$ is (strictly) greater than $2\alpha$.
Note that this also holds along bends of $uv$ since $\alpha \geq 2$.
Consequently, we have $G(\mathcal{S}', r') = H'$, where $r'(p) = \alpha$ for every $p \in \mathcal{S}'$ and $H'$ is a subdivision of $H$.
So far, the construction is similar to the one by Fomin et al.~\cite{FG00Z25}.
However, as already hinted at in the high-level overview, we must make two adaptions to ensure correctness.

First, for every edge $uv \in E(H)$, let \NewText{$p_u, p_1, p_2, p_3, \ldots, p_{3\cdot q(uv)}, p_v$} be the disks along the edge, where we recall that the disks $p_u$ and $p_v$ represent the vertices $u$ and $v$, respectively.
We %
turn every disk $p_{3i - 1}$, \NewText{$i \in [q(uv)]$}, into a $\theta$-heavy disk $p_{3i - 1}^{\theta}$, i.e.,~$p_{3i - 1}$ is replaced by a set $\mathcal{S}[p_{3i - 1}]$ of $\theta$ disks, where we specify $\theta$ in the end.
Moreover, we turn every disk $p_{3i}$ and $p_{3i - 2}$, \NewText{$i \in [q(uv)]$}, into $\eta^2$-heavy disks $p_{3i}^{\eta^2}$ and $p_{3i - 2}^{\eta^2}$, respectively, where we recall that $\eta = \Size{V(H)}$.
Observe that this transforms the (unit) disks along the edge~$uv$ into a series of interleaving $\eta^2$-$\theta$-heavy $P_3$s %
between the disks.
Furthermore, observe that we also have $q$ vertex-disjoint $\eta^2$-\NewText{$\theta$}-heavy $P_3$s along $uv$.
Moreover, only the disks $p_v$ for $v \in V(H)$ are not heavy.

Second, for every vertex $v \in V(H)$, let $p_v \in \mathcal{S}'$ be the disk that represents it.
Recall that $H$ is cubic, i.e., every vertex, and in particular $v$, has degree three in $H$.
Hence, in the rectilinear drawing $\Gamma^{\gamma}$ of $H$, there is exactly one direction of top, bottom, left, and right without any incident edge.
Say that there is no incident edge that leaves $v$ to the top.
We place a $\theta$-heavy disk $p_{v'}^{\theta}$ exactly $2\alpha - 1$ to the top of $p_v$.
For the other cases we perform symmetrically.
Observe that $p_v$ is involved in three $\theta$-heavy $P_3$s, one for each incident edge; see again \Cref{fig:cluster-hardness-enlarge-shrink-edge}a.
Hence, if we do not scale $p_v$, we must scale all copies of the three overlapping $\eta^2$-heavy disks that represent (the start of) the three incident edges.
This completes the construction of the point set $\mathcal{S}$ and by scaling down the entire instance by a factor of $\alpha$, we obtain a unit disk graph.

\subparagraph*{The Complete Reduction.}
Recall that $\instance_{\probname{VC}} = (H, \kappa)$ is an instance of \VC where $H$ is a cubic and planar graph with $\eta = \Size{V(H)}$.
To complete the construction of an instance $\instance$ of \Cluster, we still need to specify the values for $\rmin$, $\rmax$, and~$k$.
Furthermore, we need to specify the value of $\theta$.
We define \NewText{$k_{\text{fix}} \coloneqq \eta^2\sum_{e \in E(H)}q(e)$}.
\NewText{Note that}~$k_{\text{fix}}$ is the minimum number of scaling operations required ``within'' the edges of~$H$. %
Scaling the disk $p_v$ should correspond to having $v \in C$, for a vertex cover $C \subseteq V(H)$ of $H$ (of size $\kappa$).
Thus, we set $k = k_{\text{fix}} + \kappa$.
Since we want to prevent that $\theta$-heavy disks get scaled in any solution to our constructed instance, we can set $\theta = k + 1$.

Recall that we specified in the beginning a value $0 < \rmin < 1$.
Furthermore, recall that every pair of overlapping disks in $\mathcal{S}'$ has a distance of either $2\alpha - 1$ or $2\alpha - \mu$ with $1 < \mu \leq 2$.
Hence, in the reduced instance, the distance between any pair of overlapping disks from $\mathcal{S}'$ is at \NewText{least} $\frac{2\alpha - 2}{\alpha}$.
Consequently, to remove such an edge, we must scale one of its endpoints $p$ to a radius smaller than $r(p) = \frac{2\alpha - 2}{\alpha} - 1$.
Recalling $\alpha = \lceil\frac{2}{1 - (\rmin + \varsigma)}\rceil$ and $\rmin' = \rmin + \varsigma$, we can make the following observation using basic calculus.
\begin{align*}
r(p) = \frac{2(\alpha - 1)}{\alpha} - 1 \geq \left(\frac{2}{1 - \rmin'} - 1\right)(1 - \rmin') - 1 = \rmin'
\end{align*}
Recall $\rmin < \rmin' = r(p)$.
Hence, scaling the endpoint $p$ to $\rmin$ is sufficient to remove the edge.
Observe that enlarging the radius of a disk is never beneficial:
On the one hand, we can only remove a $P_3$ by shrinking one of its disks.
On the other hand, if we enlarge a disk too much, then we risk connecting two disjoint $\theta$-heavy disks, which can never lead to a solution due to our budget $k$.
Consequently, for $\rmax$ we can choose any rational $\rmax > \rmin$.
With this at hand, we have completed the construction of our instance $\instance = (\mathcal{S}, \rmin, \rmax, k = k_{\text{fix}} + \kappa)$ of \Cluster.

We now make some useful observations for showing correctness of the reduction.
First, recall that for an edge $e \in E(H)$, $s(e)$ denotes all segments of $e$ in the drawing $\Gamma^{\gamma}$.
Furthermore, the length of every segment $s \in s(e)$ in $\Gamma^{\gamma}$ is $\Dist{s} = \lambda_s \cdot (3\beta(2\alpha - 1))$ for some integer \NewText{$\lambda_s \geq 4$}.
We defined \NewText{$q(uv) \coloneqq \sum_{s \in s(e)} \lambda_s\beta + 1$} and placed on the segments $s$
exactly \NewText{$q(e)$-many} vertex-disjoint $\eta^2$-$\theta$-heavy $P_3$, all of which need at least $\eta^2$ scaling operations to be removed.
Moreover, every segment $s$ contains $\lambda_s$-many $\theta$-heavy disks, one per vertex-disjoint $\eta^2$-$\theta$-heavy $P_3$.
We observe:
\begin{observation}
\label{obs:cluster-hardness-scaling-on-edge}
Let $uv \in E(H)$ be an edge of $H$ and let \NewText{$p_u, p_1, p_2, p_3, \ldots, p_{3\cdot q(uv)}, p_v$} %
be the disks along $uv$.
In every solution $r$, we have $\mathcal{S}[p_{3i - 2}] \subseteq T(r)$ or $\mathcal{S}[p_{3i}] \subseteq T(r)$, i.e., we scale all $\eta^2$-copies of the $\eta^2$-heavy disks $p_{3i - 2}^{\eta^2}$ or $p_{3i}^{\eta^2}$, for every \NewText{$i \in [q(uv)]$}.
Moreover, this amounts to scaling at least \NewText{$\eta^2q(uv)$} disks along the edge \NewText{$uv$}.
\end{observation}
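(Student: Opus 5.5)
The plan is to apply \Cref{obs:cluster-hardness-heavy} to each of the $q(uv)$ heavy $P_3$s along $uv$ separately, after checking that each of them is a genuine (induced) $\eta^2$-$\theta$-heavy $P_3$ in the unit disk graph of the constructed instance. Concretely, for $i \in [q(uv)]$ the triple $(p_{3i-2}^{\eta^2}, p_{3i-1}^{\theta}, p_{3i}^{\eta^2})$ consists of consecutive disks along the edge. By construction, consecutive disks have distance $2\alpha - 1$ or $2\alpha - \mu$ before scaling down by $\alpha$, hence at most $2$ and more than $1$ afterwards. Non-consecutive disks along the edge (in particular $p_{3i-2}$ and $p_{3i}$) have distance greater than $2\alpha$ before scaling down, including along bends, by the spacing guarantees on $\Gamma^{\gamma}$. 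So every copy pair from $\mathcal{S}[p_{3i-2}] \times \mathcal{S}[p_{3i}]$ is non-adjacent, while each is adjacent to every copy in $\mathcal{S}[p_{3i-1}]$. Thus the triple is an $\eta^2$-$\theta$-heavy $P_3$ with $\theta = k+1 > k$ and $\eta^2 \le k$; the latter holds because every edge contributes at least $\eta^2$ to $k_{\text{fix}}$.

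Next, for a fixed solution $r$, invoke \Cref{obs:cluster-hardness-heavy}: some $\delta$-heavy disk of the $P_3$ has all its copies in $\mathcal{T}(r)$. The $\delta$-heavy disks are exactly $p_{3i-2}^{\eta^2}$ and $p_{3i}^{\eta^2}$, since the middle one is $\theta$-heavy. So we get $\mathcal{S}[p_{3i-2}] \subseteq \mathcal{T}(r)$ or $\mathcal{S}[p_{3i}] \subseteq \mathcal{T}(r)$. The underlying reason, which I would restate if the observation's wording is ambiguous, is this. If neither end were fully scaled, some unscaled copy of each endpoint remains. Not all $\theta$ middle copies can be scaled, as $\theta > k$, so an unscaled middle copy remains as well. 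These three unscaled disks keep their unit radii and form an induced $P_3$, contradicting that $G(\mathcal{S}, r)$ is a cluster graph.

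For the counting part, the index sets $\{3i-2, 3i-1, 3i\}$ are pairwise disjoint for distinct $i$. Hence the $q(uv)$ heavy $P_3$s are vertex-disjoint, and the forced sets $\mathcal{S}[p_{3i-2}]$ or $\mathcal{S}[p_{3i}]$ are disjoint for distinct $i$, each of size $\eta^2$. Summing gives at least $\eta^2 q(uv)$ scaled disks along $uv$.

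The main obstacle is the geometric verification that non-consecutive disks along the edge are non-adjacent, in particular near bends, at the compressed stretch of spacing $2\alpha-\mu$, and where segments meet. This is what makes each triple an induced $P_3$. I would handle it using the stated property that the distance between non-consecutive points of $\mathcal{S}'$ exceeds $2\alpha$, which holds since $\alpha \ge 2$ and $\mu < 2$.
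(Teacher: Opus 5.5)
Your proof is correct and follows the paper's route. The paper justifies this observation simply by noting that the $q(uv)$ triples $p_{3i-2}^{\eta^2}, p_{3i-1}^{\theta}, p_{3i}^{\eta^2}$ along $uv$ are vertex-disjoint $\eta^2$-$\theta$-heavy $P_3$s, each of which forces all copies of one $\eta^2$-heavy endpoint to be scaled by \Cref{obs:cluster-hardness-heavy}, and then summing; your extra checks (consecutive distances in $(1,2]$ and non-consecutive distances above $2$ after scaling down by $\alpha$, plus the direct unscaled-copies argument using $\theta > k$) only make explicit what the paper takes from its construction.
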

Note that scaling a single copy of \NewText{an} $\eta^2$-heavy disk does not remove \NewText{an} $\eta^2$-$\theta$-heavy $P_3$ it is involved in.
Together with $k = k_{\text{fix}} + \kappa < k_{\text{fix}} + \eta^2$, we can use this to show the following property, which strengthens \Cref{obs:cluster-hardness-scaling-on-edge}.
\begin{lemma}
\label{lem:cluster-hardness-scaling-on-edge}
Let $uv \in E(H)$ be an edge of $H$ and let \NewText{$p_u, p_1, p_2, p_3, \ldots, p_{3q(uv)}, p_v$}, %
	be the disks along $uv$.
For any solution $r$ to our construction, either $\mathcal{S}[p_{3i - 2}] \subseteq T(r)$ or $\mathcal{S}[p_{3i}] \subseteq T(r)$ (but not both) holds for every \NewText{$i \in [q(uv)]$}.
\end{lemma}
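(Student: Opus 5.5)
Our plan is a counting argument built on \Cref{obs:cluster-hardness-scaling-on-edge}. Fix a solution $r$. By that observation, for every edge $e \in E(H)$ and every $i \in [q(e)]$, at least one of the $\eta^2$-heavy disks $p_{3i-2}^{\eta^2}$ or $p_{3i}^{\eta^2}$ on $e$ has all its copies in $\mathcal{T}(r)$. We first argue that these sets, taken over all edges and all indices $i$, are pairwise disjoint. The $\eta^2$-$\theta$-heavy $P_3$s along an edge are vertex-disjoint by construction. Disks of different edges are distinct, since the only shared positions are the vertex disks $p_v$, and these are not heavy. So the forced scalings already contribute at least $\eta^2\sum_{e}q(e) = k_{\text{fix}}$ disks to $\mathcal{T}(r)$, one full group of $\eta^2$ copies per pair $(e,i)$.

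Now suppose, for a contradiction, that for some edge $uv$ and some $i \in [q(uv)]$ both $\mathcal{S}[p_{3i-2}]$ and $\mathcal{S}[p_{3i}]$ are contained in $\mathcal{T}(r)$. Then this pair contributes $2\eta^2$ scaled disks instead of $\eta^2$. Together with the disjoint contributions of all other pairs, this gives $\Size{\mathcal{T}(r)} \geq k_{\text{fix}} + \eta^2$. Since $\kappa \leq \eta < \eta^2$ (we may assume $\eta \geq 2$, otherwise the \VC instance is trivial), we have $k = k_{\text{fix}} + \kappa < k_{\text{fix}} + \eta^2$. This contradicts the budget, so at most one of the two sets is fully scaled. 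Combined with \Cref{obs:cluster-hardness-scaling-on-edge}, which guarantees at least one, we get exactly one, i.e., the claimed exclusive ``or''.

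The main obstacle is justifying the disjoint-counting step carefully. We must verify that no $\eta^2$-heavy disk is shared between two of the counted $P_3$s, including at segment junctions and bends where coinciding positions were merged. The disks at positions $3i-2$ and $3i$ belong to the $i$-th $P_3$ only, and junction disks along an edge are indexed once in the sequence $p_1,\ldots,p_{3q(uv)}$. We would also double-check that \Cref{obs:cluster-hardness-heavy} lets us treat heavy disks as entirely scaled or entirely unscaled. Partially scaling the copies of an $\eta^2$-heavy disk does not destroy the heavy $P_3$, and it only adds to the count, so the counting argument is unaffected.
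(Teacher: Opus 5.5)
Your proposal is correct and follows essentially the paper's argument. If both $\eta^2$-heavy endpoints of the $i$-th heavy $P_3$ on $uv$ are fully scaled, the remaining vertex-disjoint heavy $P_3$s on $uv$ and on all other edges still each force $\eta^2$ scalings by \Cref{obs:cluster-hardness-scaling-on-edge}. This gives $\Size{\mathcal{T}(r)} \geq k_{\text{fix}} + \eta^2 > k_{\text{fix}} + \kappa = k$; your explicit disjointness check across indices and edges, and the remark that $\eta^2 > \kappa$, only spell out what the paper uses implicitly.
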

\begin{proof}
Towards a contradiction, assume that there exists a radius $r$ assignment that violates the lemma statement.
In particular, there exists an edge $uv \in E(H)$ represented by the disks \NewText{$p_u, p_1, p_2, p_3, \ldots, p_{3q(uv)}, p_v$}, %
such that $\mathcal{S}[p_{3i - 2}] \subseteq T(r)$ and $\mathcal{S}[p_{3i}] \subseteq T(r)$ holds for some \NewText{$i \in [q(uv)]$}.
This removes the $\eta^2$-$\theta$-heavy $P_3$ on the disks $p_{3i - 2}$, $p_{3i - 1}$, and $p_{3i}$.
The disks along $uv$ form $q-1$ remaining $\eta^2$-$\theta$-heavy $P_3$s, namely on the disks $p_{3j - 2}$, $p_{3j - 1}$, and $p_{3j}$, for all \NewText{$j \in [q(uv)]$}, $j \neq i$; recall \Cref{fig:cluster-hardness-enlarge-shrink-edge}a.
These $P_3$ are disjoint from the one on the disks $p_{3i - 2}$, $p_{3i - 1}$, and $p_{3i}$.
Hence, by \Cref{obs:cluster-hardness-scaling-on-edge}, we must scale at least \NewText{$\eta^2(q(uv) - 1)$} additional disks along $uv$.
In total, this equals \NewText{$\eta^2(q(uv) + 1)$} scaling operations along $uv$, where we recall that $q = \sum_{s \in s(uv)} \lambda_s$.

Likewise, for every edge $u'v' \in E(H)$ with $u'v' \neq uv$, we must also scale at least \NewText{$\eta^2q(u'v')$} disks due to \Cref{obs:cluster-hardness-scaling-on-edge}.
Summing all up, this amounts to \NewText{$\Size{T(r)} \geq \eta^2\sum_{e \in E(H)}q(e) + \eta^2$} scaling \NewText{operations}.	
Recall \NewText{$k_{\text{fix}} = \eta^2\sum_{e \in E(H)}q(e)$} and observe $\kappa \leq \eta$.
Thus, $\Size{T(r)} \geq k_{\text{fix}} + \eta^2 > k_{\text{fix}} + \kappa = k$.
Since any solution, and in particular~$r$, can scale at most $k$ disks,
this implies a contradiction with the existence of $r$: either $r$ scales strictly more than $k$ disks or there exists at least one $P_3$ in $G(\mathcal{S}, r)$ along an edge of $H$.
\end{proof}

\onlyLong{We show in \Cref{fig:cluster-hardness-example-combined}b a small (simplified) example of our construction.
Using \Cref{obs:cluster-hardness-scaling-on-edge,lem:cluster-hardness-scaling-on-edge}, we now establish correctness of our reduction.
}
\end{statelater}

\begin{restatable}\restateref{prop:cluster-hardness-shrink}{proposition}{propositionClusterHardnessShrink}
\label{prop:cluster-hardness-shrink}
\Cluster is \NP-hard for all fixed rational $0 < \rmin < 1$.
\end{restatable}
\begin{prooflater}{ppropositionClusterHardnessShrink}
Let $\instance_{\probname{VC}} = (H, \kappa)$ be an instance of \VC where $H$ is a cubic and planar.
Furthermore, for a fixed rational $0 < \rmin < 1$, let $\instance = (\mathcal{S}, \rmin, \rmax, k = k_{\text{fix}} + \kappa)$ be the instance of \Cluster  (for an arbitrary fixed rational $\rmax \geq \rmin$) constructed as described above.
We can observe that $\instance$ can be constructed in time that is polynomial in the size of $\instance_{\probname{VC}}$.
Since we placed all our disks on rational coordinates that depend on the value of $\rmin$, the size of $\instance$ is also polynomial in the size of $\instance_{\probname{VC}}$, and it remains to show correctness of the reduction.

\proofsubparagraph*{($\boldsymbol{\Rightarrow}$)}
Assume that $\instance_{\probname{VC}}$ is a yes-instance of \VC and let $C \subseteq V(H)$ be a vertex cover of $H$ of size $\Size{C} \leq \kappa$.
We now construct a radius assignment $r$ that is a solution to~$\instance$.
To this end, we first set $r(p_v) = \rmin$ for every $v \in C$.
For every edge $uv \in E(H)$, we have $u \in C$ or $v \in C$.
Without loss of generality, assume that $u \in C$ holds and consider the disks $p_u, p_1, p_2, p_3, \ldots, p_{3q}, p_v$ along the edge $uv$; recall $q = \sum_{s \in s(e)} \lambda_s$.
Second, we set $r(p_{3i}^{\eta^2}(j)) = \rmin$ for every copy $p_{3i}^{\eta^2}(j)$ of the $\eta^2$-heavy disk $p_{3i}$ for every $j\in [\eta^2]$ and $i \in [q]$.
Third, we set $r(p) = 1$ for all remaining disks $p \in \mathcal{S}$; see also \Cref{fig:cluster-hardness-example-combined}b.
Observe that for every edge $uv \in E(H)$, we scale exactly $\eta^2\sum_{s\in s(uv)} \lambda_s$ disks (ignoring $p_u$ and $p_v$), thus matching the lower bound from \Cref{obs:cluster-hardness-scaling-on-edge}.
Overall, this amounts to $\eta^2\sum_{e \in E(H)}\sum_{s \in s(e)} \lambda_s + \kappa$ scaling operations, thus $\Size{\mathcal{T}(r)} \leq k$.
What remains to show is that $G(\mathcal{S}, r)$ is a cluster graph, i.e., does not contain an induced $P_3$.
Towards this, consider again the edge $uv \in E(H)$ and the disks $p_u, p_1, p_2, p_3, \ldots, p_{3q}, p_v$ along the edge $uv$ and recall that we assume, without loss of generality, $u \in C$.
We observe that for every $i \in [q]$, at least one of the disks $p_{3i - 2}$, $p_{3i - 1}$, and $p_{3i}$ is $\eta^2$-heavy and all of its copies are scaled.
Furthermore, also from the three disks $p_{u'}^{\theta}$, $p_u$, and $p_1$, the disk $p_u$ is scaled, where $p_{u'}^{\theta}$ is the $\theta$-heavy disk that we placed next to~$p_u$; recall \Cref{fig:cluster-hardness-enlarge-shrink-edge}a and \Cref{fig:cluster-hardness-example-combined}b.
Similarly, also from $p_{3q}$, $p_v$, and $p_{v'}^{\theta}$, all copies of the $\eta^2$-heavy disk $p_{3q}^{\eta^2}$ are scaled.
Hence, there is no induced $P_3$ along the edge $uv$ (including its endpoints).
Applying the argument to every edge of $H$, we conclude that $G(\mathcal{S}, r)$ has no induced $P_3$, i.e., is a cluster graph.
Consequently, $r$ is a solution with $\Size{\mathcal{T}(r)} \leq k$.

\proofsubparagraph*{($\boldsymbol{\Leftarrow}$)}
Assume that $\instance$ is a yes-instance of \Cluster and let $r$ be a solution, i.e., a radius assignment such that $\Size{\mathcal{T}(r)} \leq k$ and $G(\mathcal{S}, r)$ is a cluster graph.
We construct \NewText{$C = \{v \in V(H) : p_v \in \mathcal{T}(r)\}$} and show now that it is a vertex cover of $H$ of size at most $\kappa$.

Recall that by \Cref{obs:cluster-hardness-scaling-on-edge}, we must scale at least $\eta^2q$ disks for every edge $e \in E(H)$, with $q = \sum_{s \in s(e)} \lambda_s$.
Summing this up for all $e \in E(H)$, we scale at least $k_{\text{fix}}$ disks along the edges of $H$ alone.
Since $\Size{\mathcal{T}(r)} \leq k = k_{\text{fix}} + \kappa$, we conclude $\Size{C} \leq \kappa$.

Second, to show that $C$ is a vertex cover of $H$, assume that this is not the case.
Thus, there exists an edge $uv \in E(H)$ such that $u,v \notin C$.
Consider the disks $p_u, p_1, p_2, p_3, \ldots, p_{3q}, p_v$ along the edge $uv$.
We now show that $G(\mathcal{S}, r)$ contains a $P_3$ involving some of these disks.
First, consider the disks $p_{u'}^{\theta}$, $p_u$, and $p_1$, where we recall that $p_1$ is actually \NewText{an} $\eta^2$-heavy disk~$p_1^{\eta^2}$.
Observe that they form a series of $P_3$s, each involving a $p_u$ and a copy of $p_{1}^{\eta^2}$ and~$p_{u'}^{\theta}$, respectively.
Since $\eta^2 \leq k < \theta$ and $u \notin C$, we conclude that $\mathcal{S}[p_{1}] \subseteq T(r)$ must hold; recall also \Cref{obs:cluster-hardness-scaling-on-edge}.
Now, consider the disks $p_1$, $p_2$, and $p_3$, and recall they are $\eta^2$- and $\theta$-heavy, respectively, i.e., we have $p_1^{\eta^2}$, $p_2^{\theta}$, and $p_3^{\eta^2}$.
By \Cref{obs:cluster-hardness-scaling-on-edge}, we know that $\mathcal{S}[p_{1}] \subseteq T(r)$ or $\mathcal{S}[p_{3}] \subseteq T(r)$ must hold.
However, \Cref{lem:cluster-hardness-scaling-on-edge} tells us that $\mathcal{S}[p_{3}] \not\subseteq T(r)$ must hold, since we have already concluded that $\mathcal{S}[p_{1}] \subseteq T(r)$ holds.
Hence, there exists at least one copy $p_3^{\eta^2}(j)$, for $j \in [\eta^2]$ such that $r(p_3^{\eta^2}(j)) = 1$ holds.
Together with the (heavy) disks $p_4^{\eta^2}$ and $p_5^{\theta}$, they form $P_3$ and to remove them, we must scale all copies of $p_4^{\eta^2}$ by similar arguments as above.
Repeating above arguments, we conclude that there exists one copy $p_6^{\eta^2}(j')$, for $j' \in [\eta^2]$ such that $r(p_6^{\eta^2}(j')) = 1$ holds.
By iterative application, we conclude that there exists at least one disk $p_{3q}^{\eta^2}(j'')$, for $j'' \in [\eta^2]$ such that $r(p_{3q}^{\eta^2}(j'')) = 1$ holds.
Consider now the disks $p_{3q}^{\eta^2}(j'')$, $p_v$, and $p_{v'}^{\theta}$.
Since $v \notin C$, we have $r(p_v) = 1$.
Furthermore, since $k < \theta$, there exists at least one copy $p_{v'}^{\theta}(i)$, $i \in [\theta]$, such that $r(p_{v'}^{\theta}(i)) = 1$ holds.
These three disks form a $P_3$ in $G(\mathcal{S}, r)$, contradicting the assumption that $r$ is a solution.
\end{prooflater}

\onlyLong{
Note that the construction behind \Cref{prop:cluster-hardness-shrink} heavily uses the fact that $\rmin < 1$ holds.
In the following section, we present a different reduction for the case $1 \leq \rmin$ (and $\rmax > 1$), thus complementing \Cref{prop:cluster-hardness-shrink}.
}

\begin{figure}
\centering
\includegraphics[page=1]{cluster-hardness-example-combined}
\caption{
\textbf{\textsf{(a)}} A graph $H$ and the corresponding instances of \Cluster for \textbf{\textsf{(b)}} $\rmin < 1$ and \textbf{\textsf{(c)}} $\rmin > 1$. 
If a disk representing a vertex is red, then the vertex belongs to a solution to the instance \textbf{\textsf{(b)}} $(H, 2)$ of \VC and \textbf{\textsf{(c)}} $(H, 3)$ of \IS.
Note that we can always slightly shift some disks to make space along an edge of $H$ if required.
We simplified $H$ and the constructed instances to ease readability.
Colors have the same meaning as in \Cref{fig:cluster-hardness-enlarge-shrink-edge}.
}
\label{fig:cluster-hardness-example-combined}
\end{figure}

\onlyShort{
\subparagraph*{Hardness for Enlarging Disks ($\mathbf{1 \leq \rmin}$).}

We fix a rational value $1 < \rmin$ (the case $\rmin = 1$ is handled in the end).
Let $\instance_{\probname{IS}} = (H, \kappa)$ be an instance of \IS, where $H$ is a planar cubic graph with $\Size{V(H)} = \eta$; also \IS remains \NP-hard in this case~\cite{GJ.CIG.1979}.
We start with an appropriately scaled planar rectilinear embedding $\Gamma^{\gamma}$ 
of~$H$, where the scaling factor $\gamma > 1$ depends on $\rmin$.
To represent an edge $uv \in E(H)$, we use a series of disjoint %
$\eta^2$-$\theta$-heavy $P_3$s. %
The distance between endpoints of each heavy $P_3$ is $2\xi = \min(\rmin, 1.5) + 1$
and the spacing $d$ between consecutive heavy $P_3$s along $uv$ satisfies $1 + \rmin < d \leq 2\rmin$.
We also maintain a distance of $2\rmin$ to the start- and endpoint of~$uv$; see \Cref{fig:cluster-hardness-enlarge-shrink-edge}b.
By \Cref{obs:cluster-hardness-heavy}, we must scale all copies of one endpoint of each %
heavy~$P_3$.
Since $d \leq 2\rmin$, we must consistently scale either the left or the right endpoint of each heavy~$P_3$ along $uv$.
Otherwise, two $\theta$-heavy disks belong to the same cluster but do not intersect; see the dashed disks in \Cref{fig:cluster-hardness-enlarge-shrink-edge}b.
As $\theta > k$, this cannot lead to a solution. %
We represent each vertex $v \in V(H)$ by an extra disk $p_v$. %
Since $H$ is cubic, there is a free ``direction'' next to $v$ in $\Gamma^{\gamma}$.
There, we place an additional $\theta$-heavy disk and a $2$-heavy disk $p_{\lnot v}^{2}$ that together with $p_v$ form a heavy $P_3$; see \Cref{fig:cluster-hardness-enlarge-shrink-edge}b.
To remove this $P_3$, we have $p_v \in \mathcal{T}(r)$ or $\mathcal{S}[p_{\lnot v}] \subseteq \mathcal{T}(r)$, encoding the choice of having $v$ in an independent set $C$ of $H$, i.e., $v\in C$ if and only if $p_v \in \mathcal{T}(r)$.
Scaling $p_u$ and $p_v$ for an edge $uv \in E(H)$ results in ``merging'' two disjoint $\theta$-heavy disks into one connected component, which cannot lead to a solution.

To complete the construction, we set $k = k_{\text{fix}} + \kappa + 2(\eta - \kappa) = k_{\text{fix}} + 2\eta - \kappa$ and $\theta = k + 1$.
The value of $k_{\text{fix}}$ denotes the number of scaling operations required to remove all heavy $P_3$s along the edges of $H$.
Moreover, the term $2(\eta - \kappa)$ accounts for the $\eta - \kappa$ vertices $v \notin C$ for which we have to scale (both copies of) $p_{\lnot v}^2$.
Observe that enlarging a disk beyond $\rmin$ is never helpful, as we eventually connect two $\theta$-heavy disks which cannot lead to a solution; recall $\theta > k$.
Therefore, we can use any fixed rational $\rmax \geq \rmin$ in the above construction.

}

\begin{statelater}{clusterHardnessEnlargeDetails}
\subsection{\textsf{NP}-hardness Construction for Enlarging Disks ($\mathbf{1\leq \rmin} < \rmax$)}
\label{sec:cluster-hardness-enlarge}
In this section, we show that \Cluster remains \NP-hard for any fixed rational $1 \leq \rmin$. %
Note that by the definition of cluster graphs, the problem can be trivially decided in polynomial time for the case $\rmin = \rmax = 1$.
In the following, we describe the hardness construction for the case $1 < \rmin$.
In \Cref{sec:cluster-hardness-combined}, we discuss how to adapt it for $\rmin = 1$ and $1 < \rmax$.

Recall that in our hardness construction from \Cref{sec:cluster-hardness-shrink} we used overlapping (heavy)~$P_3$s.
To remove them, we had to (consistently) shrink one of the two end vertices of the heavy $P_3$.
Now, for $1 < \rmin$, we can no longer shrink disks but must enlarge them.
This requires us to use a different building block that internally again uses (heavy) $P_3$s, however, this time they are isolated in $G(\mathcal{S}, \Unit)$.
By choosing the right distance between these $P_3$s, we can again use them to forward information along an edge; in our case, which vertices are part of an \emph{independent set}.
In the following, we formalize the above-sketched idea; this time, we directly construct a correctly-scaled instance.

Let us fix a rational, positive value $1 < \rmin$.
We define $\alpha \coloneqq \min(\rmin, 1.5) + 1$, $\beta \coloneqq 2\rmin$, and \NewText{$\mu = \lceil\frac{\beta}{\min(\alpha, \rmin - 1)}\rceil + 1$}.
Let $\instance_{\probname{IS}} = (H, \kappa)$ be an instance of \IS, where~$H$ is a planar cubic graph on $\eta = \Size{V(H)}$ vertices.
As for \VC, \IS remains \NP-hard in this case~\cite{GJ.CIG.1979}.
Furthermore, let $\Gamma$ be a rectilinear planar embedding of~$H$ of polynomial area and where each edge has at most three bends; we recall that $\Gamma$ can be computed in polynomial time~\cite{LMS.LA2.1998}.
Similar to \Cref{sec:cluster-hardness-shrink}, we let $\Gamma^{\gamma}$ denote the drawing $\Gamma$ scaled by an integer $\gamma \geq 1$ such that (1) the vertical/horizontal distance between any pair of vertical/horizontal segments is greater than $2(\alpha + \beta)$ and (2) any segment $s$ has an integer length $\Dist{s} > 0$ which is divisible by $2(\mu + 1) (\alpha + \beta)$.

\subparagraph*{Constructing the Instance.}
Consider an edge $e = uv \in E(H)$, a segment $s \in s(e)$ and order the endpoints of $s$ along the $u$-to-$v$ traversal of $e$.
Observe that $\Dist{s} = \lambda_s(\alpha + \beta)$ must hold for some integer $\lambda_s \geq 2(\mu + 1)$.
Similar to the construction for the previous reduction, we consider two cases, depending on whether $v$ is the second endpoint of $s$.

In the first case, $v$ is not the second endpoint of $s$.
We place $3\lambda_s$-many points along $s$, let them be $p_{s,1}, p_{s,2}, \ldots, p_{s,3\lambda_s}$.
The disk $p_{s,1}$ is placed a distance of $\beta$ away from the first endpoint of $s$.
The remaining disks are distributed along $s$ such that $\Dist{p_{s, i + 1} - p_{s,i}} = d$ for $i \in [3\lambda_s - 1]$, where $d = \beta$ if \NewText{$i = 3j$} for some $j \in [\lambda_s - 1]$ and $d = \alpha / 2$ otherwise.
Observe that the distance between $p_{s,3\lambda_s}$ and the first endpoint of $s$ equals $\beta + \lambda_s \cdot \alpha + (\lambda_s - 1)\beta = \lambda_s \cdot (\alpha + \beta)$.
Later, we will introduce $\lambda_s$-many $\delta$-$\theta$-heavy $P_3$s along $s$, each composed of three consecutive disks.
On the first endpoint on $s$, we will either place a disk representing the vertex $u$, if $u$ is the first endpoint of $s$, or we have already placed the last disk from a different segment there.

In the second case, $v$ is the second endpoint of $s$.
Observe that in the above-described first case, we place the last disk at the second endpoint of $s$.
This time, we want to be able to later place there an additional disk $p_v$, which will represent the vertex $v$.
Moreover, the distance between $p_v$ and the last disk on $s$ should be $\beta$, i.e., the same distance as the first disk has to the first endpoint of $s$.
To this end, we use a similar, but this time conceptually simpler, trick as in \Cref{sec:cluster-hardness-shrink}.
We place the first three disks $p_{s,1}$, $p_{s,2}$, and $p_{s,3}$ as above, i.e., the distance between $p_1$ and the first endpoint of $s$ is $\beta$ and the distance between the remaining disks is $\alpha / 2$ each.
Afterwards, we place the next $3\mu$ disks $p_{s,4}, \ldots, p_{s,3(\mu+1)}$ such that $\Dist{p_{s, i} - p_{s,i - 1}} = d$ for $4 \leq i \leq 3(\mu+1)$, where $d = \beta - \frac{2\rmin}{\mu}$ if $i = 3j + 1$ for some $j \in [\mu]$ and $d = \alpha / 2$ otherwise.
Observe that we have \NewText{$2\rmin - \alpha < \beta - \frac{2\rmin}{\mu}$}.
\NewText{To see this, observe that $2\rmin - \alpha < \beta - \frac{2\rmin}{\mu}$ is equivalent to $\frac{\beta}{\alpha} < \mu$ and we have $\mu \geq \frac{\beta}{\alpha} + 1$.
This guarantees that if we (consistently) scale the leftmost or rightmost heavy disk of two adjacent~$P_3$s, then the scaled disks will not intersect.
We also have $\rmin + 1 < \beta - \frac{2\rmin}{\mu}$, which is equivalent to $\frac{\beta}{\rmin - 1} < \mu$ and we have $\mu \geq \frac{\beta}{\rmin - 1} + 1$.
This ensures that a scaled disk of a heavy $P_3$ does not intersect disks of another heavy $P_3$ which are not scaled.}
Moreover, observe that the distance between $p_{s,3(\mu+1)}$ and the first endpoint on $s$ equals $\beta + \alpha + \mu\alpha + \mu\cdot (\beta - \frac{2\rmin}{\mu}) = (\mu + 1)(\alpha + \beta) - \beta$, since $\beta = 2\rmin$.
As $\lambda_s \geq 2(\mu + 1)$, we can complete the construction by placing $ 3\cdot(\lambda_s - (\mu + 1))$ further disks \NewText{$p_{s,3(\mu+1) + 1}, \ldots, p_{s, 3\lambda_s}$} along~$s$.
This time, we use the same construction as in the first case i.e., $\Dist{p_{s,3(\mu+1) + 1}, p_{s,3(\mu+1)}} = \beta$, and $\Dist{p_{s, 3(\mu+1) + i + 1} - p_{s,3(\mu+1) + i}} = d$ for \NewText{$i \in [3(\lambda_s - (\mu+1)) - 1]$}, where $d = \beta$ if \NewText{$i = 3j$} for some $j \in [\lambda_s - 1], j \geq 3(\mu+1)$, and $d = \alpha / 2$ otherwise.
The distance between $p_{s,1}$ and \NewText{$p_{s, 3\lambda_s}$} equals
$(\mu + 1)(\alpha + \beta) - \beta + (\lambda_s - (\mu + 1))\cdot(\alpha + \beta) = \lambda_s \cdot (\alpha + \beta) - \beta$.
We repeat this process for every segment $s$ of the edge~$uv$ and every edge $e \in E(H)$.
Afterwards, we add one disk $p_v$ for every vertex $v \in V(H)$ which \emph{represents} the vertex $v$.
We place $p_v$ on the position~$\Gamma^{\gamma}(v)$ of $v$ in the scaled drawing $\Gamma^{\gamma}$.  
In the end, \NewText{we} will use the following equivalence between an independent set $C \subseteq V(H)$ of $H$ and the radius assignment $r(p_v)$ of $p_v$: $v \in C$ if and only if $r(p_v) \neq 1$.

The above construction forms the base layout of our reduction.
Similar to \Cref{sec:cluster-hardness-shrink}, we have to make two adaptions to the base layout.
First, for every edge $uv \in E(H)$, let \NewText{$p_u, p_1, p_2, p_3, \ldots, p_{3q(uv)}, p_v$} be the disks along the edge, where we recall that the disks $p_u$ and~$p_v$ represent the vertices $u$ and $v$, respectively\NewText{, and we define $q(uv) \coloneqq \sum_{s\in s(uv)} \lambda_s$}.
Observe that the disks form \NewText{$q(uv)$} disjoint $P_3$s, separated by a distance of \NewText{at most $\beta = 2\rmin$}.
For every \NewText{$i \in [q(uv)]$}, we turn the disk~$p_{3i - 1}$ into a $\theta$-heavy disk~$p_{3i - 1}^{\theta}$ and the disks $p_{3i}$ and $p_{3i - 2}$ into $\eta^2$-heavy disks~$p_{3i}^{\eta^2}$ and~$p_{3i - 2}^{\eta^2}$, respectively, where recall $\eta = \Size{V(H)}$ and specify $\theta > k$ in the end.
This transforms the above-mentioned $P_3$s into $\eta^2$-$\theta$-heavy $P_3$s, this time using $\xi = \alpha / 2$.
Since $\Dist{p_{3i}-p_{3i-2}} > 0$ for every \NewText{$i \in [q(uv)]$}, we must scale all copies of $p_{3i}^{\eta^2}$ or $p_{3i - 2}^{\eta^2}$ to remove the~$P_3$, which leads to the following observation; see also \Cref{fig:cluster-hardness-enlarge-shrink-edge}b.
\begin{observation}
\label{obs:cluster-hardness-enlarging-on-edge}
Let $uv \in E(H)$ be an edge of $H$ and let \NewText{$p_u, p_1, p_2, p_3, \ldots, p_{3q(uv)}, p_v$}, %
be the disks along $uv$.
In every solution $r$, we have $\mathcal{S}[p_{3i - 2}] \subseteq T(r)$ or $\mathcal{S}[p_{3i}] \subseteq T(r)$.
Moreover, this amounts to scaling at least \NewText{$\eta^2q(uv)$} disks along the edge \NewText{$uv$}.
\end{observation}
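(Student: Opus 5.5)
The argument works one heavy $P_3$ at a time: I fix $i \in [q(uv)]$, show that its copies contain an induced $P_3$ for every choice of copies that stay unscaled, and then sum over $i$.

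First I check the geometry of the triple $p_{3i-2}, p_{3i-1}, p_{3i}$. By construction, consecutive disks inside the triple are at distance $\alpha/2 \le 1.25 < 2$, so they intersect as unit disks. The two endpoints are at distance $\alpha = \min(\rmin,1.5)+1 > 2$, so they do not. Hence, for any copies $a \in \mathcal{S}[p_{3i-2}]$, $b \in \mathcal{S}[p_{3i-1}]$ and $c \in \mathcal{S}[p_{3i}]$ that all keep radius~$1$, the triple $a,b,c$ is an induced $P_3$ in $G(\mathcal{S}, r)$.

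Next, fix a solution $r$, so $G(\mathcal{S},r)$ is a cluster graph and $\Size{\mathcal{T}(r)} \le k$. Since $\theta = k+1 > k$, some copy $b$ of the $\theta$-heavy disk $p_{3i-1}^{\theta}$ has $r(b)=1$. Suppose, for contradiction, that neither $\mathcal{S}[p_{3i-2}] \subseteq \mathcal{T}(r)$ nor $\mathcal{S}[p_{3i}] \subseteq \mathcal{T}(r)$. Then there are unscaled copies $a$ of $p_{3i-2}$ and $c$ of $p_{3i}$. By the previous step, $a,b,c$ form an induced $P_3$, which contradicts $G(\mathcal{S},r)$ being a cluster graph.

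This is just \Cref{obs:cluster-hardness-heavy} applied to the heavy $P_3$ with $\delta = \eta^2$. The only thing to verify is that the hypothesis $\theta > k \ge \delta$ holds here. That in turn requires $k \ge \eta^2$, which is true because $k \ge k_{\text{fix}}$ and $k_{\text{fix}}$ contains at least one full $\eta^2$ term.

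For the counting claim, note that the $q(uv)$ triples along $uv$ are pairwise vertex-disjoint, because each disk is placed exactly once. For each $i$, the first part forces $\eta^2$ distinct scaled copies belonging to triple~$i$. Summing over $i$ gives at least $\eta^2 q(uv)$ scaled disks along $uv$.

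The main obstacle is confirming that the scaled configuration cannot accidentally create adjacencies or destroy the $P_3$ some other way. This turns out to be unnecessary: the argument uses only the three unscaled copies $a$, $b$ and $c$, whose mutual adjacencies depend solely on their own radii and positions.
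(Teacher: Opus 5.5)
Your proof is correct and follows the paper's argument. Each triple $p_{3i-2},p_{3i-1},p_{3i}$ is an $\eta^2$-$\theta$-heavy $P_3$: consecutive disks are at distance $\alpha/2\le 1.25$, and the endpoints are at distance $\alpha>2$ because $\rmin>1$. Hence $\theta>k$ forces all copies of one endpoint into $\mathcal{T}(r)$, and summing over the $q(uv)$ vertex-disjoint triples gives the bound. Your check that $k\ge\eta^2$ is harmless but unnecessary, since your direct argument only uses $\theta>k$.
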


Second, for every vertex $v \in V(H)$, let $p_v$ be the disk that represents it.
As before, we make the observation that there is exactly one direction of top, bottom, left, and right without an edge incident to $v$ in $\Gamma^{\gamma}$.
For the following description, say that no edge leaves~$v$ to the left.
We place a $\theta$-heavy disk $p_{v'}^{\theta}$ exactly $\alpha/2$ to the left of $p_v$.
Moreover, we place $\alpha/2$ left of $p_{v'}^{\theta}$ a $2$-heavy disk $p_{\lnot v}^{2}$; see also \Cref{fig:cluster-hardness-enlarge-shrink-edge}b.
For the other cases we perform symmetrically.
The disks $p_{\lnot v}^{2}$, $p_{v'}^{\theta}$, and $p_v$ form a $P_3$.
To remove this $P_3$, we have to scale $p_v$ or both copies of $p_{\lnot v}^{2}$; recall that we will set $\theta > k$.
In particular, scaling the disk $p_v$ will correspond to including the vertex $v \in V(H)$ in an independent set of $H$.
This completes the construction of the point set $\mathcal{S}$.
In the following, we briefly summarize the construction before we show its correctness.

\subparagraph*{The Complete Reduction.}
Recall that $\instance_{\probname{IS}} = (H, \kappa)$ is an instance of \IS where $H$ is a cubic and planar graph with $\eta = \Size{V(H)}$.
We now specify the value of $\theta$, $\rmin$,~$\rmax$, and $k$ to complete the construction of an instance $\instance$ of \Cluster.
To this end, recall \Cref{obs:cluster-hardness-enlarging-on-edge}, which states that we must scale at least $\eta^2\sum_{s \in s(e)} \lambda_s$ disks along every edge $e \in E(H)$, where we recall that the length of a segment $s$ in the drawing $\Gamma^{\gamma}$ is $\Dist{s} = \lambda_s(\alpha + \beta)$ for some integer \NewText{$\lambda_s \geq 2(\mu + 1)$}.
As a consequence, we defined \NewText{$k_{\text{fix}} \coloneqq \eta^2\sum_{e \in E(H)}q(e)$}.
Moreover, since we are interested in an independent set $C \subseteq V(H)$ of size (at least) $\kappa$, and for every $v \in V(H)$ we have $v \in C$ if and only if $p_v \in \mathcal{T}(r)$ (and hence $v \notin C$ if and only if $\mathcal{S}[p_{\lnot v}] \subseteq \mathcal{T}(r)$), we set $k = k_{\text{fix}} + \kappa + 2(\eta - \kappa) = k_{\text{fix}} + 2\eta - \kappa$.
This allows us to set $\theta = k + 1$.
Since enlarging a disk above $\rmin$ is never beneficial (eventually, we would have two $\theta$-heavy disks in the same connected component), we can choose any value $\rmax \geq \rmin$.
This completes the construction of the instance $\instance = (\mathcal{S}, \rmin, \rmax, k = k_{\text{fix}} + 2\eta - \kappa)$ of \Cluster, since we specified in the beginning a value $1 < \rmin$.

We now aim to strengthen \Cref{obs:cluster-hardness-enlarging-on-edge}.
Recall that $\alpha = \min(\rmin, 1.5) + 1$ and $\beta =~2\rmin$.
Let $uv \in E(H)$ be an edge of $H$ and let \NewText{$p_u, p_1, p_2, p_3, \ldots, p_{3q(uv)}, p_v$} be the disks along $uv$.
Observe that $p_1$, $p_2$, and $p_3$ and $p_4$, $p_5$, and $p_6$ form two $P_3$s in $G(\mathcal{S}, \Unit)$.
We have $\Dist{p_3 - p_1} = \Dist{p_6 - p_4} = \alpha$ and $\Dist{p_4 - p_3} \leq \beta$; recall also \Cref{fig:cluster-hardness-enlarge-shrink-edge}b.
\NewText{By \Cref{obs:cluster-hardness-enlarging-on-edge}, we must scale all $\eta^2$-copies of~$p_1^{\eta^2}$ or all of $p_3^{\eta^2}$ to remove the heavy $P_3$ on (the copies of) $p_1$, $p_2$, and $p_3$.}
Moreover, setting $r(p_i^{\eta^2}(j)) = \rmin$ for $i \in \{1,3\}$ and all $j \in [\eta^2]$ suffices, as $\rmin + 1 \geq \min(\rmin, 1.5) + 1 = \alpha= \Dist{p_3 - p_1}$.
The same holds for the $P_3$ on $p_4$, $p_5$, and $p_6$.
However, if we would have $\mathcal{S}[p_{3}], \mathcal{S}[p_{4}] \subseteq T(r)$, then the copies of $p_{3}^{\eta^2}$ and $p_4^{\eta^2}$ intersect; recall that we have $r(p_3^{\eta^2}(i)), r(p_3^{\eta^2}(j)) \geq \rmin$ and $\Dist{p_3^{\eta^2}(i) - p_3^{\eta^2}(j)} \leq 2\rmin$ for every $i,j \in [\eta^2]$.
Thus, the two $\theta$-heavy disks $p_2^{\theta}$ and $p_5^{\theta}$ are in the same connected component.
Since they cannot be (scaled to be) adjacent since $k < \theta$, this would imply the existence of a $P_3$ in $G(\mathcal{S},r)$.
\NewText{Consequently, if we scale (all copies of) $p_3$, we also have to scale (all copies of) $p_6$.
We can iterate this argument to conclude that $\mathcal{S}[p_{3}] \subseteq T(r)$ implies $\mathcal{S}[p_{3j}] \subseteq T(r)$ for all $j \in [q(uv)]$.
By similar arguments we can derive an analogous consequence of $\mathcal{S}[p_{3q(uv)} - 2] \subseteq T(r)$.}
Consequently, we can make the following observation, which strengthens \Cref{obs:cluster-hardness-enlarging-on-edge}.
\begin{observation}
	\label{obs:cluster-hardness-enlarging-on-edge-strong}
	Let $uv \in E(H)$ be an edge of $H$ and let \NewText{$p_u, p_1, p_2, p_3, \ldots, p_{3q(uv)}, p_v$} be the disks along $uv$.
	\NewText{For every solution $r$ it holds that $\mathcal{S}[p_{3}] \subseteq T(r)$ implies $\mathcal{S}[p_{3j}] \subseteq T(r)$ and $\mathcal{S}[p_{3q(uv)} - 2] \subseteq T(r)$ implies $\mathcal{S}[p_{3j - 2}] \subseteq T(r)$ for all $j \in [q(uv)]$.}
\end{observation}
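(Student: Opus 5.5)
The plan is to argue along a single edge $uv$ by induction over the consecutive heavy $P_3$s, propagating the choice of which endpoint gets scaled. Fix a solution $r$ and write $P^{(j)}$ for the heavy $P_3$ on $p_{3j-2}, p_{3j-1}, p_{3j}$. By \Cref{obs:cluster-hardness-enlarging-on-edge}, every $P^{(j)}$ has $\mathcal{S}[p_{3j-2}] \subseteq T(r)$ or $\mathcal{S}[p_{3j}] \subseteq T(r)$. Since $\theta = k+1$, no $\theta$-heavy disk is fully scaled. Hence each $p_{3j-1}$ keeps at least one unscaled copy of radius $1$; call it $c_j$.

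The key step is a local claim for two consecutive gadgets: if $\mathcal{S}[p_{3j}] \subseteq T(r)$, then $\mathcal{S}[p_{3j+3}] \subseteq T(r)$. I would argue by contradiction. Suppose some copy of $p_{3j+3}$ is unscaled. Then all copies of $p_{3j+1}$ must be scaled by \Cref{obs:cluster-hardness-enlarging-on-edge}. Take a copy $a$ of $p_{3j}$ and a copy $b$ of $p_{3j+1}$. Both have radius at least $\rmin$, and $\Dist{a-b} \le \beta = 2\rmin$ (the gap is $\beta$, or the smaller $\beta - 2\rmin/\mu$ on a terminal segment). So $a$ and $b$ are adjacent.

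Next I need $a c_j$ and $b c_{j+1}$ to be edges. This holds because $\Dist{a - c_j} = \alpha/2 \le 2$ and radii are at least $1$. On the other hand, $\Dist{c_j - c_{j+1}} = \alpha + (\text{gap}) > 2$ since the gap exceeds $\rmin+1 > 2$. Therefore $c_j$ and $c_{j+1}$ are non-adjacent but lie in the same connected component via $c_j, a, b, c_{j+1}$. A connected graph component that is not a clique contains an induced $P_3$, contradicting that $G(\mathcal{S},r)$ is a cluster graph. This proves the claim.

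Iterating the claim from $j=1$ gives $\mathcal{S}[p_{3j}] \subseteq T(r)$ for all $j \in [q(uv)]$. The second implication, starting from $\mathcal{S}[p_{3q(uv)-2}]$, is symmetric: if $\mathcal{S}[p_{3j+1}] \subseteq T(r)$ but some copy of $p_{3j-2}$ is unscaled, then $\mathcal{S}[p_{3j}]$ is fully scaled, and the same chain yields a contradiction. Induction downward in $j$ then finishes the proof.

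The main obstacle I expect is checking the distance bookkeeping uniformly. This covers segment junctions and bends, where the two gadgets meet at a corner rather than on a line, and also the compressed terminal stretch. The needed bounds are: the gap is at most $2\rmin$ (enlarged endpoints meet), and the distance between the two $\theta$-heavy centers exceeds $2$. At corners I would use that the Euclidean distance is at most the path length along $s$, which preserves the upper bound. For the lower bound I would use that consecutive gadgets are separated by more than $\rmin+1$ and that the $\theta$-disks sit at least $\alpha/2$ further out, so the $\theta$-heavy centers are more than $2$ apart even at a right-angle bend.
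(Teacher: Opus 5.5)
Your proposal is correct and follows the paper's own argument. Fully scaling both $p_{3j}$ and $p_{3j+1}$ would put the $\theta$-heavy disks $p_{3j-1}$ and $p_{3j+2}$ into one component that is not a clique. Hence $\mathcal{S}[p_{3j}] \subseteq T(r)$ rules out $\mathcal{S}[p_{3j+1}] \subseteq T(r)$, and \Cref{obs:cluster-hardness-enlarging-on-edge} then forces $\mathcal{S}[p_{3j+3}] \subseteq T(r)$; iterating this along the edge gives the claim, and the second implication follows symmetrically. Two things you add make explicit what the paper states tersely: choosing explicit unscaled copies $c_j, c_{j+1}$ (instead of saying the $\theta$-heavy disks ``cannot be scaled to be adjacent''), and checking the distances at the compressed stretch and at bends.
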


\onlyLong{
We show in \Cref{fig:cluster-hardness-example-combined}c a small (simplified) example of our construction and are now ready to show correctness of the reduction.
}
\end{statelater}

\begin{restatable}\restateref{prop:cluster-hardness-enlarge}{proposition}{propositionClusterHardnessEnlarge}
\label{prop:cluster-hardness-enlarge}
\Cluster is \NP-hard for all fixed rational $1 < \rmin$.
\end{restatable}
\begin{prooflater}{ppropositionClusterHardnessEnlarge}
Let $\instance_{\probname{IS}} = (H, \kappa)$ be an instance of \IS where $H$ is a cubic and planar graph on $\eta = \Size{V(H)}$ vertices.
Furthermore, for a fixed $1 < \rmin$ (and chosen $\rmax \geq \rmin$), let $\instance = (\mathcal{S}, \rmin, \rmax, k = k_{\text{fix}} + 2\eta - \kappa)$ be the instance of \Cluster constructed as described above.
Note that $\instance$ can be constructed in polynomial time and it remains to show correctness of the reduction.

\proofsubparagraph*{($\boldsymbol{\Rightarrow}$)}
Assume that $\instance_{\probname{IS}}$ is a yes-instance of \IS and let $C \subseteq V(H)$ be an independent set of $H$ of size $\Size{C} \geq \kappa$.
We now construct a radius assignment $r$ that is a solution to $\instance$.
To this end, we first set $r(p_v) = \rmin$ for every $v \in C$ and $r(p_u^2(1)) = r(p_u^2(2)) = \rmin$ for every $u \notin C$.
For every edge $uv \in E(H)$, we have $u \notin C$ or $v \notin C$ (or both).
Without loss of generality, assume that $u \notin C$ holds and consider the disks \NewText{$p_u, p_1, p_2, p_3, \ldots, p_{3q(uv)}, p_v$} along the edge $uv$; recall \NewText{$q(uv) = \sum_{s \in s(e)} \lambda_s$}.
Second, we set $r(p_{3i - 2}^{\eta^2}(j)) = \rmin$ for every copy $p_{3i - 2}^{\eta^2}(j)$ of the $\eta^2$-heavy disk $p_{3i - 2}$ for every $j\in [\eta^2]$ and \NewText{$i \in [q(uv)]$}.
Third, we set $r(p) = 1$ for all remaining disks $p \in \mathcal{S}$; see also \Cref{fig:cluster-hardness-example-combined}c.
Observe that for every edge $uv \in E(H)$, we scale exactly $\eta^2\sum_{s\in s(uv)} \lambda_s$ disks (ignoring $p_u$ and $p_v$), thus matching the lower bound from \Cref{obs:cluster-hardness-enlarging-on-edge}.
Moreover, among the disks representing vertices of $H$, we scale (at least) $\kappa$ times disks $p_v$ and (at most) $\eta - \kappa$ times both copies of $p_{\lnot v}$.
Summing this up, this yields at most $\eta^2\sum_{e \in E(H)}\sum_{s \in s(e)} \lambda_s + \kappa + 2(\eta - \kappa)$ scaling operations, thus $\Size{\mathcal{T}(r)} \leq k$.
We now show that $G(\mathcal{S}, r)$ does not contain a $P_3$, i.e., is a cluster graph.
For every $v \in V(H)$, we observe that we either scale $p_v$ or both copies of $p_{\lnot v}^2$.
In both cases, we remove the $P_3$ that involves the disk $p_v$.
Consider now the edge $uv \in E(H)$ and the disks \NewText{$p_u, p_1, p_2, p_3, \ldots, p_{3q(uv)}, p_v$} along~$uv$.
Assume $u \notin C$; the case $v \notin C$ is symmetric.
Since we set $r(p_{3i - 2}^{\eta^2}(j)) = \rmin$ for every \NewText{$i \in [q(uv)]$}, $j \in [\eta^2]$, we remove all the $P_3$ along the edge~$uv$.
In particular, observe that we do not introduce a $P_3$ along the edge.
To this end, recall that $u \notin C$ implies $r(p_u) = 1$ and \NewText{$\Dist{p_{3q(uv)} - p_v} > \rmin + 1$}.
Applying the argument to every edge of $H$, we conclude that $G(\mathcal{S}, r)$ has no induced $P_3$, i.e., is a cluster graph.
Consequently,~$r$ is a solution with $\Size{\mathcal{T}(r)} \leq k$.

\proofsubparagraph*{($\boldsymbol{\Leftarrow}$)}
Assume that $\instance$ is a yes-instance of \Cluster and let $r$ be a solution, i.e., a radius assignment such that $\Size{\mathcal{T}(r)} \leq k$ and $G(\mathcal{S}, r)$ is a cluster graph.
We construct $C = \{v \in V(H) : p_v \in \mathcal{T}(r)\}$ and show in the following that it is \NewText{an} independent set of $H$ of size at least $\kappa$.

Recall that by \NewText{\Cref{obs:cluster-hardness-enlarging-on-edge}}, we must scale at least \NewText{$\eta^2q(e)$} disks for every edge $e \in E(H)$. %
Summing this up for all $e \in E(H)$, we scale $k_{\text{fix}}$ disks along the edges of $H$ alone.
Since $\Size{\mathcal{T}(r)} \leq k = k_{\text{fix}} + 2\eta - \kappa$, we have $2\eta - \kappa$ scaling operations remaining for the $P_3$s that involve disks $p_v$ representing vertices $v$ of $H$.
For every such disk~$p_v$, we observe that $p_v \notin \mathcal{T}(r)$ implies $p_{\lnot v}^2(1), p_{\lnot v}^2(2) \in \mathcal{T}(r)$.
As we have $\eta$ such $P_3$s, and the remaining budget is $2\eta - \kappa$, we can conclude $\Size{C} \geq \kappa$.

We now show that $C$ is an independent set of $H$ and assume, towards a contradiction, that there exists an edge $uv \in E(H)$ with $u,v \in C$.
Consider the disks \NewText{$p_u, p_1, p_2, p_3, \ldots, p_{3q(uv)}, p_v$} along the edge $uv$.
In the following, we argue that there exists \NewText{in $G(\mathcal{S}, r)$} a connected component involving $p_u$ or $p_v$ that is not a clique, implying the existence of a $P_3$ in $G(\mathcal{S}, r)$.

\NewText{
Observe that $u \in C$ implies $r(p_u) \geq \rmin$.
We now argue that we have $\mathcal{S}[p_{3}] \subseteq T(r)$.
Assume not.
By \Cref{obs:cluster-hardness-enlarging-on-edge}, this implies $\mathcal{S}[p_{1}] \subseteq T(r)$.
Recall $\Dist{p_u - p_1} \leq 2\rmin$.
Thus, for $\mathcal{S}[p_{1}] \subseteq T(r)$, we have that the two $\theta$-heavy disks $p_{u'}^{\theta}$ and $p_{2}^{\theta}$ are in the same connected component; observe that these disks already overlap with $p_u$ and $p_1$ in $G(\mathcal{S}, \Unit)$, respectively.
As $k < \theta$, we can find $i,i' \in [\theta]$ such that $r(p_{u'}^{\theta}(i)) = r(p_{2}^{\theta}(i')) = 1$.
Observe that $p_{u'}^{\theta}(i)$, $p_u$, $p_1$, and $p_{2}^{\theta}(i')$ are in the same connected component but $p_{u'}^{\theta}(i)p_{2}^{\theta}(i') \notin E(G(\mathcal{S}, r))$ as $\Dist{p_{u'}^{\theta}(i) - p_{2}^{\theta}(i')} > 2$.
Hence, this connected component is not a clique and thus witnesses the existence of a $P_3$ in $G(\mathcal{S}, r)$.
Consequently, $\mathcal{S}[p_{1}] \subseteq T(r)$ is not possible, and we must have $\mathcal{S}[p_{3}] \subseteq T(r)$.
By \Cref{obs:cluster-hardness-enlarging-on-edge-strong}, this implies $\mathcal{S}[p_{3q(uv)}] \subseteq T(r)$.
Since $v \in C$ implies $r(p_v) \geq \rmin$, we can conclude that the two $\theta$-heavy disks $p_{3q(uv)}^{\theta}$ and $p_{v'}^{\theta}$ are in the same connected component.
Thus, by the same arguments as above, we can find a $P_3$ in $G(\mathcal{S}, r)$, which contradicts the assumption that $r$ is a solution.
}
\end{prooflater}

\onlyShort{
\subparagraph*{Combining Both Constructions.}
\Cref{prop:cluster-hardness-shrink,prop:cluster-hardness-enlarge} together imply \NP-hardness of \Cluster for all fixed rationals $\rmin$ with $0 < \rmin < 1$ or $1 < \rmin$.
We give a simplified example in
\Cref{fig:cluster-hardness-example-combined}.
For $\rmin = 1$ and $\rmax > 1$, we can modify the $P_3$s' placement in the construction $\rmin > 1$.
For $\rmin = \rmax = 1$, we simply need to check if $G(\mathcal{S}, \Unit)$ is a cluster graph, which can be done in polynomial time due to their characterization. %
}

\begin{statelater}{clusterHardnessCombinedDetails}
\subsection{Combining Both Hardness-Constructions}
\label{sec:cluster-hardness-combined}
Observe that \Cref{prop:cluster-hardness-shrink,prop:cluster-hardness-enlarge} together establish \NP-hardness of \Cluster for all fixed rational $\rmin$ with  $1 < \rmin$ or $0 < \rmin < 1$.
For $\rmin = 1$ and $\rmax > 1$, we can use the construction from \Cref{sec:cluster-hardness-enlarge} with $\alpha = \min(\rmax, 1.5) + 1$, $\beta = 2\alpha-2$, and $\mu = \lceil\frac{\beta}{\alpha}\rceil$ to obtain an instance $\instance = (\mathcal{S}, \rmin, \rmax, k = k_{\text{fix}} + 2\eta - \kappa)$ of \Cluster.
Intuitively speaking, we %
treat \rmin\ as $\min(\rmax, 1.5)$.
Correctness can be shown with arguments analogous to \Cref{prop:cluster-hardness-enlarge} and two additional observations.
First, to remove \NewText{an} $\eta^2$-$\theta$-heavy $P_3$, the sum of radii of any pair of \NewText{disks in two $\eta^2$-heavy disks} must be at least \NewText{$\alpha > \rmin + 1 = 2$}.
Second, our budget $k$ is not sufficient to scale all copies of two $\eta^2$-heavy disks in \NewText{an} $\eta^2$-$\theta$-heavy $P_3$.
Hence, for one of the $\eta^2$-heavy disks, there exists at least one copy~$p$ with $r(p) > 1.5$.
This copy allows for an observation similar to \Cref{obs:cluster-hardness-enlarging-on-edge-strong}.
Hence, \Cluster remains \NP-hard for $\rmin = 1$ and $\rmax > 1$.%
\begin{remark}
\label{rem:cluster-hardness-enlarge}
\Cluster is \NP-hard for $\rmin = 1$ and all fixed rationals values $1 < \rmax$. 
\end{remark}
Finally, we would like to recall that all our arguments for \Cref{prop:cluster-hardness-shrink,prop:cluster-hardness-enlarge,rem:cluster-hardness-enlarge} are with respect to the (fixed) value for $\rmin$; the actual value for $\rmax$ was irrelevant.
More concretely, for \Cref{prop:cluster-hardness-shrink}, removing a $P_3$ can only be done via shrinking operations since $G(\mathcal{S}, \Unit)$ contains multiple $\theta$-heavy disks in the same connected component.
For \Cref{prop:cluster-hardness-enlarge} (and \Cref{rem:cluster-hardness-enlarge}), observe that scaling beyond $\rmin$ is never beneficial: In the worst case, we would connect two $\theta$-heavy disks, which can never be adjacent in $G(\mathcal{S}, r)$ due to $k < \theta$.
Hence, we can use any value for \rmax\ in the constructions from \Cref{sec:cluster-hardness-shrink,sec:cluster-hardness-enlarge} as long as we fulfill $\rmin \leq \rmax$.

Finally, observe that for $\rmin = \rmax = 1$, \Cluster is equivalent to checking if $G(\mathcal{S}, \Unit)$, which is our input graph, is a cluster graph.
This can be done in polynomial time due to the characterization via $P_3$-freeness.
Combining all arguments, we can thus finally conclude:
\end{statelater}
\begin{theorem}
\label{thm:cluster-np-hard-general}
\Cluster %
is \NP-hard for all fixed rationals $0 < \rmin \leq \rmax$ except for the case $\rmin = \rmax = 1$ where it can be solved in polynomial time.
\end{theorem}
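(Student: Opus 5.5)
The plan is to assemble \Cref{thm:cluster-np-hard-general} from the two reductions already established, plus a short case analysis on the pair $(\rmin,\rmax)$. Fix rationals $0<\rmin\le\rmax$. We split into four cases.

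\emph{Case 1: $\rmin<1$.} We invoke \Cref{prop:cluster-hardness-shrink}. Its construction depends only on $\rmin$. The value $\rmax$ is irrelevant, since initially several $\theta$-heavy disks share a connected component. Enlarging a disk therefore never removes a $P_3$, and merging two $\theta$-heavy disks cannot be repaired within budget $k<\theta$. Hence the reduction from \VC on planar cubic graphs works for every $\rmax\ge\rmin$.

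\emph{Case 2: $\rmin>1$.} We invoke \Cref{prop:cluster-hardness-enlarge}. Again only $\rmin$ enters the construction, because scaling beyond $\rmin$ can only merge components containing two non-adjacent $\theta$-heavy disks.

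\emph{Case 3: $\rmin=1<\rmax$.} We use \Cref{rem:cluster-hardness-enlarge}. This is the enlarging construction with $\alpha=\min(\rmax,1.5)+1$, $\beta=2\alpha-2$, and $\rmin$ effectively replaced by $\min(\rmax,1.5)$.

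\emph{Case 4: $\rmin=\rmax=1$.} Every feasible radius assignment is the all-ones assignment, so $G(\mathcal{S},r)=G(\mathcal{S},\Unit)$. The instance is a yes-instance if and only if the input unit disk graph is $P_3$-free. This is checked in polynomial time, for instance by testing all vertex triples in $\BigO{n^3}$ time.

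These four cases cover all pairs, which gives the theorem. The main point requiring care is Case 3. There, $\rmin=1$ allows a ``scaled'' disk to keep radius $1$, so one must argue, as in \Cref{rem:cluster-hardness-enlarge}, that destroying a heavy $P_3$ forces some copy to have radius exceeding $1.5$. The argument is that the endpoint radii must sum to at least $\alpha>2$, and the budget forbids scaling all copies of both endpoints. Such a copy then restores the propagation property of \Cref{obs:cluster-hardness-enlarging-on-edge-strong}. If pressed, I would verify the distance inequalities $2\alpha'-\alpha<\beta-\cdot$ and $\alpha'+1<\beta-\cdot$ for the chosen $\mu$, where $\alpha'=\min(\rmax,1.5)$ is the effective lower radius.
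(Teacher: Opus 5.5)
Your proposal is correct and takes the same route as the paper. The theorem is assembled from \Cref{prop:cluster-hardness-shrink} ($\rmin<1$), \Cref{prop:cluster-hardness-enlarge} ($\rmin>1$), \Cref{rem:cluster-hardness-enlarge} ($\rmin=1<\rmax$), and the trivial $P_3$-freeness test for $\rmin=\rmax=1$, with $\rmax$ irrelevant in both reductions. One small wording fix in your Case~3 aside: a heavy $P_3$ forces every copy of the fully scaled endpoint to have radius at least $\alpha-1=\alpha'$, not ``exceeding $1.5$'' (impossible when $\rmax<1.5$), and this bound is exactly what the propagation and merging arguments need.
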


\section{Scaling To Complete Graphs in Polynomial Time}
\label{sec:complete}
We now turn our attention to complete graphs.
Let us fix an instance $\instance=\instanceLong$ of \Complete and consider a hypothetical solution $r$.
Since additional edges can only be created by enlarging disks, we can assume without loss of generality that $\rmax > 1$ (otherwise, it suffices to test if $G(\mathcal{S}, \Unit)$ is already complete) and that $r(p) = \rmax$ for all $p \in \mathcal{T}(r)$.
Observe that for any pair $p,q \in \mathcal{S}$ with $pq \notin E(G(\mathcal{S}, \Unit))$, at least one of $p$ and $q$ must be in $\mathcal{T}(r)$. 
We now partition the set $E' = \binom{\mathcal{S}}{2}$ of all point pairs based on their distance:

\begin{itemize}
\item $E_0' \coloneqq \{pq \in E' : \Dist{p - q} \leq 2\}$, i.e., pairs that already form an edge in $G(\mathcal{S}, \Unit)$;
\item $E_1' \coloneqq \{pq \in E' : 2 < \Dist{p - q} \leq 1 + \rmax\}$, i.e., pairs where scaling one disk suffices;
\item $E_2' \coloneqq \{pq \in E' : 1 + \rmax < \Dist{p - q} \leq 2\rmax\}$, i.e., pairs where we must scale both~disks;
\item $E_{\bot}' \coloneqq \{pq \in E' : 2\rmax < \Dist{p - q}\}$, i.e., pairs that can never intersect.
\end{itemize}

Let $\mathcal{S}[E_i'] \coloneqq \bigcup_{pq \in E_i'} \{p, q\}$ for $i = 0, 1, 2$.
Furthermore, define $\mathcal{T}' \coloneqq \mathcal{S}[E_2']$. %
We observe:
\begin{observation}
\label{obs:scale-complete-trivial-observations}
It holds $\mathcal{T}' \subseteq \mathcal{T}(r)$.
Also, \instance is a no-instance if $E_{\bot}' \neq \emptyset$ or  $\Size{\mathcal{T}'} > k$.
\end{observation}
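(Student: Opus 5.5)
For the first claim, $\mathcal{T}' \subseteq \mathcal{T}(r)$, I will fix a point $p \in \mathcal{T}'$. By the definition of $\mathcal{T}' = \mathcal{S}[E_2']$ there is some $q$ with $pq \in E_2'$, so $1 + \rmax < \Dist{p - q}$. Since $G(\mathcal{S}, r)$ is complete, $pq$ must be an edge, i.e.\ $r(p) + r(q) \geq \Dist{p - q} > 1 + \rmax$. Every radius is at most $\max(1, \rmax) = \rmax$, using the standing assumption $\rmax > 1$ (otherwise the instance is trivial). If $p \notin \mathcal{T}(r)$, then $r(p) = 1$ and $r(q) \leq \rmax$, which gives $r(p) + r(q) \leq 1 + \rmax$, a contradiction. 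Hence $p \in \mathcal{T}(r)$. By symmetry the same holds for $q$, so both endpoints of every $E_2'$ pair are scaled. No assumption that scaled disks receive exactly $\rmax$ is needed here, only the upper bound on all radii.

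For the no-instance claims I will argue by contraposition: suppose a solution $r$ exists. If some pair $pq \in E_\bot'$ existed, then $r(p) + r(q) \leq 2\rmax < \Dist{p - q}$, so $pq$ would not be an edge and $G(\mathcal{S}, r)$ would not be complete, a contradiction. Thus $E_\bot' = \emptyset$. Also, by the first claim, $\Size{\mathcal{T}'} \leq \Size{\mathcal{T}(r)} \leq k$. Hence if $E_\bot' \neq \emptyset$ or $\Size{\mathcal{T}'} > k$, no solution exists.

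There is no real obstacle in this argument. The only care needed is to bound every radius by $\rmax$, including unscaled ones (radius $1 < \rmax$), so that the inequalities hold without first normalising the solution to assign $\rmax$ to every scaled disk.
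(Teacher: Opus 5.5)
Your proof is correct. It matches the reasoning the paper leaves implicit in the definitions of $E_2'$ (pairs that require both endpoints to be scaled) and $E_\bot'$ (pairs that can never intersect), and, like the paper, it relies on the standing assumption $r_{\max} > 1$, under which every radius is at most $r_{\max}$. Not normalizing scaled disks to radius $r_{\max}$ is a harmless slight strengthening, since your argument then covers every solution $r$ directly.
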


While scaling all disks in $\mathcal{T}'$ is necessary, it is not yet sufficient.
In particular, for $pq \in E_1'$ with $p,q \notin \mathcal{T}'$, the disks do not form an edge in $G(\mathcal{S}, r)$ unless we also scale $p$ or $q$. %
Let $\mathcal{X} \coloneqq \mathcal{S}[E_1'] \setminus \mathcal{T}'$ and $n_{\mathcal{X}} \coloneqq \Size{\mathcal{X}}$.
Consider the unit disk graph $G(\mathcal{X}, \Unit)$.
Every non-edge $pq \notin E(G(\mathcal{X}, \Unit))$ corresponds to one such pair $pq \in E_1'$ where we must still scale $p$ or $q$. %
Hence, we must select (at least) one endpoint from every non-edge in $G(\mathcal{X}, \Unit)$, which corresponds to a vertex cover (of size at most $k' \coloneqq k - \Size{\mathcal{T}'}$) in the complement graph $\overline{G(\mathcal{X}, \Unit)}$.
Computing this vertex cover is equivalent to finding a clique (of size $n_{\mathcal{X}} - k'$) in $G(\mathcal{X}, \Unit)$.\onlyShort{\footnote{Although slower, it would also suffice to find a clique of size $n - k'$ in the disk graph $G(\mathcal{S}, r')$ with $r'(p) = \rmax$ for $p \in \mathcal{T}'$ and $r'(p) = 1$ otherwise, which we can do in $\BigO{n^{6.5}}$ time~\cite{KM.MCP.2025,HK.nAM.1973}.}}
\begin{figure}
\centering
\includegraphics[page=1]{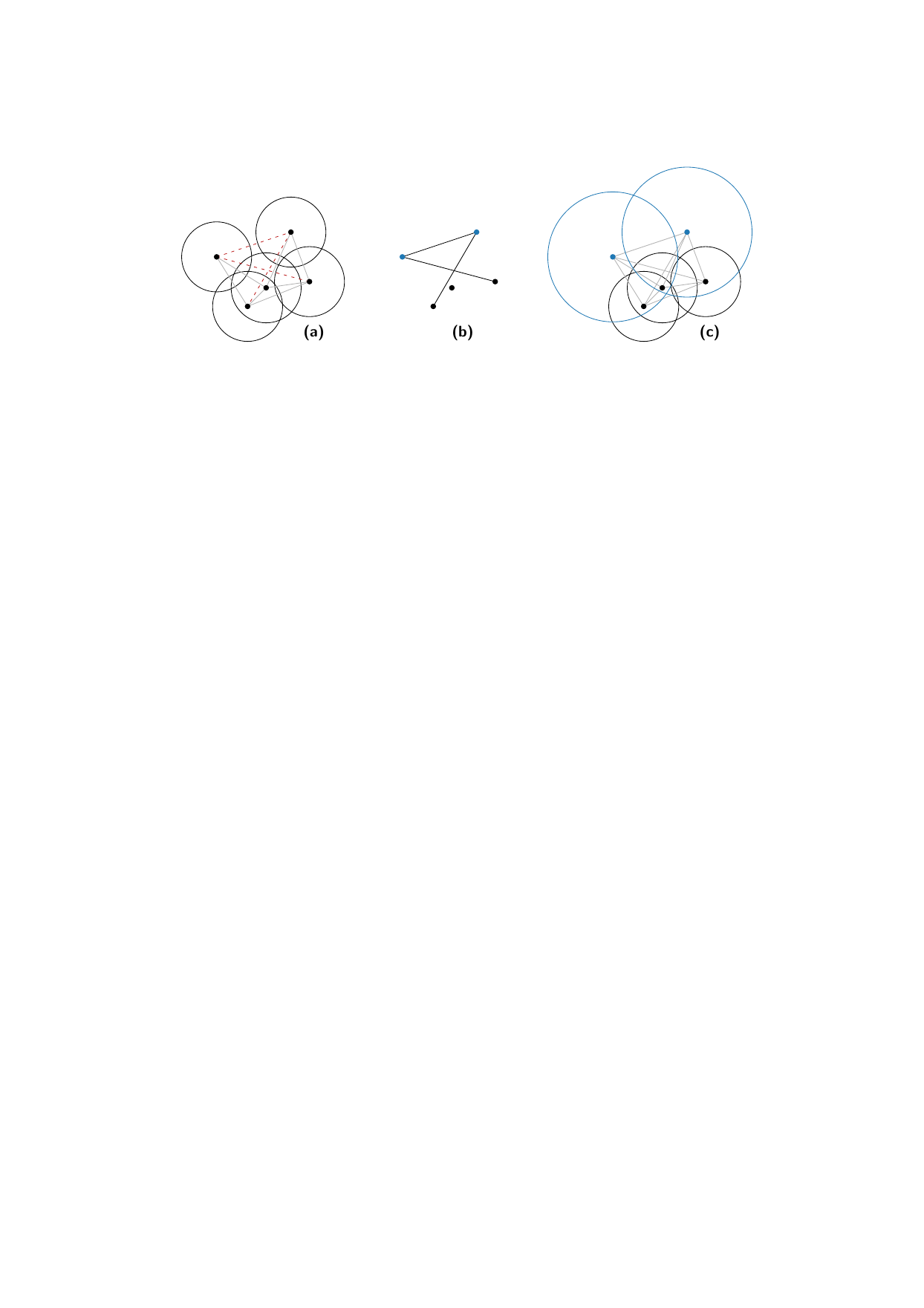}
\caption{
To turn \textbf{\textsf{(a)}} into a complete graph, we must scale at least one endpoint of every dashed non-edge. 
\textbf{\textsf{(b)}} A vertex cover (blue) of the complement graph \textbf{\textsf{(c)}} directly translates into a solution. 
}
\label{fig:complete-vertex-cover}
\end{figure}

\begin{restatable}\restateref{lem:scale-complete-clique}{lemma}{lemmaScalingCompleteClique}
\label{lem:scale-complete-clique}
\instance is a yes-instance if and only if %
\NewText{$E_{\bot}' = \emptyset$, $\Size{\mathcal{T}'} \leq k$, and}
$G(\mathcal{X}, \Unit)$ has a clique of size $n_{\mathcal{X}} - k'$.
\end{restatable}
\begin{prooflater}{plemmaScalingCompleteClique}
We argue both directions separately; see also \Cref{fig:complete-vertex-cover}.
Recall for the following arguments that 
$\mathcal{T}' = \mathcal{S}[E_2']$, 
$\mathcal{X} = \mathcal{S}[E_1'] \setminus \mathcal{T}'$,
$n_{\mathcal{X}} = \Size{\mathcal{X}}$, and $k' = k - \Size{\mathcal{T}'}$.

\proofsubparagraph*{($\boldsymbol{\Rightarrow}$)}
For the forward direction, we assume that \instance is a yes-instance, i.e., there exists a radius assignment $r$ such that $\Size{\mathcal{T}(r)} \leq k$ and $G(\mathcal{S}, r)$ is complete. %
Without loss of generality, we can assume that $\mathcal{T}(r)$ consists of exactly $k$ disks.
\NewText{This is because scaling up additional disks can never destroy a solution.
In particular, if $k > n$, then \instance is a yes instance if and only if $E_{\bot}' = \emptyset$ (in which case $\Size{\mathcal{T}'} \leq k$ trivially holds as $\Size{\mathcal{T}'} \leq n$).}
\NewText{By \Cref{obs:scale-complete-trivial-observations}, we have $E_{\bot}' = \emptyset$ and $\Size{\mathcal{T}'} \leq k$.}
\NewText{Assume $n_{\mathcal{X}} - k' > 0$; otherwise there is nothing to show.}
We now claim that the set $\mathcal{Y} \coloneqq \mathcal{X} \setminus \mathcal{T}(r)$ is a clique in $G(\mathcal{X}, \Unit)$ of size (at least) $n_{\mathcal{X}} - k'$.
The size of $\mathcal{Y}$ follows directly from its construction: 
By \Cref{obs:scale-complete-trivial-observations}, we know that $\mathcal{T}' \subseteq \mathcal{T}(r)$ holds.
Note that $\mathcal{X} \cap \mathcal{T}' = \emptyset$ by definition of $\mathcal{X}$, which implies $\mathcal{Y} = \mathcal{X} \setminus (\mathcal{T}(r) \setminus \mathcal{T}')$.
Since $\Size{\mathcal{T}(r)} = k$ and $k' + \Size{\mathcal{T}'} = k$, we have $\Size{(\mathcal{T}(r) \setminus \mathcal{T}')} = k'$.
Consequently, we obtain $\Size{\mathcal{Y}} \geq \Size{\mathcal{X}} - \Size{(\mathcal{T}(r) \setminus \mathcal{T}')} = n_{\mathcal{X}} - k'$.
Note that $\Size{\mathcal{Y}} > \Size{\mathcal{X}} - \Size{(\mathcal{T}(r) \setminus \mathcal{T}')}$ is possible due to disks $p \in \mathcal{T}(r)$ with $p \notin \mathcal{X} \cup \mathcal{T}'$, i.e., $p \in \mathcal{S}[E_0'] \setminus (\mathcal{S}[E_1'] \cup \mathcal{S}[E_2'])$.

We now show that $\mathcal{Y}$ is also a clique in $G(\mathcal{X}, \Unit)$.
Assume, for the sake of a contradiction, that there are two points $p,q \in \mathcal{Y}$ such that $pq$ is not an edge of $G(\mathcal{X}, \Unit)$.
We now observe that $2 < \Dist{p - q} \leq 1 + \rmax$ holds.
Otherwise, the disks $p$ and $q$ would either intersect in $G(\mathcal{S}, \Unit)$ or we have $p,q\in E_2'$, which implies $p,q\in\mathcal{T}'$ (and thus $p,q \notin\mathcal{X}$).
Both cases immediately yield a contradiction. %
As $pq$ is not an edge of $G(\mathcal{X}, \Unit)$, $p$ and $q$ do not intersect with radius one.
However, since $p,q \in \mathcal{Y}$, we have $p,q \notin \mathcal{T}(r)$, i.e., $r(p) = r(q) = 1$ in the solution.
This implies that they are not connected in $G(\mathcal{S}, r)$ either.
This is a contradiction to the assumption that $r$ is a solution.
Consequently, $\mathcal{Y}$ is a clique in $G(\mathcal{X}, \Unit)$.

\proofsubparagraph*{($\boldsymbol{\Leftarrow}$)}
For the backward direction, we assume \NewText{$E_{\bot}' = \emptyset$, $\Size{\mathcal{T}'} \leq k$, and} that there exists a clique~$\mathcal{Y}$ of size $\max(n_{\mathcal{X}} - k', 0)$ in $G(\mathcal{X}, \Unit)$.
We now show that the radius assignment $r$ with $r(p) = \rmax$ for all $p \in \mathcal{T}' \cup (\mathcal{X} \setminus \mathcal{Y})$ and $r(p) = 1$ otherwise is a solution. %
Regarding the size of $\mathcal{T}(r)$, we recall $\mathcal{T}' \cap \mathcal{X} = \emptyset$.
Consequently, we get $\Size{\mathcal{T}(r)} = \Size{\mathcal{T}'} + \Size{\mathcal{X} \setminus \mathcal{Y}} = (k - k') + (n_{\mathcal{X}} - (n_{\mathcal{X}} - k')) = k$.

We now show that $r$ is a solution and assume that this is not the case.
This means that there are two disks $p,q \in \mathcal{S}$ such that $pq$ is not an edge in $G(\mathcal{S}, r)$.
We now distinguish the following cases depending on the membership of $pq$ in the partition of $E'$. 
We immediately observe that \NewText{$pq \in E_0' \cup E_{\bot}'$} is not possible since otherwise $pq$ would trivially be an edge in $G(\mathcal{S}, r)$ or we would have rejected immediately using \Cref{obs:scale-complete-trivial-observations}, respectively\NewText{; recall that we assume $E_{\bot}' = \emptyset$.}
What remains are the cases $pq \in E'_1$ and $pq \in E'_2$.
Let us first analyze the latter case.
There, we have $p,q \in \mathcal{T}'$.
Thus, $r(p) = r(q) = \rmax$, i.e., we scale both disks, and since $\Dist{p - q} \leq 2\rmax$, $pq \notin E(G(\mathcal{S}, r))$ is not possible.
It remains to consider the case $pq \in E'_1$. %
By the definition of~$E_1'$, we have $pq \notin E(G(\mathcal{S},\Unit))$.
Since we have $pq \notin E(G(\mathcal{S},r))$ we get $p,q \notin \mathcal{T}(r)$.
Otherwise, i.e., if at least one disk would be in $\mathcal{T}(r)$, we get $pq \in E(G(\mathcal{S},r))$ by the definition of $E_1'$.
Moreover, as $p,q \notin \mathcal{T}(r)$, we also have $p,q \notin \mathcal{T}'$ and thus $p,q \in \mathcal{X}$.
However, $p,q \notin \mathcal{T}(r)$ implies $p,q \in \mathcal{Y}$ since we have $r(p) = \rmax$ for every $p \in \mathcal{T}' \cup (\mathcal{X} \setminus \mathcal{Y})$.
Now recall $pq \notin E(G(\mathcal{S},\Unit))$, which implies $pq \notin E(G(\mathcal{X},\Unit))$ either.
This means that $\mathcal{Y}$ is not a clique.
Since all cases led to a contradiction, we conclude that $r$ is a solution, i.e., \instance is a yes-instance.
\end{prooflater}
Using \Cref{lem:scale-complete-clique}, %
we can find the remaining $k'$ disks to scale via a sufficiently large clique in $G(\mathcal{X}, \Unit)$.
Computing a maximum clique in a unit disk graph with given geometric representation takes \BigO{n^{2.5} \log n} time~\cite{EKM.FMC.2023}.
All other steps \NewText{and the construction of the sets~$E_\bot'$ and $E_i'$, $i = 0,1,2$,} take $\BigO{n^2}$ time.
\onlyLong{Hence, \Complete can be solved in polynomial time.}
\begin{theorem}
\label{thm:complete}
\Complete can be solved in \BigO{n^{2.5} \log n} time.
\end{theorem}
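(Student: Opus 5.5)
The algorithm is the one suggested by \Cref{lem:scale-complete-clique}, so the proof reduces to an algorithm description plus a running-time count. Given $\instance=\instanceLong$, we first handle the trivial case $\rmax \le 1$: enlarging is impossible and shrinking never creates edges, so we only test in $\BigO{n^2}$ time whether $G(\mathcal{S},\Unit)$ is complete. Otherwise, by the argument preceding \Cref{obs:scale-complete-trivial-observations}, we may assume every scaled disk receives radius $\rmax$. This is safe because $G(\mathcal{S},r)$ is monotone in every radius, and a complete graph stays complete when we enlarge.

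Next, we compute all $\binom{n}{2}$ pairwise distances and classify each pair into $E_0'$, $E_1'$, $E_2'$, or $E_\bot'$, which takes $\BigO{n^2}$ time. From this we form $\mathcal{T}'=\mathcal{S}[E_2']$, $\mathcal{X}=\mathcal{S}[E_1']\setminus\mathcal{T}'$, $n_{\mathcal{X}}$, and $k'=k-\Size{\mathcal{T}'}$. If $E_\bot'\neq\emptyset$ or $\Size{\mathcal{T}'}>k$, we reject, which is correct by \Cref{obs:scale-complete-trivial-observations}. Otherwise we compute a maximum clique $\mathcal{Y}$ of the unit disk graph $G(\mathcal{X},\Unit)$. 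We have its geometric representation, namely the points of $\mathcal{X}$ with unit radii, so the algorithm of~\cite{EKM.FMC.2023} runs in $\BigO{n_{\mathcal{X}}^{2.5}\log n_{\mathcal{X}}}\subseteq\BigO{n^{2.5}\log n}$ time. We accept if and only if $\Size{\mathcal{Y}}\ge n_{\mathcal{X}}-k'$. This criterion is valid because a clique of size at least $n_{\mathcal{X}}-k'$ contains one of size exactly $\max(n_{\mathcal{X}}-k',0)$.

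Correctness is then immediate from \Cref{lem:scale-complete-clique}. For a witnessing solution, we read off the radius assignment from the backward direction of that lemma: $r(p)=\rmax$ for $p\in\mathcal{T}'\cup(\mathcal{X}\setminus\mathcal{Y})$ and $r(p)=1$ otherwise. If $\mathcal{Y}$ is larger than needed, scaling fewer disks only helps the budget, and the completeness argument of the lemma still applies verbatim because it only uses that $\mathcal{Y}$ is a clique.

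There is no real obstacle here; the only points needing care are edge cases. If $\mathcal{X}=\emptyset$, the empty clique suffices whenever $k'\ge 0$. The clique algorithm requires a geometric representation, which is exactly the given point set. Finally, the Real RAM assumption covers the comparisons of distances against $2$, $1+\rmax$, and $2\rmax$. Summing the $\BigO{n^2}$ preprocessing and the clique computation yields the claimed $\BigO{n^{2.5}\log n}$ bound.
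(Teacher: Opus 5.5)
Your proposal is correct and follows the paper's proof: treat $\rmax\le 1$ separately, then compute the partition $E_0',E_1',E_2',E_\bot'$ and the sets $\mathcal{T}'$ and $\mathcal{X}$ in $\BigO{n^2}$ time. You then reject via \Cref{obs:scale-complete-trivial-observations} and otherwise decide via \Cref{lem:scale-complete-clique} using the $\BigO{n^{2.5}\log n}$ maximum-clique algorithm for unit disk graphs of~\cite{EKM.FMC.2023}. Your additional care with the case $n_{\mathcal{X}}-k'\le 0$ and with extracting a witness from a possibly larger clique is sound.
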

\onlyLong{
Note that after determining the set $\mathcal{T}'$, it would also suffice to check for the existence of a clique of size $n - k'$ in the disk graph $G(\mathcal{S}, r')$ with $r'(p) = \rmax$ for $p \in \mathcal{T}'$ and $r(p) = 1$ otherwise.
Recently, an $\BigO{n^{6.5}}$-time algorithm for the \probname{Clique} problem on disk graphs with two different radii was established~\cite{KM.MCP.2025,HK.nAM.1973}. %
Although this yields a (conceptually) simpler polynomial-time algorithm, the running time of \Cref{thm:complete} supersedes it. 
}

\section{Scaling to Connected Graphs is \textsf{W}[1]-hard}
\label{sec:connected}
We now consider connected graphs and show that \Connected is \W[1]-hard when parameterized by the number $k$ of allowed scaling operations.
This rules out an \FPT-algorithm, and, in particular, generalizes the \W[1]-hardness by Fomin et al.~\cite{FG00Z25}:
Their hardness holds in the variant where only disks from a given set $\mathcal{A} \subseteq \mathcal{S}$ can be enlarged, i.e., $\mathcal{T}(r) \subseteq \mathcal{A}$ holds.
In contrast, our reduction allows any disk to be scaled, even to a fixed radius $\rmax$, which prevents a straightforward adaptation of Fomin et al.'s proof and instead requires a new construction.
\shortLong{W}{To this end, w}e reduce from the variant of \probname{Grid Tiling} introduced below.

Let \GridTiling[$\oplus$] be defined as follows.
Given two integers $\eta, \kappa \geq 1$, and a family $\mathcal{X}$ of $\kappa^2$ sets $X_{i,j} \subseteq [\eta]^2$, choose a tuple $x_{i,j} = (a, b) \in X_{i,j}$ for each $i,j\in[\kappa]$ such that if $x_{i,j} = (a,b)$ and $x_{i+1,j} = (a', b')$ then $a \oplus a'$, and if $x_{i,j+1} = (a'',b'')$ then $b \oplus b''$ for all (possible) $i,j \in [\kappa]$, i.e., the selected tuples fulfill the relation $\oplus$ on the first and second value, respectively.
\shortLong{We}{We let $\mathcal{Y} = \{x_{i,j} \in X_{i,j} : i,j \in [\kappa]\}$ denote the set of selected tuples, i.e., a \emph{solution}. Furthermore, we} call $X_{i,j}$ the \emph{tile} of the $i$th \emph{column} and $j$th \emph{row}; the deviation of the usual indexation facilitates the following description and is common for geometric reductions~\cite{CFK+.PA.2015}.
Two tiles $X_{i,j}$ and $X_{i',j'}$ are \emph{adjacent} if and only if either $\Size{i' - i} = 1$ or $\Size{j' - j} = 1$ (but not both). 
Textbook-variants use $\oplus \in \{=, \leq\}$ and are known to be \W[1]-hard parameterized by~$\kappa$~\cite[Theorems~14.28 and 14.30]{CFK+.PA.2015}.
One can also extend the hardness to the case where we use $>$ for $\oplus$.

\begin{restatable}\restateref{lem:our-grid-tiling-hardness}{lemma}{lemmaGridTilingHardness}
\label{lem:our-grid-tiling-hardness}
\GridTiling\ parameterized by $\kappa$ is \W\textup{[1]}-hard.
\end{restatable}
\begin{prooflater}{plemmaGridTilingHardness}
To establish \W[1]-hardness, we first reduce \GridTiling[$\leq$] to \GridTiling[$<$]; observe that in the latter problem variant we require strict inequality.
\begin{claim}
\label{claim:our-grid-tiling-hardness-intermediate}
\GridTiling[$<$] parameterized by $\kappa$ is \W[1]-hard.
\end{claim}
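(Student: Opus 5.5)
We reduce from \GridTiling[$\leq$], which is \W[1]-hard parameterized by $\kappa$~\cite[Theorem~14.30]{CFK+.PA.2015}. The reduction keeps $\kappa$ unchanged and only re-encodes the tuples. Each coordinate is shifted by the index of its column (for the first value) or row (for the second value). This turns a non-strict inequality between neighbouring tiles into a strict one.

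\textbf{Construction.} Let $(\eta, \kappa, \mathcal{X})$ be an instance of \GridTiling[$\leq$]. Set $\eta' \coloneqq (\kappa+1)\eta + \kappa$. For every $i,j \in [\kappa]$, define
\[
X'_{i,j} \coloneqq \{ ((\kappa+1)a + i,\ (\kappa+1)b + j) : (a,b) \in X_{i,j} \} \subseteq [\eta']^2 .
\]
The map $\varphi_{i,j}\colon (a,b) \mapsto ((\kappa+1)a+i, (\kappa+1)b+j)$ is a bijection from $X_{i,j}$ onto $X'_{i,j}$. The new instance has polynomial size, can be computed in polynomial time, and has the same parameter $\kappa$.

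\textbf{Key step.} The central fact is that for all $a,a' \in [\eta]$ and all $i \in [\kappa-1]$,
\[
a \leq a' \iff (\kappa+1)a + i < (\kappa+1)a' + i + 1 .
\]
For the forward direction, $a \leq a'$ gives $(\kappa+1)a+i \leq (\kappa+1)a'+i < (\kappa+1)a'+i+1$. For the converse, suppose $a > a'$, so $a \geq a'+1$ by integrality. Then $(\kappa+1)a + i \geq (\kappa+1)a' + \kappa + 1 + i > (\kappa+1)a' + i + 1$, so the strict inequality fails. The same argument with $j$ in place of $i$ handles the second coordinate between the tiles $X_{i,j}$ and $X_{i,j+1}$. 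Here only consecutive indices are ever compared, so a shift of $+1$ always suffices. Even the multiplier $2$ would do; we use $\kappa+1$ only for convenience.

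\textbf{Correctness.} A choice $x_{i,j} \in X_{i,j}$ satisfies all $\leq$-constraints between adjacent tiles if and only if the choice $\varphi_{i,j}(x_{i,j}) \in X'_{i,j}$ satisfies all $<$-constraints. Applying the bijection $\varphi_{i,j}$ tile-wise therefore maps solutions to solutions in both directions. This is a parameterized reduction, and \GridTiling[$<$] is \W[1]-hard parameterized by $\kappa$.

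We expect no real obstacle. The only step needing care is the integrality argument in the converse direction, which is exactly why the multiplier must be strictly larger than the index shift of~$1$.
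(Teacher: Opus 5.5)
Your reduction is correct and is essentially the paper's: the paper uses the map $(a,b)\mapsto(a\eta^2+i,\,b\eta^2+j)$ with $\eta'=\eta^3+\kappa$, while you use the multiplier $\kappa+1$, and the converse direction rests on the same integrality argument. One small correction to your closing remark: the multiplier need not be strictly larger than the shift, because for integers and any $m\geq 1$ we have $ma+i<ma'+i+1 \iff m(a-a')<1 \iff a\leq a'$, so even the plain shift $(a,b)\mapsto(a+i,\,b+j)$ would work.
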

\begin{claimproof}
Let $\instance_{\leq} = (\eta, \kappa,\mathcal{X} = \{X_{i,j} : i,j \in [\kappa]\})$ be an instance of \GridTiling[$\leq$].
We construct an instance $\instance_{<} = (\eta^3 + \kappa, \kappa,\mathcal{X}' = \{X_{i,j}' : i,j \in [\kappa]\})$ of \GridTiling[$<$] where $X_{i,j}' = \{(a\eta^2 + i, b\eta^2 + j) : (a, b) \in X_{i,j}\}$, for every $i,j \in [\kappa]$.
The construction can be carried out in polynomial time and the parameter value does not change.
It remains to show correctness.

For the forward direction ($\Rightarrow$), let the tuples $\mathcal{Y} = \{x_{i,j} : i,j \in [\kappa]\}$ be a solution of $\instance_{\leq}$.
We construct a solution $\mathcal{Y}'$ of $\instance_{<}$ that consists of the tuples $x_{i,j}' = (a\eta^2 + i, b\eta^2 + j)$ for every $x_{i,j} = (a,b) \in \mathcal{Y}$, for every $i,j \in [\kappa]$.
As $\mathcal{Y}$ is a solution of $\instance_{\leq}$, we have that $x_{i,j} = (a,b)$ and $x_{i+1,j} = (a', b')$ implies $a \leq a'$ for all $i \in [\kappa - 1], j \in [\kappa]$.
Consequently, for $x_{i,j}'$ and $x_{i+1,j}'$, it holds %
$a\eta^2 + i \leq a'\eta^2 + i < a'\eta^2 + i + 1$.
Similar arguments for the second value show that $\mathcal{Y}'$ is indeed a solution of $\instance_{<}$.
For the backward direction ($\Leftarrow$), let the tuples $\mathcal{Y}' = \{x_{i,j}' : i,j \in [\kappa]\}$ be a solution of $\instance_{<}$.
We construct a solution $\mathcal{Y}$ of $\instance_{\leq}$ that consists of the tuples $x_{i,j} = (a, b)$ for every $x_{i,j}' = (a\eta^2 + i, b\eta^2 + j) \in \mathcal{Y}'$, for every $i,j \in [\kappa]$.
As $\mathcal{Y}'$ is a solution of $\instance_{<}$, we have that \NewText{$x_{i,j}' = (a\eta^2 + i, b\eta^2 + j)$} and \NewText{$x_{i+1,j}' = (a'\eta^2 + i + 1, b'\eta^2 + j)$} implies $a\eta^2 + i < a'\eta^2 + i + 1$ for all $i \in [\kappa - 1], j \in [\kappa]$.
This can be reformulated to $\eta^2(a - a') < 1$, which is only true if $a \leq a'$ (as we can assume $\eta \geq 1$).
Similar arguments for the second value show that $\mathcal{Y}$ is indeed a solution of $\instance_{\leq}$.
\end{claimproof}

We now use \Cref{claim:our-grid-tiling-hardness-intermediate} to show the statement from the lemma.
Let $\instance_{<} = (\eta, \kappa,\mathcal{X} = \{X_{i,j} : i,j \in [\kappa]\})$ be an instance of \GridTiling[$<$].
We construct an instance $\instance_{>} = (\eta, \kappa,\mathcal{X}' = \{X_{i,j}' : i,j \in [\kappa]\})$ of \GridTiling where $X_{i,j}' = \{(\eta + 1 - a, \eta + 1 - b) : (a, b) \in X_{i,j}\}$, for every $i,j \in [\kappa]$.
The construction can be carried out in polynomial time and the parameter value does not change.
As before, it remains to show correctness.

For the forward direction ($\Rightarrow$), let the tuples $\mathcal{Y} = \{x_{i,j} : i,j \in [\kappa]\}$ be a solution of $\instance_{<}$.
We construct a solution $\mathcal{Y}'$ of $\instance_{>}$ that consists of the tuples $x_{i,j}' = (\eta + 1 - a, \eta + 1 - b)$ for every $x_{i,j} = (a,b)$, for every $i,j \in [\kappa]$.
As $\mathcal{Y}$ is a solution of $\instance_{<}$, we have that $x_{i,j} = (a,b)$ and $x_{i+1,j} = (a', b')$ implies $a < a'$ for all $i \in [\kappa - 1], j \in [\kappa]$.
Consequently, $-a > -a'$ and thus $\eta + 1 - a > \eta + 1 - a'$.
Analogous arguments for the second value show that $\mathcal{Y}'$ is indeed a solution to $\instance_{>}$.
The backward direction ($\Leftarrow$) follows from symmetric arguments.	
\end{prooflater}

\newcommand{\showConnectedBaseLayoutFigure}{
\begin{figure}
\centering
\includegraphics[page=1]{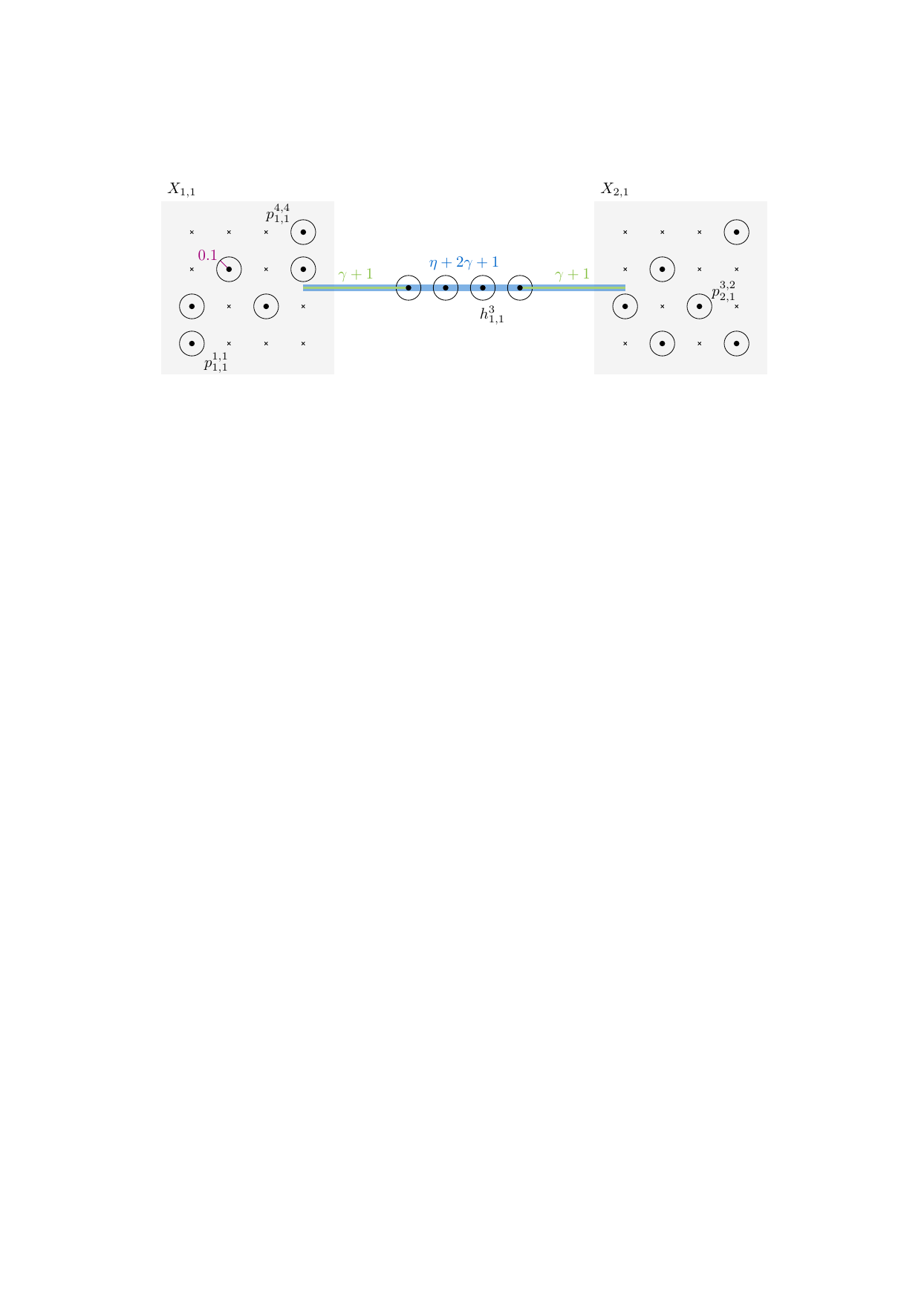}
\caption{
	Gadgets encoding the tiles and constraints. The disks \onlyLong{with gray background }represent the elements of \onlyLong{the tiles }$X_{1,1}$ and $X_{2,1}$\shortLong{, and}{. Moreover,} crosses indicate \onlyLong{potential }disk positions for elements not present in the respective tiles. 
}
\label{fig:connected-base-layout}
\end{figure}
}

\newcommand{\showConnectedDummyDisksFigure}{
\begin{figure}
\centering
\includegraphics[page=1]{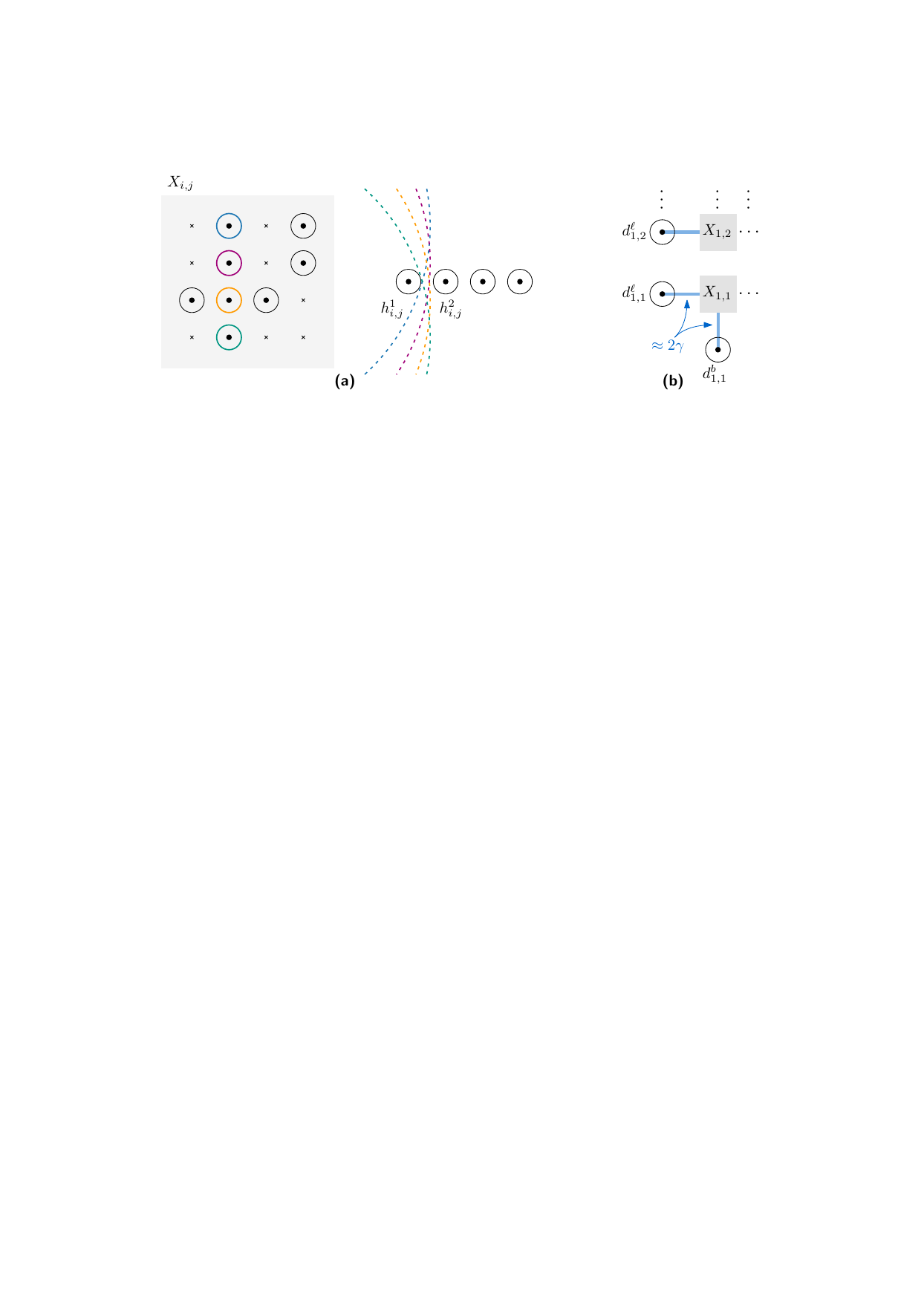}
\caption{
	\textbf{\textsf{(a)}} We must ensure \Cref{prop:connected-scale-h-disks} no matter which disk $p_{i,j}^{a,b}$ we scale, indicated by color, i.e., independent of the value of $a$ and $b$. 
	\textbf{\textsf{(b)}} The placement of the dummy disks.
}
\label{fig:connected-constraints}
\end{figure}
}

\shortLong{
\subsection{An Overview of the Construction}
\label{sec:connected-overview}
}{
\subparagraph*{An Overview of the Construction.}
We are now ready to describe the high-level idea of our reduction.
}
Let $\instance_{>} = (\eta,\kappa,\mathcal{X})$ be an instance of \GridTiling.
\NewText{We assume $\eta > 1$ and that $X_{i,j} \neq \emptyset$ for every $i,j \in [\kappa]$.}
The constructed instance \instance of \Connected closely resembles the (imagined) layout of $\instance_{>}$, i.e., contains a $\kappa \times \kappa$-grid of tiles, each containing a subset of the $\eta\times \eta$ grid.
To facilitate the following presentation, we rescale without loss of generality the plane such that \NewText{unit disks have radius~$0.1$}.
Our instance is composed of several gadgets of three different kinds.
\onlyShort{%
The first gadget represents a tile $X_{i,j} \in \mathcal{X}$ and contains a disk $p_{i,j}^{a,b}$ for every $(a,b) \in X_{i,j}$.
Scaling the disk~$p_{i,j}^{a,b}$ corresponds to choosing the tuple $x_{i,j} = (a,b)$.
The second gadget consists of a series of disks to ensure that the scaled disks, i.e., selected tuples, fulfill the $>$-constraints along the rows and columns of $\instance_{>}$.
Finally, the third gadget is a set of dummy disks that enforce that we scale one disk per tile.
}%
To preserve the equivalence between solutions, we ensure that our gadgets fulfill certain properties\onlyShort{ that we use in \Cref{sec:connected-correctness} to establish correctness}.

\shortLong{
\subparagraph*{Encoding the Tiles.}
Let $X_{i,j} \in \mathcal{X}$ be a tile.
We introduce one disk $p_{i,j}^{a,b}$ for every tuple $(a,b) \in X_{i,j}$ and set $x(p_{i,j}^{a,b}) = i(2\eta + 2\gamma) + a$ and $y(p_{i,j}^{a,b}) = j(2\eta + 2\gamma) + b$, where $\gamma$ is a \emph{displacement-factor} that we specify in the end.
Observe that we arrange the disks for each tile along an $\eta \times \eta$-integer grid and the tiles along a $\kappa \times\kappa$-grid.
We visualize this base layout in \Cref{fig:connected-base-layout} and provide further details in \Cref{sec:connected-base-layout}. 
Let $\mathcal{S}[X{i,j}]$ denote the disks representing the tuples of the tile $X_{i,j}$, $i,j \in[\kappa]$.
We observe that $G(\mathcal{S}[X{i,j}], \Unit)$ is edge-less; recall that one unit equals $0.1$.
To guarantee that our intended equivalence between the solutions is well-defined, we ensure that our final instance \instance has the following property.
}
{
The first gadget (\Cref{sec:connected-base-layout}) represents a tile $X_{i,j} \in \mathcal{X}$.
More specifically, we introduce one disk $p_{i,j}^{a,b}$ for every tuple $(a,b) \in X_{i,j}$ and place them in an $\eta \times \eta$-grid depending on the value of $i,j,a,b$.
This way, we arrange the disks for each tile along an $\eta \times \eta$-integer grid and the tiles along a $\kappa \times\kappa$-grid.
We visualize this base layout in \Cref{fig:connected-base-layout}. %
Let $\mathcal{S}[X_{i,j}]$ be the set of disks constructed for the tile $X_{i,j}$.
Scaling the disk $p_{i,j}^{a,b}$ should corresponding to choosing the tuple $x_{i,j} = (a,b)$: 
}
\begin{restatable}{property}{connectedScaleOnePerTileProperty}
\label{prop:connected-scale-one-per-set}
Let $r$ be a solution to \instance. We have $\Size{\mathcal{S}[X_{i,j}] \cap \mathcal{T}(r)} = 1$ for every $i,j \in [\kappa]$.
\end{restatable}
\medskip
\shortLong{
Moreover, to represent the (non-)adherence to the $>$-constraints, we ensure that two scaled disks $p_{i,j}^{a,b}$ and $p_{i',j'}^{a',b'}$ from different tiles intersect if and only if the corresponding tiles $X_{i,j}$ and $X_{i',j'}$ are adjacent and $a>a'$ (if $i' = i + 1$) or $b>b'$ (if $j' = j + 1$) holds:
}{
The second gadget (\Cref{sec:connected-adjacent-sets}) consists of a series of disks to ensure that the scaled disks, i.e., selected tuples, fulfill the $>$-constraints along the rows and columns of $\instance_{>}$.
More specifically, let $X_{i,j}$ and $X_{i + 1, j}$, $i \in [\kappa-1], j \in [\kappa]$ be two adjacent tiles.
We introduce $\eta$ disks $h_{i,j}^1, h_{i,j}^2, \ldots, h_{i,j}^{\eta}$ between the (disks for the) tiles $X_{i,j}$ and \NewText{$X_{i + 1,j}$}.
These disks are evenly spaced, maintaining a sufficient distance to the disks in the two tiles; see also \Cref{prop:connected-scale-h-disks,fig:connected-base-layout}. %
We introduce analogous disks $v_{i,j}^1, \ldots, v_{i,j}^{\eta}$ between the adjacent tiles $X_{i,j}$ and $X_{i, j + 1}$, $i \in [\kappa], j \in [\kappa-1]$.
Let $\mathcal{S}[>]$ be the above-introduced disks.
With our budget, we will ensure that we cannot scale any of these disks.
Thus, to ensure that the final disk graph is connected, we must scale disks in the adjacent tiles that fulfill the $>$-constraint; otherwise, at least one of the disks between the tiles remains isolated.
}
\begin{restatable}{property}{connectedScaleRelationsProperty}
\label{prop:connected-scale-relations}
Let $p_{i,j}^{a,b}, p_{i',j'}^{a',b'} \in \mathcal{S}$ be disks and $r$ a solution to \instance with $r(p_{i,j}^{a,b}) = \rmax$.
It holds:
\begin{enumerate}
\item if $i' = i$, $j' = j$, then the disks intersect;
\item if $i' = i + 1$, $j' = j$, then the disks intersect if and only if $p_{i',j'}^{a',b'} \in \mathcal{T}(r)$ and $a > a'$;
\item if $i' = i - 1$, $j' = j$, then the disks intersect if and only if $p_{i',j'}^{a',b'} \in \mathcal{T}(r)$ and $a' > a$;
\item if $i' = i$, $j' = j + 1$, then the disks intersect if and only if $p_{i',j'}^{a',b'} \in \mathcal{T}(r)$ and $b > b'$; 
\item if $i' = i$, $j' = j - 1$, then the disks intersect if and only if $p_{i',j'}^{a',b'} \in \mathcal{T}(r)$ and $b' > b$; and
\item if the tiles $X_{i,j}$ and $X_{i',j'}$ are not adjacent, then the disks do not intersect.
\end{enumerate} 
\end{restatable}
\medskip	
\shortLong{
\smallskip
Towards establishing \Cref{prop:connected-scale-one-per-set,prop:connected-scale-relations}, observe that %
the distance between any pair $p_{i,j}^{a,b}, p_{i,j}^{a',b'} \in \mathcal{S}[X_{i,j}]$ is at most $\sqrt{2}\eta$.
Hence, we ensure $\rmax \geq 2\eta$ \NewText{(for $\eta > 1$)}.
The distance between same-valued tuples in adjacent tiles is $2\eta+2\gamma$, and the corresponding disks should never intersect. %
Thus, we ensure $2\rmax<2\eta+2\gamma$.
This alone does not guarantee that~\instance admits only solutions $r$ that correspond to ones of $\instance_{>}$.
In particular, $G(\mathcal{S},r)$ could be connected although two scaled disks from adjacent tiles do not intersect.
We %
now prevent this.
\showConnectedBaseLayoutFigure

\subparagraph*{Ensuring the $\boldsymbol{>}$-Constraints.}
Let $X_{i,j}$ and $X_{i + 1, j}$, $i \in [\kappa-1], j \in [\kappa]$ be two adjacent tiles.
We introduce $\eta$ disks $h_{i,j}^1, h_{i,j}^2, \ldots, h_{i,j}^{\eta}$ between the (disks for the) tiles $X_{i,j}$ and $X_{i,j + 1}$.
These disks are evenly spaced, maintaining a distance of at least $\gamma+1$ to the disks in the two tiles; see \Cref{fig:connected-base-layout} and \Cref{sec:connected-adjacent-sets} for a formal construction.
We introduce analogous disks $v_{i,j}^1, \ldots, v_{i,j}^{\eta}$ between the adjacent tiles $X_{i,j}$ and $X_{i, j + 1}$, $i \in [\kappa], j \in [\kappa-1]$.
Let $\mathcal{S}[>]$ be the above-introduced disks.
Using the disks $\mathcal{S}[>]$, we aim to fulfill the following property.
}{
We furthermore aim to have the following property.
}
\begin{restatable}{property}{connectedScaleHDisksProperty}
\label{prop:connected-scale-h-disks}
Let $p_{i,j}^{a,b}$ be a disk and let $r$ be a solution to \instance such that $r(p_{i,j}^{a,b}) = \rmax$ and $\mathcal{S}[>]\cap\mathcal{T}(r)=\emptyset$.
For every $q, t \in [\eta]$, it holds:
\begin{enumerate}
\item if $1 \leq i < \kappa$, then $p_{i,j}^{a,b}$ and $h_{i,j}^q$ intersect if and only if $q < a$;
\item if $1 < i \leq \kappa$ then $p_{i,j}^{a,b}$ and $h_{i-1,j}^q$ intersect if and only if $q > a$;
\item if $1 \leq j < \kappa$, then $p_{i,j}^{a,b}$ and $v_{i,j}^t$ intersect if and only if $t < b$; and
\item if $1 < j \leq \kappa$, then $p_{i,j}^{a,b}$ and $v_{i,j-1}^t$ intersect if and only if $t > b$.
\end{enumerate}
\end{restatable}

\medskip	

\shortLong{
Assume for a moment $p_{i,j}^{a,b}, p_{i+1,j}^{a',b'} \in \mathcal{T}(r)$ with $a \leq a'$.
Neither of $p_{i,j}^{a,b}$ and $p_{i+1,j}^{a',b'}$ intersect~$h_{i,j}^q$ for $a \leq q \leq a'$ by \Cref{prop:connected-scale-h-disks}.
Combined with \Cref{prop:connected-scale-one-per-set} and an appropriately-chosen budget $k$, these $h_{i,j}^q$ witness that $G(\mathcal{S}, r)$ is not connected.
Observe that $x_{i,j} = (a,b)$ and $x_{i+1,j} = (a', b')$ with $a \leq a'$ can also not be part of a solution to $\instance_{>}$. %
Fulfilling \Cref{prop:connected-scale-h-disks} heavily depends on our values for $\gamma$, i.e., the displacement-factor, and $\rmax$.
In particular, we must select $\gamma$ such that the property is fulfilled no matter which disks $p_{i,j}^{a,b}$ we choose to scale, i.e., we must compensate the possible value of $a$ and $b$; see also \Cref{fig:connected-constraints}a.
Since $a,b\in [\eta]$, we can set $\gamma = \eta^2$.
Using geometry, we can compute the value for $\rmax$. %

\showConnectedDummyDisksFigure

\subparagraph*{The Dummy Disks.}
The above-introduced additional disks make establishing \Cref{prop:connected-scale-one-per-set} difficult.
In particular, we must ensure that no disk $p \in \mathcal{S}[>]$ is scaled. %
We introduce $4\kappa$ \emph{dummy disks} $d_{i,j}$, (at least) one for every tile $X_{i,j}$ with $i \in \{1, \kappa\}$ or $j \in \{1, \kappa\}$.
They are placed around the overall $\kappa\times\kappa$-grid, maintaining a distance of (roughly) $2\gamma$ to the base layout; see \Cref{fig:connected-constraints}b and \Cref{sec:connected-dummy-points}.
Due to their placement, %
we have the following.%
}{
The purpose of the third gadget (\Cref{sec:connected-dummy-points}) is to ensure that we only scale disks from tiles.
To enforce this, we introduce a set of dummy disks.
More specifically, we introduce~$4\kappa$ \emph{dummy disks} $d_{i,j}$, that is, at least one for every tile $X_{i,j}$ with $i \in \{1, \kappa\}$ or $j \in \{1, \kappa\}$.
They are placed around the overall $\kappa\times\kappa$-grid, maintaining a distance of more than $\rmax$ to the base layout; see also the example from \Cref{fig:connected-example}.
Due to their distance from the tiles, we must scale each dummy disk and at least one disk from each tile to ensure that the graph is connected, as also summarized with the following property.
}
\begin{restatable}{property}{connectedDummyDisksProperty}
\label{prop:connected-dummy-disks}
Let $r$ be a solution to \instance, and let $d_{i,j}$ be a dummy disk.
We have $d_{i,j}, p_{i,j}^{a,b} \in \mathcal{T}(r)$ for some tuple $(a,b) \in X_{i,j}$.
\end{restatable}
\medskip
\shortLong{
\Cref{prop:connected-dummy-disks} enforces us to scale each of the $4\kappa$ dummy disks.
Together with our intended semantic (recall also \Cref{prop:connected-scale-one-per-set}), we thus set $k = 4 \kappa + \kappa^2$.
We can set $\rmin = 0.1$.
}{
To complete the reduction, we compute a value for \rmax\ such that all of the above properties are fulfilled, set $\rmin=0.1$, and $k = 4\kappa + \kappa^2$.
}

\begin{statelater}{connectedConstructionDetails}
\subsection{Encoding the Tiles $\boldsymbol{X_{i,j}}$: The Base Layout}
\label{sec:connected-base-layout}
Let $X_{i,j} \in \mathcal{X}$, $i,j\in [\kappa]$ be a tile of the family $\mathcal{X}$.
Recall that we have $X_{i,j} \subseteq [\eta]^2$ for each $i, j \in [\kappa]$.
We introduce one point $p_{i,j}^{a,b}$ in $\mathcal{S}$ for every tuple $(a,b) \in X_{i,j}$ and place it at 
\begin{align*}
x(p_{i,j}^{a,b}) = i(2\eta + 2\gamma) + a\quad\quad\text{and}\quad\quad
y(p_{i,j}^{a,b}) = j(2\eta + 2\gamma) + b,
\end{align*}
where $\gamma$ is an instance-dependent \emph{displacement factor} that we fix towards the end of the construction. %
We let $\mathcal{S}[X_{i,j}] \subseteq \mathcal{S}$ denote the points corresponding to tuples from the tile $X_{i,j}$ for every $i,j \in [\kappa]$; see \Cref{fig:connected-base-layout} for a visualization of the construction.
As $a, b \in \mathbb{N}$, we make the following observation that we will later use in our correctness arguments; recall that we \NewText{scaled the instance down by a factor of 10.} %
\begin{observation}
\label{obs:grid-tiling-edgeless}
We have $E(G(\mathcal{S}[X_{i,j}], \Unit)) = \emptyset$ for every $i,j\in[\kappa]$.
\end{observation}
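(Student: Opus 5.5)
The statement to prove is \Cref{obs:grid-tiling-edgeless}: for every $i,j\in[\kappa]$, the unit disk graph $G(\mathcal{S}[X_{i,j}],\Unit)$ has no edges. I would prove it by a direct distance computation, using only the coordinates of the base layout and the rescaling that sets the unit radius to $0.1$.

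First, take two distinct points $p_{i,j}^{a,b}$ and $p_{i,j}^{a',b'}$ in $\mathcal{S}[X_{i,j}]$. They come from distinct tuples $(a,b)\neq(a',b')$ of $X_{i,j}$. Their coordinates are
\[
x = i(2\eta+2\gamma)+a \quad\text{and}\quad y = j(2\eta+2\gamma)+b,
\]
with the same $i$ and $j$ for both points. The common offsets $i(2\eta+2\gamma)$ and $j(2\eta+2\gamma)$ therefore cancel. This gives
\[
\Dist{p_{i,j}^{a,b}-p_{i,j}^{a',b'}}=\sqrt{(a-a')^2+(b-b')^2}.
\]
Since $a,a',b,b'\in\mathbb{N}$ and the tuples differ, at least one coordinate difference is a nonzero integer. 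Hence this distance is at least $1$.

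Second, recall that the plane was rescaled so that an unscaled disk has radius $0.1$. Two unscaled disks are adjacent in $G(\cdot,\Unit)$ if and only if their centres are within $0.1+0.1=0.2$ of each other. Since $1>0.2$, no pair of points of $\mathcal{S}[X_{i,j}]$ is adjacent, so $E(G(\mathcal{S}[X_{i,j}],\Unit))=\emptyset$.

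No step here is a real obstacle. The only point to get right is the scale convention, namely that the relevant threshold is $0.2$ rather than $2$; the integrality of $a$ and $b$ then gives the required separation at once. The argument does not depend on the yet-unfixed displacement factor $\gamma$, because $\gamma$ only enters through the offsets, which cancel.
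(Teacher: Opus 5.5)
Your proof is correct and follows the paper's own justification: distinct tuples in a tile have integer coordinate differences, so their points are at distance at least $1$. Under the rescaling, unscaled disks have radius $0.1$, so the adjacency threshold of $0.2$ is never reached.
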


\onlyLong{\showConnectedBaseLayoutFigure}

We will use the following equivalence between a solution $\mathcal{Y}$ to $\instance_{>}$ and a solution $r$ to our constructed instance of \Connected.
\begin{align}
\label{eq:connected-equivalence}
x_{i,j} = (a,b) \in \mathcal{Y} \Longleftrightarrow p_{i,j}^{a,b} \in \mathcal{T}(r)
\end{align}
To use Equivalence~\ref{eq:connected-equivalence}, we will ensure that our construction has the following property.%
\connectedScaleOnePerTileProperty*

Towards realizing \Cref{prop:connected-scale-one-per-set}, we will ensure that $G(\mathcal{S}[X_{i,j}, r])$ is connected whenever $r$ scales a single disk from $\mathcal{S}[X_{i,j}]$.
To this end, observe that $\Dist{p_{i,j}^{a,b} - p_{i,j}^{a',b'}} \leq \sqrt{2}\eta$ holds for every $(a,b), (a',b') \in X_{i,j}$, for all $i,j \in [\kappa]$.
Consequently, we will set $\rmax$ to a value such that $\rmax \geq \sqrt{2}\eta$ holds.

Of course, with this alone our construction does not yet fulfill \Cref{prop:connected-scale-one-per-set}.
In particular, $\rmax \geq \sqrt{2}\eta$ alone not even guarantees the existence of any solution at all.
However, we will show in \Cref{sec:connected-correctness} that our finished construction fulfills \Cref{prop:connected-scale-one-per-set} (and admits a solution if and only if $\instance_{>}$ does).
Moreover, our constructed instance fulfills the following property, as we will show at the end of the next section.
\connectedScaleRelationsProperty*

\medskip

Points 3.\ and 5.\ are equivalent to points 2.\ and 4., respectively, if $p_{i,j}^{a,b}$ takes the role of~$p_{i',j'}^{a', b'}$ and vice-versa.

\subsection{Encoding the Value Constraints: Connecting Adjacent Sets}
\label{sec:connected-adjacent-sets}
Until now, the instance consists of $\BigO{\kappa^2\eta^2}$ disks arranged in $\kappa^2$ grids, which itself are arranged as a $\kappa \times \kappa$ grid.
With this base layout in place, we now ensure that we only scale disks $p_{i,j}^{a,b}$ and \NewText{$p_{i+1,j}^{a',b'}$} such that $a > a'$ for every column $i \in [\kappa - 1]$ (and likewise for every row $j \in [\kappa - 1]$).
In the following, we focus on modeling these constraints along the columns of $\instance_{>}$; the procedure for the rows is analogous.

Let $X_{i,j}, X_{i+1,j} \in \mathcal{X}$, $i \in [\kappa - 1]$, $j \in [\kappa]$ be two tiles of $\instance_{>}$.
Recall our overall idea, which consisted of placing $\eta$ evenly spaced disks $h_{i,j}^1, h_{i,j}^2, \ldots, h_{i,j}^{\eta}$ between the (grids of) disks from $\mathcal{S}[X_{i,j}]$ and $\mathcal{S}[X_{i + 1,j}]$ such that at least one of them remains disconnected if we violate a $>$-constraint.
More formally, our construction will satisfy the following property:	\connectedScaleHDisksProperty*

\medskip
We place the disk $h_{i,j}^q$, $q \in [\eta]$, at the following coordinates.
Observe that they are placed on a horizontal line, i.e., have identical $y$-coordinates; see also \Cref{fig:connected-base-layout}.
\begin{align*}
x(h_{i,j}^q) = i(2\eta + 2\gamma) + \eta+\gamma + q\quad\quad\text{and}\quad\quad
y(h_{i,j}^{q}) = j(2\eta + 2\gamma) + (\eta+1)/2
\end{align*}
We let $\mathcal{S}[h_{i,j}] = \{h_{i,j}^q : q \in [\eta]\}$ denote the above-introduced disks.

It is now time to find a value for $\gamma$.%
\onlyLong{\showConnectedDummyDisksFigure}
See \Cref{fig:connected-constraints}a for a visualization of the following discussion.
We first observe that the disk $p_{i,j}^{a,b}$ should intersect with all disks $h_{i,j}^q$ for $q < a$ if the former is scaled to \rmax.
This allows us to define first a value for \rmax, which we later use to specify $\gamma$.
We now aim to choose $\rmax$ such that the scaled disk ends near the midpoint between $h_{i,j}^{a - 1}$ and $h_{i,j}^a$.
For $a > 1$, we observe that the ``most extreme'' case corresponds to $q = a - 1$ (for $a = 1$, there is nothing to observe as there exists no disk $h_{i,j}^q$ for $q \leq 0$).
In this case, the \emph{$x$-distance} $d_x$ between $p_{i,j}^{a,b}$ and the \emph{midpoint} between $h_{i,j}^{a - 1}$ and~$h_{i,j}^{a}$ corresponds to $d_x = i(2\eta + 2\gamma) + \eta + \gamma + a - 0.5 - (i(2\eta + 2\gamma) + a) = \eta + \gamma - 0.5$.
The $y$-distance $d_y$ is independent of the choice of $a$ and $q$ and equals $d_y = \frac{\eta - 1}{2}$ at most (e.g., for $b = 1$).
Thus, we can set $\rmax = \sqrt{{d_x}^2 + {d_y}^2}$ and observe that we intersect in this case all disks $h_{i,j}^q$ with $q < a$ since we enclose the center of the disk $h_{i,j}^{a - 1}$.
Recall that we must not intersect the disk~$h_{i,j}^{a}$.
This constraint allows us to define a value for $\gamma$.
In particular, the distance $d$ between $p_{i,j}^{a,b}$ and $h_{i,j}^{a}$ is minimized if $y(p_{i,j}^{a,b}) = y(h_{i,j}^{a})$, which is the case for $b = \frac{(\eta + 1)}{2}$ (if this value exists).
In this case, the distance equals the $x$-distance between these points, which equals $d = i(2\eta + 2\gamma) + \eta+\gamma + a - (i(2\eta + 2\gamma) + a) = \eta + \gamma$.
To ensure that~$p_{i,j}^{a,b}$ and $h_{i,j}^{a}$ do not intersect (if only one of them is scaled), we must have $\rmax + 0.1 < d$.
We obtain,
\begin{align*}
{(\eta + \gamma - 0.5)}^2 + \left(\frac{\eta - 1}{2}\right)^2 &< (\eta + \gamma - 0.1)^2\\
\frac{\eta^2 - 2\eta + 1}{4} -0.8 \eta + 0.5^2 - 0.1^2 &< 0.8 \gamma\\
\frac{\eta^2 - 2\eta + 1}{4} -0.8 \eta + \frac{3}{4}&<0.8\gamma,
\end{align*}
where we increase for the last line the left side (as $0.5^2 - 0.1^2 < 3/4$) of the inequality.
This operation is safe since if we satisfy the inequalities in this case, it is also satisfied without this simplification.
We now set $\gamma = \eta^2$ and observe
\begin{align*}
\frac{\eta^2 - 2\eta + 1}{4} -0.8 \eta + \frac{3}{4}&<0.8\gamma\\
\frac{\eta^2 - 2\eta + 1}{4} -0.8 \eta + \frac{3}{4}&<0.8\eta^2\\
\frac{\eta^2 - 2\eta}{4} -0.8 \eta + 1&<0.8\eta^2\\
-2.2\eta^2 - 5.2\eta &<-4,
\end{align*}
which is true for all $\eta \geq 1$.
Hence, the desired inequality is achieved for $\gamma = \eta^2$.
Note that this implies $\rmax \geq 2\eta$, and hence $\rmax \geq \sqrt{2}\eta$ (for \NewText{ $\eta \geq 1$, which we assume}).

The construction of the points $v_{i,j}^q$ for $q \in [\eta]$ and two tiles $X_{i,j}, X_{i,j+1} \in \mathcal{X}$, $i \in [\kappa]$, $j \in [\kappa - 1]$ is symmetrical.
Similarly, we define $\mathcal{S}[v_{i,j}] = \{v_{i,j}^q : q \in [\eta]\}$.

\NewText{
Observe that if we do not scale any disk from $\mathcal{S}[X_{i,j}]$, then the disk $h_{i,j}^{1}$ remains isolated unless we scale a disk from $\mathcal{S}[h_{i,j}]$: By our choice of $\rmax$, scaling a disk from $\mathcal{S}[X_{i+1,j}]$ does not help and the disks $\mathcal{S}[v_{i,j}]$ are sufficiently far away from $\mathcal{S}[h_{i,j}]$.
We can make symmetric arguments for $h_{i-1,j}^{\eta}$ and $v_{i,j}^{1}$, and $v_{i,j-1}^{\eta}$ (if these disks exist), which leads to the following observation.
\begin{observation}
	\label{obs:connected-min-scale-tile}
	Let $r$ be a solution to \instance with $\mathcal{S}[X_{i,j}] \cap \mathcal{T}(r) = \emptyset$ for some $i, j \in [\kappa]$.
	Then, $\mathcal{T}(r)$ contains at least one disk from each of the sets $\mathcal{S}[h_{i,j}]$,  $\mathcal{S}[h_{i-1,j}]$, $\mathcal{S}[v_{i,j}]$, and $\mathcal{S}[v_{i,j-1}]$ (that exist).
\end{observation}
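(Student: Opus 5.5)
We argue by isolating a specific disk of $\mathcal{S}[>]$ and using that $G(\mathcal{S},r)$ must be connected, so every disk has at least one neighbor. Fix a solution $r$ with $\mathcal{S}[X_{i,j}]\cap\mathcal{T}(r)=\emptyset$ and consider the set $\mathcal{S}[h_{i,j}]$ (assuming $i<\kappa$); the other three sets are handled symmetrically, using $h_{i-1,j}^{\eta}$, $v_{i,j}^{1}$ and $v_{i,j-1}^{\eta}$ in place of $h_{i,j}^{1}$. Suppose for contradiction that $\mathcal{S}[h_{i,j}]\cap\mathcal{T}(r)=\emptyset$. We will show that $h_{i,j}^1$ then has no neighbor in $G(\mathcal{S},r)$. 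Since $h_{i,j}^1$ has radius $0.1$, a disk $p$ can only be adjacent to it if $\Dist{p-h_{i,j}^1}\le r(p)+0.1$, so it suffices to check all candidate disks by type.

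First, unscaled disks: every pair of distinct disks in the construction has distance larger than $0.2$ unless they are consecutive $h$-disks. Consecutive $h_{i,j}^q$ are at distance exactly one, i.e.\ ten units, so they do not intersect; other unscaled disks ($v$-disks, dummies, tile disks) are even farther. Hence $h_{i,j}^1$ can only be hit by a scaled disk of radius at most $\rmax$. Scaled $h$-disks of $\mathcal{S}[h_{i,j}]$ are excluded by assumption, and scaled disks of $\mathcal{S}[X_{i,j}]$ are excluded by hypothesis. Thus we must rule out scaled disks from $\mathcal{S}[X_{i+1,j}]$, from other $h$/$v$ sets, from farther tiles, and dummy disks.

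For $p_{i+1,j}^{a',b'}$ scaled to at most $\rmax$: by the computation of Property~\ref{prop:connected-scale-h-disks}(2) (which relies only on the geometry, and is exactly the ``if and only if $q>a'$'' statement), with $q=1$ no $a'\ge1$ satisfies $1>a'$, so they do not intersect; concretely, the $x$-distance is $\eta+\gamma+a'-1\ge \eta+\gamma$, which exceeds $\rmax+0.1$ by the choice $\gamma=\eta^2$. For disks of $\mathcal{S}[v_{i,j}]$, $\mathcal{S}[v_{i+1,j}]$, $\mathcal{S}[h_{i,j\pm1}]$ and other tiles, I would bound distances from below: these are separated from $h_{i,j}^1$ by at least roughly $\eta+\gamma$ in one coordinate, again exceeding $\rmax+0.1=\sqrt{(\eta+\gamma-0.5)^2+((\eta-1)/2)^2}+0.1$. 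Dummy disks lie outside the grid at distance about $2\gamma$ from the base layout, hence also too far. So $h_{i,j}^1$ is isolated, contradicting connectivity (note $\Size{\mathcal{S}}>1$).

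The main obstacle is the case of scaled disks from $\mathcal{S}[v_{\cdot,\cdot}]$ and neighboring $h$-sets: their exact coordinates are only given ``symmetrically'', so I would explicitly write $v_{i,j}^t$ at $x=i(2\eta+2\gamma)+(\eta+1)/2$, $y=j(2\eta+2\gamma)+\eta+\gamma+t$, and verify that the distance to $h_{i,j}^1$ is at least about $\sqrt2(\gamma+\eta/2)$, which exceeds $\rmax+0.1\approx\eta+\gamma$ for $\gamma=\eta^2$ and $\eta>1$.
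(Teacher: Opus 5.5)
Your proof is correct and follows the paper's own argument. If no disk of $\mathcal{S}[X_{i,j}]\cup\mathcal{S}[h_{i,j}]$ is scaled, then $h_{i,j}^1$ (and symmetrically $h_{i-1,j}^{\eta}$, $v_{i,j}^{1}$, $v_{i,j-1}^{\eta}$) is isolated, because scaled disks of $\mathcal{S}[X_{i+1,j}]$ stay at distance at least $\eta+\gamma>\rmax+0.1$ and every other disk (e.g.\ those of $\mathcal{S}[v_{i,j}]$, at distance at least $\sqrt{2}(\gamma+(\eta+1)/2)$) is even farther; your case check is more exhaustive than the paper's, which only mentions $\mathcal{S}[X_{i+1,j}]$ and $\mathcal{S}[v_{i,j}]$ explicitly.
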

}

\subsection{Ensuring That We Scale One Disk Per Tile: The Dummy Disks}
\label{sec:connected-dummy-points}
In this section, we introduce $4\kappa$ \emph{dummy disks} that facilitate the correctness arguments that we will use in \Cref{sec:connected-correctness}.
More concretely, we will use these disks to ensure that we scale at least one disk per tile that we introduced in \Cref{sec:connected-base-layout}.

To this end, let $X_{1,j}$, \NewText{$j \in [\kappa]$} be a tile.
We introduce the dummy disk $d_{1,j}^{\ell}$ at
\begin{align*}
x(d_{1,j}^{\ell}) = 2\eta \quad\quad\text{and}\quad\quad
y(d_{1,j}^{\ell}) = j(2\eta + 2\gamma) + (\eta+1)/2.
\end{align*}
Note that $d_{1,j}^{\ell}$ is (roughly) $2\gamma = 2\eta^2$ to the left of the disks representing tuples in $X_{1,j}$.
Similarly, we introduce the dummy disk $d_{\kappa,j}^{r}$ located (roughly) $2\eta^2$ to the right of the disks representing tuples in $X_{\kappa,j}$, i.e., located at
\begin{align*}
x(d_{\kappa,j}^{r}) = \kappa(2\eta + 2\gamma) + \eta + 2\eta^2\quad\quad\text{and}\quad\quad
y(d_{\kappa,j}^{r}) = j(2\eta + 2\gamma) + (\eta+1)/2.
\end{align*}
The dummy disks $d_{i,1}^{t}$ and $d_{i,\kappa}^{b}$ which are (roughly) $2\eta^2$ to the top and bottom of the tiles $X_{i,1}$ and $X_{i,\kappa}$, \NewText{$i \in [\kappa]$}, respectively, are constructed analogously; see also \Cref{fig:connected-constraints}b.
In the following, we address a dummy disk by the tile $X_{i,j}$ it is placed next to, if, there is no risk of confusion.

Due to their distance to the remaining disks and the value for \rmax\ specified in \Cref{sec:connected-adjacent-sets}, our instance has the following property (see also \Cref{lem:connected-dummy-disks} for the proof that this is the case).
\connectedDummyDisksProperty*

As a consequence of \Cref{prop:connected-dummy-disks}, we must scale each of the $4\kappa$ dummy disks.
Together with our intended semantic (recall \Cref{eq:connected-equivalence}), we thus set $k = 4 \kappa + \kappa^2$.

\end{statelater}

\subsection{Combining the Gadgets: Correctness of the Reduction}
\label{sec:connected-correctness}
\shortLong{
\Cref{fig:connected-example} shows a small example of our reduction.
We now show that an above-constructed instance $\instance$ fulfills our properties, which we later use to argue correctness of the reduction.
}{%
In this section, we establish the correctness of our reduction and begin by briefly summarizing it; see also \Cref{fig:connected-example} for a small example of the reduction.
For a given instance $\instance_{>} = (\eta,\kappa,\mathcal{X})$ of \GridTiling, we created for each tile $X_{i,j}$ precisely $\Size{X_{i,j}}$-many disks $p_{i,j}^{a,b}$, each representing a tuple $(a,b) \in X_{i,j}$.
We arranged disks from the same tile in an $\eta \times \eta$-grid, and the tiles themselves in a $\kappa\times\kappa$-grid. 
Furthermore, we introduced between every pair of adjacent tiles a sequence of $\eta$ evenly spaced horizontal or vertical disks $h_{i,j}^q$ and $v_{i,j}^q$, $q \in [\eta]$, respectively, to ensure that the scaled disks adhere to the $>$-constraints.
Finally, we place $4\kappa$ dummy disks around the so-far placed disks to enforce that we must scale at least one disk from every set $\mathcal{S}[X_{i,j}]$ with $i \in \{1, \kappa\}$ or $j \in \{1, \kappa\}$.
Overall, we obtain an instance $\instance = \instanceLong$ of \Connected, where we set $\rmin = 0.1$ and recall $\rmax = \sqrt{(\eta + \eta^2 - 0.5)^2 + (\frac{\eta - 1}{2})^2}$.
For the following arguments, recall that we \NewText{scaled the instance down by a factor of 10.} %

Our correctness arguments in \Cref{thm:connected-w1-hard} heavily use \Cref{prop:connected-scale-one-per-set,prop:connected-scale-relations,prop:connected-scale-h-disks,prop:connected-dummy-disks}.
Thus, on our way to establish \Cref{thm:connected-w1-hard}, we first show that our instance \instance fulfills these properties.
We start with \Cref{prop:connected-dummy-disks}, since it will be used as a building block later.
}

\begin{figure}
\centering
\includegraphics[page=1]{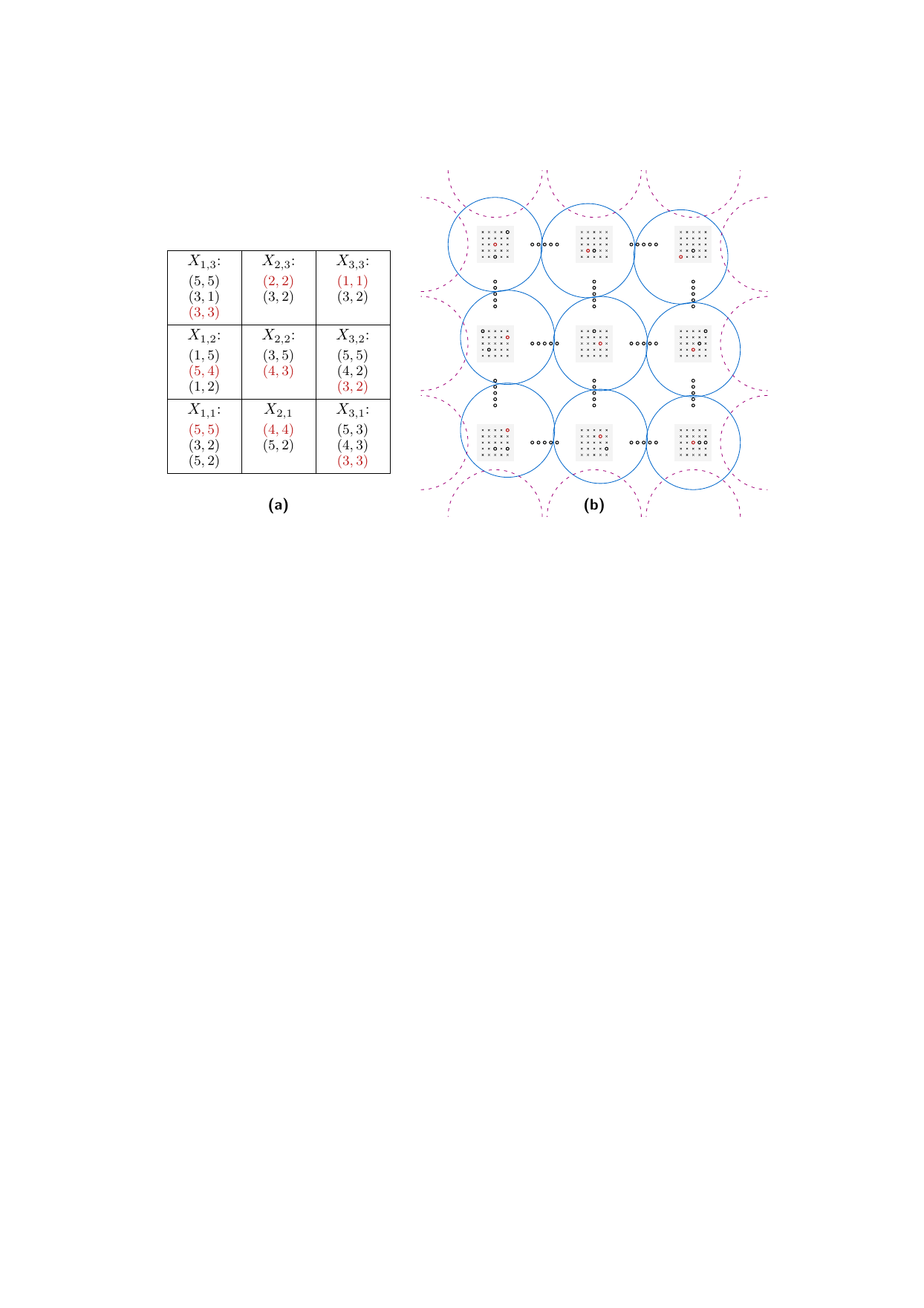}
\caption{
\textbf{\textsf{(a)}} An instance $(5,3,\mathcal{X})$ of \GridTiling[>] and \textbf{\textsf{(b)}} the corresponding instance of \Connected. Small disks and crosses in the tiles have the same meaning as in \Cref{fig:connected-base-layout}. We highlight a solution in red, the corresponding enlarged disks in blue, and the scaled dummy disks with purple dashed lines.
}
\label{fig:connected-example}
\end{figure}

\begin{statelater}{connectedLemmasForProperties}
\begin{lemma}
\label{lem:connected-dummy-disks}
Our instance \instance of \Connected fulfills \Cref{prop:connected-dummy-disks}.
\end{lemma}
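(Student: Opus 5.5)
The argument will be purely metric. In a connected disk graph every vertex has a neighbour (assuming the instance has more than one disk), so it suffices to show two things. First, a dummy disk $d_{i,j}$ that keeps radius $0.1$ intersects no other disk, whatever radii the other disks receive. Second, a scaled dummy disk with radius at most $\rmax$ intersects only disks from its own tile $X_{i,j}$, and only if such a disk is itself scaled. Together these force $d_{i,j}\in\mathcal{T}(r)$ and force some $p_{i,j}^{a,b}\in\mathcal{S}[X_{i,j}]\cap\mathcal{T}(r)$ (here $X_{i,j}\neq\emptyset$ by assumption). Throughout we use the (rescaled) coordinates from \Cref{sec:connected-base-layout,sec:connected-adjacent-sets,sec:connected-dummy-points}, with $\gamma=\eta^2$ and $\rmax=\sqrt{(\eta+\eta^2-0.5)^2+(\frac{\eta-1}{2})^2}$. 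We only treat a left dummy $d_{1,j}^{\ell}$; the other three sides are symmetric.

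\textbf{Step 1: distances from $d_{1,j}^{\ell}$ to the other disks.} Its $x$-coordinate is $2\eta$. Every non-dummy disk has $x$-coordinate at least $x(p_{1,j'}^{a,b})\ge 2\eta+2\gamma+1$, and the $h$/$v$ disks lie even further right or no further left. Hence every non-dummy disk is at distance more than $2\gamma=2\eta^2$ from it. We then compare $2\eta^2$ with $\rmax+\rmax$, because both disks may be scaled. Since $\rmax\approx\eta^2+\eta$, we have $2\rmax\approx 2\eta^2+2\eta>2\eta^2$, so this crude bound is not enough and the gap must be computed more precisely. This is the step I expect to be the main obstacle.

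\textbf{Step 2: the scaled/scaled and unscaled cases.} If both the dummy and a non-tile disk could be scaled, they might intersect. Here we use the budget. A $v$ disk or an $h$ disk of tile column $1$ lies at horizontal distance at least $2\eta+3\gamma$ from $d_{1,j}^{\ell}$, which exceeds $2\rmax$. So the only disks that can reach the dummy are the disks of column $1$ with $y$-coordinate near $y(d_{1,j}^{\ell})$, that is, the tiles $X_{1,j'}$. For $j'\ne j$ the vertical offset is at least $2\eta+2\gamma-\eta$, which makes the Euclidean distance at least $\sqrt{(2\gamma)^2+(2\gamma+\eta)^2}>2\rmax$. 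Hence the dummy can only reach disks of $X_{1,j}$ and other dummies. Dummies are pairwise at distance at least about $2\eta+2\gamma>2\rmax$ along a side, and corner dummies are separated diagonally, so dummies never touch each other.

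For the unscaled case, $r(d_{1,j}^{\ell})=0.1$ requires a partner at distance at most $0.1+\rmax$. Any disk of $X_{1,j}$ is at distance at least $2\gamma+1>\rmax+0.1$, which holds since $\rmax<\eta^2+\eta$ and $2\eta^2+1>\eta^2+\eta+0.1$ for $\eta>1$. So an unscaled dummy is isolated, contradicting connectivity, and therefore $d_{1,j}^{\ell}\in\mathcal{T}(r)$.

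\textbf{Step 3: forcing a scaled tile disk.} With the dummy scaled, a disk $p\in\mathcal{S}[X_{1,j}]$ with $r(p)=0.1$ is at distance at least $2\gamma+1>\rmax+0.1$, so it is not adjacent to the dummy. By Step 2 the dummy has no other possible neighbours, so connectivity forces some $p_{1,j}^{a,b}\in\mathcal{T}(r)$; together with the dummy this gives the property. The only delicate point is the exact dummy placement. If the stated coordinates make some scaled–scaled distance fall below $2\rmax$ in a way that lets a dummy reach a non-tile disk, I would check $2\gamma+1\le 2\rmax$ (which is indeed what allows the dummy to connect to its own tile) against the distances to $h$/$v$ disks and to neighbouring tiles, which are at least $2\gamma+2\eta$ horizontally plus a vertical offset.
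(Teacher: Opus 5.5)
Your plan matches the paper's: an unscaled dummy is isolated, and a scaled dummy can reach only scaled disks of its own tile. However, Step~2 misplaces the $v$ disks, and these are exactly the disks the argument has to exclude. By the symmetric construction, $v_{1,j}^q$ lies in column~$1$, between $X_{1,j}$ and $X_{1,j+1}$, at $x(v_{1,j}^q)=2\eta+2\gamma+\frac{\eta+1}{2}$ and $y(v_{1,j}^q)=j(2\eta+2\gamma)+\eta+\gamma+q$. Its horizontal offset from $d^{\ell}_{1,j}$ is therefore only $2\gamma+\frac{\eta+1}{2}<2\rmax$, not at least $2\eta+3\gamma$. (A bound of that kind, namely $3\gamma+\eta+1$, holds only for the $h$ disks, and there it does exceed $2\rmax$.)

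The closest non-dummy disks outside $X_{1,j}$ are $v_{1,j}^1$ and $v_{1,j-1}^{\eta}$. They are separated from the dummy vertically by only $\gamma+\frac{\eta+1}{2}$. So your claim that a scaled dummy can reach only disks of $X_{1,j}$ and other dummies is unproved. Step~3 falls with it: a priori the dummy could be connected through a scaled $v$ disk instead of a scaled tile disk. Your closing remark (``at least $2\gamma+2\eta$ horizontally'') contains the same error.

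What is missing is the explicit check, which is what the paper's proof asserts as $\Dist{v_{1,j}^1-d_{1,j}}>2\rmax$:
\[
\Bigl(2\gamma+\tfrac{\eta+1}{2}\Bigr)^2+\Bigl(\gamma+\tfrac{\eta+1}{2}\Bigr)^2>4{\rmax}^2=4\eta^4+8\eta^3+\eta^2-6\eta+2 .
\]
With $\gamma=\eta^2$, the left side minus the right side equals $\eta^4-5\eta^3+\frac{5}{2}\eta^2+7\eta-\frac{3}{2}$. This is positive for $\eta\ge 4$ but negative for $\eta\in\{2,3\}$; for $\eta=2$ the two sides are $120.5$ and $122$.

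So the purely metric argument needs the extra assumption $\eta\ge 4$. This is harmless, since $\eta$ can be enlarged in the Grid Tiling instance without changing any tile, but it must be stated; the paper's proof does not mention it either. With this assumption and this computation inserted, the rest of your argument goes through: the remaining tiles, the other dummies, and the unscaled case are all handled correctly.
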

\begin{proof}
Let $r$ be a solution and let $d_{i,j}$ be a dummy disk.
We assume $i = 1$ and $j \in [\kappa]$ for the remainder of the proof.
This is without loss of generality, the arguments for the other combinations of $i$ and $j$ are analogous.
We first show that $d_{i,j} \in \mathcal{T}(r)$ must hold and afterwards establish $p_{i,j}^{a,b} \in \mathcal{T}(r)$ for some tuple $(a,b) \in X_{i,j}$.

\proofsubparagraph*{Step 1: $\boldsymbol{d_{1,j} \in \mathcal{T}(r)}$.}
Assume for the sake of a contradiction $d_{1,j} \notin \mathcal{T}(r)$.
Observe that the disk $p_{1,j}^{1,b}$ with $b = (\eta+1)/2$ is closest to $d_{1,j}$ (if it exists).
In the following, we show that $d_{1,j}$ remains isolated if $d_{1,j} \notin \mathcal{T}(r)$, even if this disk exists and is scaled to $\rmax$.
Consequently, no matter which disk we scale, $d_{1,j} \notin \mathcal{T}(r)$ implies that $r$ is not a solution.
To this end, observe that we have $\Dist{p_{1,j}^{1,b} - d_{1,j}} \geq (2\eta + 2\gamma) + 1 - 2\eta = 2\gamma + 1$. 
Since $\rmax = \sqrt{(\eta + \gamma - 0.5)^2 + (\frac{\eta - 1}{2})^2} < 2\gamma + 0.9$ and $r(d_{1,j}) = 0.1$, $d_{1,j}$ is an isolated vertex in $G(\mathcal{S}, r)$, i.e., $G(\mathcal{S}, r)$ is not connected.
Hence, we have $d_{1,j} \in \mathcal{T}(r)$.

\proofsubparagraph*{Step 2: $\boldsymbol{p_{i,j}^{a,b} \in \mathcal{T}(r)}$.}
This second part of the statement can be shown using ideas analogous to above.
We first observe that the closest disk $p \in \mathcal{S} \setminus \mathcal{S}[X_{1,j}]$ \NewText{are the disks $v_{1,j}^1$ and $v_{1,j-1}^{\eta}$} and at least one of these disks exists, say $v_{1,j}^1$ \NewText{(other dummy disks are farther away).}.
Similar to above, it holds \NewText{$\Dist{v_{1,j}^1 - d_{1,j}} > 2\rmax$}, i.e., even if both disks are fully scaled $d_{1,j}$ remains an isolated vertex in $G(\mathcal{S}, r)$.
On the other hand, for any disk $p_{1,j}^{a,b} \in \mathcal{S}[X_{1,j}]$ we have $\Dist{p_{1,j}^{a,b} - d_{1,j}} \leq \sqrt{(2\gamma + \eta)^2 + (\frac{\eta - 1}{2})^2} \leq 2\rmax$ (for $\eta \geq 2$), i.e., scaling any disk in $\mathcal{S}[X_{1,j}]$ together with $d_{1,j}$ suffices to connect the latter disk with the remainder of the graph.
\end{proof}
\smallskip

Let $\mathcal{S}[i,\cdot]$ denote all disks in column $i \in [\kappa]$, including the disks $v_{i,j}^q$, $j \in [\kappa], q \in [\eta]$, and the respective dummy disks, i.e., $\mathcal{S}[i,\cdot] \coloneqq \bigcup_{j\in\kappa} \mathcal{S}[X_{i,j}] \cup \bigcup_{j\in\kappa} \mathcal{S}[v_{i,j}] \cup\{d_{i,1}^t, d_{i,\kappa}^b\}$.
Similarly, let $\mathcal{S}[\cdot,j]$ denote all disks in row $j\in [\kappa]$.
\Cref{prop:connected-dummy-disks} together with the placement of the disk $h_{i,j}^1$ ($v_{i,j}^1$) and $h_{i,j}^{\eta}$ ($v_{i,j}^{\eta}$) allows us to establish:

\begin{lemma}
\label{lem:connected-scale-per-row-column}
Let $r$ be a solution to \instance.
We have $\Size{\mathcal{T}(r) \cap \mathcal{S}[i,\cdot]},\Size{\mathcal{T}(r) \cap \mathcal{S}[\cdot, j]} \geq \kappa + 2$ for every $i,j \in [\kappa]$, i.e., we scale at least $\kappa + 2$ disks from every row/column.
\end{lemma}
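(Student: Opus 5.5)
The plan is a disjoint-charging argument inside a fixed column $i\in[\kappa]$; rows are handled symmetrically by exchanging the roles of the $h$- and $v$-disks and of the left/right and top/bottom dummy disks. The set $\mathcal{S}[i,\cdot]$ is a disjoint union of the following parts:
\begin{itemize}
\item the tile sets $\mathcal{S}[X_{i,1}],\ldots,\mathcal{S}[X_{i,\kappa}]$;
\item the connector sets $\mathcal{S}[v_{i,1}],\ldots,\mathcal{S}[v_{i,\kappa-1}]$;
\item the two dummy disks $d_{i,1}^t$ and $d_{i,\kappa}^b$.
\end{itemize}
I will exhibit $\kappa+2$ scaled disks that lie in pairwise different parts of this partition. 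That immediately gives $\Size{\mathcal{T}(r)\cap\mathcal{S}[i,\cdot]}\geq\kappa+2$.

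First, the two dummy disks. By \Cref{lem:connected-dummy-disks}, i.e.\ \Cref{prop:connected-dummy-disks}, applied to $d_{i,1}^t$ and to $d_{i,\kappa}^b$, both dummy disks lie in $\mathcal{T}(r)$. This contributes two scaled disks. The same property also yields a scaled disk in $\mathcal{S}[X_{i,1}]$ and one in $\mathcal{S}[X_{i,\kappa}]$. Hence the two boundary tiles of the column are always covered, and when $\kappa=1$ these two tiles coincide.

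Next, the remaining tiles. For each $j\in[\kappa-1]$ with $j\geq 2$, I consider two cases.
\begin{itemize}
\item If $\mathcal{S}[X_{i,j}]\cap\mathcal{T}(r)\neq\emptyset$, I charge tile $j$ to that scaled tile disk.
\item Otherwise, \Cref{obs:connected-min-scale-tile} gives a scaled disk in $\mathcal{S}[v_{i,j}]$, which exists because $j<\kappa$, and I charge tile $j$ to it.
\end{itemize}
Each tile $j$ is charged either to a disk in its own set $\mathcal{S}[X_{i,j}]$ or to a disk in its own connector set $\mathcal{S}[v_{i,j}]$. Tiles $1$ and $\kappa$ are charged inside their own tile sets, as established above. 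So the $\kappa$ charged disks lie in distinct parts of the partition, and they are also distinct from the two dummy disks. Together this gives at least $\kappa+2$ scaled disks in $\mathcal{S}[i,\cdot]$.

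The main point needing care is the injectivity of the charging. A naive use of \Cref{obs:connected-min-scale-tile} for the last tile would charge tile $\kappa$ to $\mathcal{S}[v_{i,\kappa-1}]$, a set that tile $\kappa-1$ may also be charged to. This collision is avoided precisely because \Cref{prop:connected-dummy-disks} covers the boundary tiles directly, so every tile can be charged to a part indexed by itself. I also need that the $v$-disks of column $i$ really belong to $\mathcal{S}[i,\cdot]$ and to no other tile set. This holds by the definition of $\mathcal{S}[i,\cdot]$ and by the construction in \Cref{sec:connected-adjacent-sets}.
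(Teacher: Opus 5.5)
Your proof is correct and follows the paper's route: \Cref{prop:connected-dummy-disks} gives the two scaled dummy disks and one scaled disk each in $\mathcal{S}[X_{i,1}]$ and $\mathcal{S}[X_{i,\kappa}]$, and \Cref{obs:connected-min-scale-tile} gives one more scaled disk for each of the $\kappa-2$ interior tiles. Your explicit charging of interior tile $j$ to $\mathcal{S}[X_{i,j}]$ or $\mathcal{S}[v_{i,j}]$ makes the disjointness precise, which the paper leaves implicit, and it also covers $\kappa=1$, where the paper's intermediate count of four tacitly assumes $\kappa\geq 2$.
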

\begin{proof}
Let $r$ be a solution to our instance and consider the set $\mathcal{S}[i,\cdot]$ for some $i \in [\kappa]$; the arguments for $\mathcal{S}[\cdot,j]$ will be symmetric.

By \Cref{prop:connected-dummy-disks} (see also \Cref{lem:connected-dummy-disks}), we know that we must scale the two dummy disks for column $i$, i.e., $d_{i,1}, d_{i,\kappa} \in \mathcal{T}(r) \cap \mathcal{S}[i,\cdot]$.
Similarly, we must scale at least one disk from the tile $X_{i,1}$ and $X_{i,\kappa}$.
Hence, we already have $\Size{\mathcal{T}(r) \cap \mathcal{S}[i,\cdot]} \geq 4$.

Observe that there are $\kappa - 2$ tiles in column $i$, ignoring $X_{i,1}$ and $X_{i,\kappa}$.
Applying \Cref{obs:connected-min-scale-tile} to each of them, we derive that in $r$, there are at least $\kappa - 2$ additional disks from column $i$ scaled.
In particular, note that if $r$ does not scale disks from the tiles, it must scale at least $\kappa - 1$ disks from the $\kappa - 1$ disjoint sets \NewText{$\mathcal{S}[v_{i,j}]$}, $j \in [\kappa - 1]$.
Together with the four scaling operations derived above, we conclude $\Size{\mathcal{T}(r) \cap \mathcal{S}[i,\cdot]} \geq \kappa + 2$.
\end{proof}

\smallskip
\Cref{lem:connected-scale-per-row-column} allows us to show \Cref{prop:connected-scale-one-per-set}, and the remaining properties can be established by estimating the largest and smallest possible distance between the disks in question.

\begin{lemma}
\label{lem:fulfill-prop-connected-scale-one-per-set}
Our instance \instance of \Connected fulfills \Cref{prop:connected-scale-one-per-set}.
\end{lemma}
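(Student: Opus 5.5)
We show that every solution $r$ scales exactly one disk from each tile, by a counting argument built on \Cref{lem:connected-scale-per-row-column}. Fix a solution $r$, so $\Size{\mathcal{T}(r)} \leq k = 4\kappa + \kappa^2$. The column sets $\mathcal{S}[i,\cdot]$, $i \in [\kappa]$, are pairwise disjoint. Together they contain all tile disks, all vertical connector disks $\mathcal{S}[v_{i,j}]$, and the top and bottom dummy disks. The only disks outside every column set are the horizontal connectors $\mathcal{S}[h_{i,j}]$ and the left and right dummy disks $d_{1,j}^{\ell}, d_{\kappa,j}^{r}$.

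\textbf{Step 1: count the scaled disks.} Summing \Cref{lem:connected-scale-per-row-column} over all columns shows that $r$ scales at least $\kappa(\kappa+2) = \kappa^2 + 2\kappa$ disks inside the columns. By \Cref{prop:connected-dummy-disks} (\Cref{lem:connected-dummy-disks}), all $2\kappa$ left and right dummy disks are also scaled, and these lie outside every column set. This already gives $\kappa^2 + 4\kappa = k$ scaled disks, so every inequality is tight. Hence each column contains exactly $\kappa + 2$ scaled disks and no disk of any $\mathcal{S}[h_{i,j}]$ is scaled. Applying the symmetric argument to the row sets $\mathcal{S}[\cdot,j]$ shows that no disk of any $\mathcal{S}[v_{i,j}]$ is scaled. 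Therefore $\mathcal{S}[>] \cap \mathcal{T}(r) = \emptyset$.

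\textbf{Step 2: every tile gets at least one scaled disk.} Suppose some tile $X_{i,j}$ has $\mathcal{S}[X_{i,j}] \cap \mathcal{T}(r) = \emptyset$. By \Cref{obs:connected-min-scale-tile}, $r$ must then scale a disk from each existing neighbouring set among $\mathcal{S}[h_{i,j}]$, $\mathcal{S}[h_{i-1,j}]$, $\mathcal{S}[v_{i,j}]$ and $\mathcal{S}[v_{i,j-1}]$. Every tile has at least one such neighbouring set, since $\kappa \geq 2$ can be assumed (for $\kappa = 1$ the statement follows directly from the budget and \Cref{prop:connected-dummy-disks}). This contradicts $\mathcal{S}[>] \cap \mathcal{T}(r) = \emptyset$, so $\Size{\mathcal{S}[X_{i,j}] \cap \mathcal{T}(r)} \geq 1$ for every $i,j \in [\kappa]$.

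\textbf{Step 3: every tile gets at most one scaled disk.} The budget is fully spent on the $4\kappa$ dummy disks plus the tile disks, so exactly $\kappa^2$ tile disks are scaled. Since there are $\kappa^2$ tiles and each has at least one scaled disk, each has exactly one.

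The main obstacle is the bookkeeping behind the disjointness and the tightness in Step 1. I must check that the column sets as defined contain no $h$-disks and no left or right dummy disks, so the $2\kappa$ left and right dummies really add on top of the column count. I must also make sure that tightness in the column count alone excludes scaled $h$-disks, while the row count excludes scaled $v$-disks. If boundary cases make this awkward, I would fall back on a direct count: $4\kappa$ dummies, plus one disk per tile or otherwise a connector disk charged to that tile via \Cref{obs:connected-min-scale-tile}. In that fallback, the worry is charging the same connector to two tiles. Avoiding that double charge is exactly why I prefer the per-column form of \Cref{lem:connected-scale-per-row-column}.
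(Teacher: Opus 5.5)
Your proof is correct, and it takes a somewhat different route from the paper's.

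\textbf{The paper's route.} The paper argues by contradiction in two cases.
\begin{enumerate}
\item Suppose some tile $X_{i,j}$ has no scaled disk. The paper uses \Cref{obs:connected-min-scale-tile} to force scaled disks in $\mathcal{S}[h_{i-1,j}]$ and $\mathcal{S}[h_{i,j}]$ (or the $v$-analogue). It then redoes the count for that one row to show that row $j$ alone needs at least $\kappa+3$ scalings. Adding $\kappa+2$ for each other row (\Cref{lem:connected-scale-per-row-column}) and the $2\kappa$ top/bottom dummies exceeds $k$.
\item Suppose some tile has at least two scaled disks. Pigeonhole then produces an empty tile, which reduces to the first case.
\end{enumerate}

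\textbf{Your route.} You sum \Cref{lem:connected-scale-per-row-column} over the pairwise disjoint column sets and add the $2\kappa$ left/right dummies, which lie outside them. Tightness of the budget then shows that no $h$-disk is scaled, and symmetrically, via the rows, no $v$-disk. After that, the lower bound of one per tile follows at once from \Cref{obs:connected-min-scale-tile}, and the upper bound from the budget. Your bookkeeping checks out. Column sets, $h$-sets and left/right dummies do partition $\mathcal{S}$, so the inequality chain is valid.

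\textbf{Comparison.} Your argument avoids the paper's more delicate $\kappa+3$ count for the exceptional row. That count must not double-charge $h_{i-1,j}$ and $h_{i,j}$, and it relies on the empty tile being interior, which the paper handles only briefly. Your argument also needs no separate case for two or more scaled disks in a tile. In addition, it yields $\mathcal{S}[>]\cap\mathcal{T}(r)=\emptyset$ directly, which is exactly the hypothesis under which \Cref{prop:connected-scale-h-disks} is later applied. The paper's argument is more local, since it only inspects one exceptional row, but it proves nothing beyond the statement.

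\textbf{Minor points.} Your separate treatment of $\kappa=1$ is needed, because then no tile has a neighbouring connector set, and it is correct. In Step~3, the budget gives \emph{at most} $\kappa^2$ scaled tile disks; exactness only follows after combining with Step~2.
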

\begin{proof}
Recall \Cref{prop:connected-scale-one-per-set}, which stated that we have $\Size{\mathcal{S}[X_{i,j}] \cap \mathcal{T}(r)} = 1$ for every $i,j \in [\kappa]$ and solution $r$.
Towards a contradiction, assume that there exists some $i,j \in [\kappa]$ where this does not hold.
There are two cases that we can consider.

\proofsubparagraph*{Case 1: $\boldsymbol{\Size{\mathcal{S}[X_{i,j}] \cap \mathcal{T}(r)} = 0}$.}
\NewText{Recall that we have $k = \kappa^2 + 4\kappa$.
By \Cref{prop:connected-dummy-disks}, $4\kappa$ scaling operations are already ``reserved'' for the dummy disks.
Thus,~$\kappa^2$ scaling operations ``remain''. %
Since $\Size{\mathcal{S}[X_{i,j}] \cap \mathcal{T}(r)} = 0$, we have $h_{i-1,j}^q, h_{i,j}^{q'} \in \mathcal{T}(r)$ or $v_{i,j-1}^q, v_{i,j}^{q'} \in \mathcal{T}(r)$, for some $q,q' \in [\eta]$; otherwise, we can find at least one disk in $\mathcal{S}[X_{i,j}]$ which is isolated in $G(\mathcal{S}, r)$.
Assume $h_{i-1,j}^q, h_{i,j}^{q'} \in \mathcal{T}(r)$, the other case is symmetric.
We observe that $1 < i < \kappa$ must hold, otherwise one of these disks would not exist.
Ignoring the tiles $X_{1,j}$, $X_{i,j}$, and $X_{\kappa,j}$, there are $\kappa - 3$ further tiles remaining in row $j$.
To each of them, we can apply \Cref{obs:connected-min-scale-tile}.
There are $\kappa - 1$ sets $\mathcal{S}[h_{i',j}]$, $i' \in [\kappa - 1]$.
After ignoring the sets containing the disks $h_{i-1,j}^q$ and $h_{i,j}^{q'}$, $\kappa - 3$ remain.
Hence, no matter to which sets $\mathcal{S}[h_{i',j}]$ or $\mathcal{S}[X_{i',j}]$ the remaining disks in $\mathcal{T}(r) \cap \mathcal{S}[\cdot,j]$ belong, there must be at least $\kappa - 3$ of them.
Consequently, we have $\Size{\mathcal{T}(r) \cap \mathcal{S}[\cdot,j]} \geq \kappa + 3$, i.e., at least $\kappa + 3$ scaling operations must be associated with row~$j$.
Since every other row needs at least $\kappa + 2$ scaling operations, and there are $\kappa - 1$ other rows, we need at least $(\kappa - 1)(\kappa + 2) + \kappa  + 3 = \kappa(\kappa + 2) + 1$ scaling operations across all rows.
Furthermore, $2\kappa$ our of the $4\kappa$ scaling operations for the dummy disks are not yet covered, namely those for the dummy disks above and below the tiles.
Thus, we would have more than $\kappa^2 + 4\kappa$ scaling operations.
Consequently, $\Size{\mathcal{S}[X_{i,j}] \cap \mathcal{T}(r)} = 0$ cannot hold.
}

\proofsubparagraph*{Case 2: $\boldsymbol{\Size{\mathcal{S}[X_{i,j}] \cap \mathcal{T}(r)} \geq 2}$.}
By \Cref{prop:connected-dummy-disks}, we know that for every dummy point $d_{i',j'}$ we have $d_{i',j'} \in \mathcal{T}(r)$.
Hence, $\Size{\mathcal{T}(r) \cap \bigcup_{i',j' \in [\kappa]} \mathcal{S}[X_{i',j'}]} \leq \kappa^2$.
Since $\Size{\mathcal{S}[X_{i,j}] \cap \mathcal{T}(r)} \geq 2$, there exists, by the pigeonhole principle, some $i',j' \in [\kappa]$ such that $\Size{\mathcal{S}[X_{i',j'}] \cap \mathcal{T}(r)} = 0$.
We can apply the arguments from Case~1 to this set $\mathcal{S}[X_{i',j'}]$ to derive that it cannot exist in a solution $r$.
Hence, we can conclude that also $\Size{\mathcal{S}[X_{i,j}] \cap \mathcal{T}(r)} \geq 2$ cannot hold. 
Combining both cases yields the statement.
\end{proof}

We continue with showing that our choice for $\gamma$ and $\rmax$ fulfills \Cref{prop:connected-scale-relations,prop:connected-scale-h-disks}.
\begin{lemma}
\label{lem:distance-properties-scale-relations}
Our instance \instance of \Connected fulfills \Cref{prop:connected-scale-relations}.
\end{lemma}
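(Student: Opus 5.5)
We verify \Cref{prop:connected-scale-relations} by direct distance estimates. Everything is in the original (unscaled) coordinates, where unit disks have radius $0.1$, $\gamma=\eta^2$, and $\rmax=\sqrt{(\eta+\gamma-0.5)^2+(\frac{\eta-1}{2})^2}$. By \Cref{lem:fulfill-prop-connected-scale-one-per-set}, at most one disk per tile is scaled in $r$. Also, by \Cref{lem:connected-dummy-disks} and the budget count, every scaled disk lies in a tile or is a dummy disk, so every scaled disk from a tile has radius in $[0.1,\rmax]$. Throughout we assume $r(p_{i,j}^{a,b})=\rmax$. The argument then splits into the six cases of the property.

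\emph{Case 1.} Two disks of the same tile are at distance at most $\sqrt{2}(\eta-1)<2\eta\le\rmax$. Since $\rmax\ge 2\eta$ was observed in \Cref{sec:connected-adjacent-sets}, the disks intersect whatever the radius of the second disk.

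\emph{Cases 2 to 5.} By the symmetry noted after the property, it suffices to treat Case 2 ($i'=i+1$, $j'=j$); Case 4 is the same argument with $x$ and $y$ exchanged. The horizontal distance between the two centers is $2\eta+2\gamma+a'-a$ and the vertical distance is $|b-b'|\le\eta-1$.
\begin{itemize}
\item If $p_{i+1,j}^{a',b'}\notin\mathcal{T}(r)$, its radius is $0.1$. The distance is at least $2\eta+2\gamma-(\eta-1)>\rmax+0.1$, because $\rmax<\eta+\gamma+\eta$. So the disks do not intersect.
\item If $p_{i+1,j}^{a',b'}\in\mathcal{T}(r)$, we need the intended semantics of $r$: scaled tile disks receive radius $\rmax$. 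I would argue this via the witnessing radius assignment, or, more simply, by reading the property for the intended radius $\rmax$, which is how it is used in \Cref{thm:connected-w1-hard}. With both radii equal to $\rmax$, we must show that $\Dist{p-p'}\le 2\rmax$ holds iff $a>a'$.
\end{itemize}
For the second item, square both sides. Then
\[4\rmax^2=(2\eta+2\gamma-1)^2+(\eta-1)^2 .\]
If $a\ge a'+1$, the horizontal distance is at most $2\eta+2\gamma-1$ and the vertical distance is at most $\eta-1$, so the disks intersect. If $a\le a'$, the horizontal distance is at least $2\eta+2\gamma$, and
\[(2\eta+2\gamma)^2-(2\eta+2\gamma-1)^2=4\eta+4\gamma-1>(\eta-1)^2\]
since $\gamma=\eta^2$. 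So the disks do not intersect.

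\emph{Case 6.} If the tiles are not adjacent, they are either at least two columns (or rows) apart, with distance at least $2(2\eta+2\gamma)-\eta>2\rmax$, or diagonal. In the diagonal case both coordinate offsets are at least $2\eta+2\gamma-(\eta-1)=\eta+2\gamma+1$, so the distance is at least $\sqrt{2}(\eta+2\gamma+1)$. This exceeds $2\rmax\approx 2(\eta+\gamma)$ because $\sqrt{2}\cdot 2\gamma>2\gamma+2\eta$ for $\gamma=\eta^2$ and $\eta\ge 2$.

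The main obstacle is Case 2, specifically the boundary $a=a'$: the margin there is $4\eta+4\gamma-1$ against $(\eta-1)^2$, and this is exactly where the choice $\gamma=\eta^2$ is needed. A secondary subtlety is justifying that a scaled partner effectively has radius $\rmax$ rather than some smaller value; I would handle this by noting that the property is only invoked for the canonical solution, or by restating it for radius $\rmax$.
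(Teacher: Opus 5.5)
Your proof follows the paper's route. It uses the same six-way case split and the same worst-case coordinate offsets; for example, in Case~2 your offsets $2\eta+2\gamma-1$ and $\eta-1$ give exactly $2\rmax$. It also uses the same convention that a scaled partner in items 2--5 has radius $\rmax$, which the paper uses implicitly (``if $r(p_{i,j}^{a,b}) = r(p_{i+1,j}^{a',b'}) = \rmax$''). Your opening appeal to the one-disk-per-tile and dummy-disk lemmas is harmless but unnecessary: nothing in the case analysis uses it, and the property is purely geometric.

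There is one incorrect step, in the diagonal subcase of Case~6.

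\begin{itemize}
\item The stated justification $\sqrt{2}\cdot 2\gamma>2\gamma+2\eta$ fails at $\eta=2$. There it claims $8\sqrt{2}>12$, but $8\sqrt{2}\approx 11.31$.
\item ``$2\rmax\approx 2(\eta+\gamma)$'' must be the strict bound $\rmax<\eta+\gamma$, which you never prove. The only bound you do prove, $\rmax<2\eta+\gamma$, is too weak here: at $\eta=2$ it would require $11\sqrt{2}>16$, which is false.
\end{itemize}

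The conclusion is still true, and the fix is short.

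\begin{itemize}
\item First, $\rmax^2=(\eta+\gamma)^2-(\eta+\gamma)+\frac{1}{4}+\frac{(\eta-1)^2}{4}<(\eta+\gamma)^2$, since this reduces to $3\eta^2+6\eta-2>0$.
\item Second, keep the $\sqrt{2}(\eta+1)$ term you dropped. Then $\sqrt{2}(2\eta^2+\eta+1)-(2\eta^2+2\eta)=2(\sqrt{2}-1)\eta^2-(2-\sqrt{2})\eta+\sqrt{2}$. This is positive for every $\eta$, because its leading coefficient is positive and its discriminant $4\sqrt{2}-10$ is negative.
\end{itemize}

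This is essentially how the paper closes the case: it combines $2\rmax<2\eta^2+2\eta$ with $\sqrt{2}(2\eta^2+\eta+1)>2\eta^2+2\eta$.
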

\begin{proof}
Recall $\gamma = \eta^2$ and $\rmax = \sqrt{(\eta + \eta^2 - 0.5)^2 + (\frac{\eta - 1}{2})^2}$, which can be simplified to $\sqrt{4\eta^4+8\eta^3+\eta^2-6\eta+2}/2$.
Observe that $\rmax < \eta^2 + \eta$ since 
\begin{align*}
	{\rmax}^2 
	= \eta^4 + 2\eta^3 + \frac{\eta^2}{4} - \frac{6\eta}{4} + 0.5
	< \eta^4 + 2\eta^3 + \eta^2 = (\eta^2 + \eta)^2.
\end{align*}

Towards showing that our instance \instance fulfills \Cref{prop:connected-scale-relations}, let $p_{i,j}^{a,b}$ and $p_{i',j'}^{a',b'}$ be two disks and $r$ a radius assignment such that $r(p_{i,j}^{a,b}) = \rmax$. 
We show each item separately.

\begin{enumerate}
	\item We first assume $i' = i$ and $j' = j$.
	In this case, we can observe $\Dist{p_{i,j}^{a,b} - p_{i',j'}^{a',b'}} \leq \Dist{p_{i,j}^{\eta,\eta} - p_{i',j'}^{1,1}} \leq \eta\sqrt{2} < 2\eta \leq \rmax$\NewText{; recall that we assume $\eta > 1$}.
	Thus, the disks intersect.
	
	\item \label{case:distance-properties-scale-relations} Next, we assume $i' = i + 1$ and $j' = j$.
	Recall that we need to show that $p_{i,j}^{a,b}$ and $p_{i',j'}^{a',b'}$ intersect if and only if $p_{i',j'}^{a',b'} \in \mathcal{T}(r)$ and $a > a'$.
	First, observe that 
	\begin{align*}
		\Dist{p_{i,j}^{a,b} - p_{i+1,j}^{a',b'}} \geq \Dist{p_{i,j}^{\eta,b} - p_{i+1,j}^{1,b}} = \eta + 2\gamma + 1 = 2\eta^2 + \eta + 1.
	\end{align*}
	Since $\rmax < \eta^2 + \eta$, we conclude that $\rmax + 0.1 < 2\eta^2 + \eta + 1$.
	Hence, the disks can only intersect if $p_{i',j'}^{a',b'} \in \mathcal{T}(r)$.
	We next show that we must have in addition $a > a'$.
	On the one hand, if we have $a > a'$, we obtain
	\begin{align*}
		\Dist{p_{i,j}^{a,b} - p_{i+1,j}^{a',b'}} \leq \Dist{p_{i+1,j}^{a-1,\eta} - p_{i,j}^{a,1}} = \sqrt{(2\eta + 2\eta^2 - 1)^2 + (\eta - 1)^2}.
	\end{align*}
	Above term can be simplified to $\sqrt{4\eta^4+8\eta^3+\eta^2-6\eta+2} = 2\rmax$.
	Hence, if $a > a'$, then both disks intersect (if $r(p_{i,j}^{a,b}) = r(p_{i+1,j}^{a',b'}) = \rmax$).
	On the other hand, if we have $a \leq a'$, we obtain
	\begin{align*}
		\Dist{p_{i,j}^{a,b} - p_{i+1,j}^{a',b'}} \geq \Dist{p_{i+1,j}^{a,b} - p_{i,j}^{a,b}} = 2\eta + 2\eta^2.
	\end{align*}
	Since $\rmax < \eta^2 + \eta$, we have $\Dist{p_{i,j}^{a,b} - p_{i+1,j}^{a',b'}} > 2\rmax$.
	Thus, the disks $p_{i,j}^{a,b}$ and $p_{i+1,j}^{a',b'}$ do not intersect, even if their radii equals $\rmax$.
	Combining all, we obtain the desired property.
	
	\customEnumerateItem{3.\ -- 5.} These cases are symmetric to Case~\ref{case:distance-properties-scale-relations}.

	\customEnumerateItem{6.} Finally, we assume that the tiles $X_{i,j}$ and $X_{i',j'}$ are not adjacent.
	We need to show that the two disks $p_{i,j}^{a,b}$ and $p_{i+1,j}^{a',b'}$ do not intersect in this case.
	Since the tiles are not adjacent, we have %
	\NewText{(a) $\Size{i' - i} \geq 2$, or (b) $\Size{j' - j} \geq 2$, or (c) $\Size{i' - i} = \Size{j' - j} = 1$.}
	The first two cases follow from arguments analogous to the above, and it remains to consider the last case, i.e., \NewText{$\Size{i' - i} = \Size{j' - j} = 1$}.
	\NewText{Without loss of generality, we assume $i' = i + 1$ and $j' = j + 1$}.
	We can make the following observation.
	\begin{align*}
		\Dist{p_{i,j}^{a,b} - p_{i+1,j+1}^{a',b'}} \geq \Dist{p_{i+1,j+1}^{1,1} - p_{i,j}^{\eta,\eta}} = \sqrt{2}(2\eta^2 + \eta + 1).
	\end{align*}
	On the other hand, we have $2\rmax < 2\eta^2 + 2\eta$.
	To see that $\Dist{p_{i,j}^{a,b} - p_{i+1,j+1}^{a',b'}} > 2\rmax$, we now combine both bounds:
	\begin{align*}
		\Dist{p_{i,j}^{a,b} - p_{i+1,j+1}^{a',b'}} \geq \sqrt{2}(2\eta^2 + \eta + 1) &> 2\eta^2 + 2\eta\\
		2\sqrt{2}\eta^2 + \sqrt{2}\eta + \sqrt{2} &> 2\eta^2 + 2\eta\\
		2\eta^2(\sqrt{2} - 1) + \eta(\sqrt{2}-2) + \sqrt{2} &> 0
	\end{align*}
	The last inequality is satisfied as %
	\NewText{
		\begin{align*}
			2\eta^2(\sqrt{2} - 1) + \eta(\sqrt{2}-2) + \sqrt{2} \geq 2\cdot 2\eta(\sqrt{2} - 1) + \eta(\sqrt{2}-2)=\eta(4\sqrt{2} - 4) > 0
		\end{align*}
		since $\sqrt{2} > 1$ and $\eta \geq 2$.
	}
	
\end{enumerate}
Combining all points, we conclude that our instance satisfies \Cref{prop:connected-scale-relations}.
\end{proof}
\begin{lemma}
\label{lem:distance-properties-h-disks}
Our instance \instance of \Connected fulfills \Cref{prop:connected-scale-h-disks}.
\end{lemma}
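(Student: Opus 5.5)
Because $\mathcal{S}[>]\cap\mathcal{T}(r)=\emptyset$, every $h_{i,j}^q$ has radius $0.1$. So $p_{i,j}^{a,b}$, scaled to $\rmax$, intersects $h_{i,j}^q$ if and only if $\Dist{p_{i,j}^{a,b}-h_{i,j}^q}\le \rmax+0.1$. I plan to prove the first item by explicit coordinates and obtain the other three by symmetry. Item~2 follows from item~1 by reflecting the construction across the vertical line through the middle of tile column~$i$; under this reflection $a\mapsto \eta+1-a$ and $q\mapsto \eta+1-q$. I should double-check this against the coordinates: $h_{i-1,j}^q$ sits at $x=(i-1)(2\eta+2\gamma)+\eta+\gamma+q$, at $x$-distance $\eta+\gamma+a-q$ from $p_{i,j}^{a,b}$, so "$q>a$" plays the role that "$q<a$" plays in item~1. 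Items~3 and~4 are the transposed version using the $v$-disks.

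For item~1, write $\Delta x = x(h_{i,j}^q)-x(p_{i,j}^{a,b}) = \eta+\gamma+q-a$ and $\Delta y = (\eta+1)/2-b$, where $|\Delta y|\le (\eta-1)/2$ and $\gamma=\eta^2$.

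\emph{If $q<a$:} then $q\le a-1$, so $0<\Delta x\le \eta+\gamma-1$. Hence $\Dist{\cdot}^2\le(\eta+\gamma-1)^2+((\eta-1)/2)^2<(\eta+\gamma-0.5)^2+((\eta-1)/2)^2=\rmax^2$. The distance is therefore below $\rmax$ and the disks intersect. I also need $\Delta x>0$, which holds because $\eta+\gamma+q-a\ge \gamma+1>0$.

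\emph{If $q\ge a$:} then $\Delta x\ge \eta+\gamma$, so $\Dist{\cdot}\ge\eta+\gamma$. It remains to show $\rmax+0.1<\eta+\gamma$, i.e.\ $(\eta+\gamma-0.5)^2+((\eta-1)/2)^2<(\eta+\gamma-0.1)^2$. This is exactly the inequality already verified in \Cref{sec:connected-adjacent-sets} for the choice $\gamma=\eta^2$. After expanding, it reduces to $\tfrac{(\eta-1)^2}{4}+0.24<0.8(\eta+\gamma)$, which holds for all $\eta\ge1$. So the disks do not intersect.

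The main obstacle is bookkeeping rather than mathematics. I need to confirm that the reflection and transposition symmetries really map the construction, including the $h$/$v$ disk coordinates, onto itself with the claimed index changes. I also need to check that the worst case in the non-intersection direction is $\Delta y=0$ (attained when $b=(\eta+1)/2$, or else $\Delta y$ is even nonzero), which only helps. If the symmetry argument seems shaky, the fallback is to redo the two-case distance computation directly for each of the four items; each is a one-line variant of the computation above.
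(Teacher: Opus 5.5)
Your proposal is correct and follows essentially the same route as the paper: a direct coordinate computation in which the case $q<a$ is bounded by the extreme configuration $q=a-1$, $|\Delta y|=(\eta-1)/2$, and the case $q\ge a$ rests on $\rmax+0.1<\eta^2+\eta$, with the remaining items handled symmetrically. The small differences are that you compare $(\eta+\gamma-1)^2$ with $(\eta+\gamma-0.5)^2$ directly instead of expanding both squares, and that you obtain item~2 by reflection; the reflection only needs to act on grid positions and on $h_{i,j}^q\mapsto h_{i-1,j}^{\eta+1-q}$, not on the whole instance, and your explicit $x$-distance $\eta+\gamma+a-q$ already settles that item directly.
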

\begin{proof}
Recall once more $\gamma = \eta^2$ and $\rmax = \sqrt{(\eta + \eta^2 - 0.5)^2 + (\frac{\eta - 1}{2})^2}$, which can be simplified to $\sqrt{4\eta^4+8\eta^3+\eta^2-6\eta+2}/2$.
Recall $\rmax < \eta^2 + \eta$.
\NewText{We now show that even $\rmax + 0.1 < \eta^2 + \eta$ holds.
To this end, we consider the difference between ${\rmax}^2$ and $(\eta^2 + \eta - 0.1)^2 = \eta^4+2\eta^3+0.8\eta^2-0.2\eta+0.1^2$: 
\begin{align*}
	&{\rmax}^2 - (\eta^4+2\eta^3+0.8\eta^2-0.2\eta+0.1^2)\\
	&= \eta^4 + 2\eta^3 + \frac{\eta^2}{4} - \frac{6\eta}{4} + 0.5 - \eta^4-2\eta^3-0.8\eta^2+0.2\eta-0.1^2\\
	&= \eta^2(0.25 - 0.8) - 1.3\eta + 0.49\\
	&< 0
\end{align*}
As $\eta > 1$, we can indeed conclude $\rmax + 0.1 < \eta^2 + \eta$.} 

Now, towards showing that our instance \instance fulfills \Cref{prop:connected-scale-h-disks}, let $p_{i,j}^{a,b}$ be a disk and $r$ a solution to \instance such that $r(p_{i,j}^{a,b}) = \rmax$ and $\mathcal{S}[>]\cap\mathcal{T}(r)=\emptyset$.

Assume that the disks $h_{i,j}^q$ and $h_{i-1,j}^q$ exist for some $q \in [\eta]$,  i.e., $1 < i < \kappa$; otherwise there is nothing to show.
We concentrate on the first two points, i.e., that the disks $p_{i,j}^{a,b}$ and~$h_{i,j}^q$ intersect if and only if $q < a$ and $p_{i,j}^{a,b}$ and $h_{i-1,j}^q$ intersect if and only if $q > a$.
The last two points follow from analogous arguments.

\begin{enumerate}
	\item We first show that if $q < a$, then the two disks $p_{i,j}^{a,b}$ and $h_{i,j}^q$ intersect.
	We observe
	\begin{align*}
		\Dist{h_{i,j}^q - p_{i,j}^{a,b}} \leq \Dist{h_{i,j}^{a - 1} - p_{i,j}^{a,{1}}} = \sqrt{(\eta^2 + \eta - 1)^2 + (\frac{\eta + 1}{2} - 1)^2} = d.
	\end{align*}
	Squaring and simplifying $d$ and $\rmax$ yields
	\begin{align*}
		d^2 = \frac{4\eta^4 + 8\eta^3-3\eta^2-10\eta + 5}{4} \quad\quad\text{and}\quad\quad
		{\rmax}^2 = \frac{4\eta^4+8\eta^3+\eta^2-6\eta+2}{4}.
	\end{align*}
	We now observe that $d^2 \leq {(\rmax)}^2$ implying, as $d$ and $\rmax$ are positive, $d \leq \rmax$.
	Hence, the disks intersect.
	
	We now show that if $q \geq a$, then the two disks $p_{i,j}^{a,b}$ and $h_{i,j}^q$ do not intersect.
	Similar to before, we observe
	\begin{align*}
		\Dist{h_{i,j}^q - p_{i,j}^{a,b}} \geq \Dist{h_{i,j}^{a} - p_{i,j}^{a,{(\eta+1)/2}}} = \eta^2 + \eta.
	\end{align*}
	As \NewText{$\rmax + 0.1 < \eta^2 + \eta$}, we conclude that the disks do not intersect; recall $r(h_{i,j}^q) = 0.1$ for all $q \in [\eta]$.
	
	\item We now turn our attention to the disk $h_{i-1,j}^q$.
	First, consider the case \NewText{$q > a$}.
	We observe
	\begin{align*}
		\Dist{h_{i-1,j}^q - p_{i,j}^{a,b}} \leq \Dist{p_{i,j}^{a,{1}} - h_{i-1,j}^{a + 1}} = \sqrt{(\eta^2 + \eta - 1)^2 + (\frac{\eta + 1}{2} - 1)^2} = d.
	\end{align*}
	From here on, we can proceed analogously to the above argument and see that the two disks intersect.
	Second, for the case $q \leq a$, we observe
	\begin{align*}
		\Dist{h_{i-1,j}^q - p_{i,j}^{a,b}} \geq \Dist{p_{i,j}^{a,{(\eta+1)/2}} - h_{i-1,j}^{a}} = \eta^2 + \eta.
	\end{align*}
	As before, observing \NewText{$\rmax + 0.1 < \eta^2 + \eta$} shows that the disks do not intersect.	
\end{enumerate}
With analogous derivations for points three and four in \Cref{prop:connected-scale-h-disks}, we complete the proof.
\end{proof}
Combining \Cref{lem:fulfill-prop-connected-scale-one-per-set,lem:distance-properties-scale-relations,lem:distance-properties-h-disks,lem:connected-dummy-disks} yields \Cref{lem:connected-properties}.

\end{statelater}

\begin{restatable}\restateref{lem:connected-properties}{lemma}{lemmaConnectedProperties}
\label{lem:connected-properties}
Our instance \instance of \Connected fulfills \Cref{prop:connected-scale-one-per-set,prop:connected-scale-relations,prop:connected-scale-h-disks,prop:connected-dummy-disks}.
\end{restatable}
\smallskip
We use \Cref{lem:connected-properties} in the (more involved) backward direction of the correctness argument of the reduction. 
In particular, \Cref{prop:connected-scale-one-per-set,prop:connected-dummy-disks} ensures that a solution to $\instance_{>}$ is well-defined, and \Cref{prop:connected-scale-relations,prop:connected-scale-h-disks} ensure that the tuples corresponding to the scaled disks adhere to the $>$-constraints; note that we can assume without loss of generality that $r(p) = \rmax$ holds in a solution $r$ for every disk $p \in \mathcal{T}(r)$ since connectivity is preserved under enlarging disks.

\begin{restatable}\restateref{thm:connected-w1-hard}{theorem}{theoremConnectedHard}
\label{thm:connected-w1-hard}
\Connected is \W\textup{[1]}-hard when parameterized by $k$.
\end{restatable}
\begin{prooflater}{ptheoremConnectedHard}
Let $\instance_{>} = (\eta,\kappa,\mathcal{X})$ be an instance of \GridTiling.
Furthermore, let $\instance = \instanceLong$ be the instance of \Connected as constructed above; recall $\rmin = 0.1$, $\rmax = \sqrt{(\eta + \eta^2 - 0.5)^2 + (\frac{\eta - 1}{2})^2}$, and $k = 4\kappa + \kappa^2$.
Furthermore, recall that we assume one unit to equal to $0.1$, i.e., we scaled down the instance by \NewText{a factor of~10.} %
Observe that $\mathcal{S}$ contains $\BigO{\kappa^2\eta^2}$-many disks.
Hence, the size of \instance is polynomial in $\eta + \kappa$ and our parameter $k$ is polynomially bounded in $\kappa$.
Note that \instance can be constructed in polynomial time.
We now establish correctness of our reduction; recall \Cref{lem:connected-properties}, where we showed that \instance fulfills \Cref{prop:connected-scale-one-per-set,prop:connected-scale-relations,prop:connected-scale-h-disks,prop:connected-dummy-disks}.

\proofsubparagraph*{($\boldsymbol{\Rightarrow}$)}
Assume that $\instance_{>}$ is a yes-instance and let $\mathcal{Y} = \{x_{i,j} : i,j \in [\kappa]\}$ be a solution.
We now construct a radius assignment $r$ as follows.
For every dummy disk $d_{i,j}$, we set $r(d_{i,j}) = \rmax$.
Furthermore, for every $i,j \in [\kappa]$, there exists an $x_{i,j} = (a,b) \in \mathcal{Y}$.
We set $r(p_{i,j}^{a,b}) = \rmax$.
Finally, we set $r(p) = 0.1$ for the remaining disks.
We observe $\Size{\mathcal{T}(r)} = \Size{\mathcal{Y}} + 4\kappa = \kappa^2 + 4\kappa = k$.
To see that $G(\mathcal{S}, r)$ is connected, we scale, on the one hand, the necessary disks according to \Cref{prop:connected-dummy-disks}.
On the other hand, since $\mathcal{Y}$ is a solution to $\instance_{>}$, the selected tuples fulfill the $>$-constraints.
Therefore, we can deduce from \Cref{prop:connected-scale-relations} that the disks $p_{i,j}^{a,b}$ for \NewText{$x_{i,j} = (a,b) \in \mathcal{Y}$} induce a $\kappa\times\kappa$-grid in $G(\mathcal{S}, r)$, i.e., disks from neighboring tiles are adjacent, which is a connected graph.
Further, \Cref{prop:connected-scale-relations} ensures that every disk $p_{i,j}^{a',b'} \in \mathcal{S}[X_{i,j}]$ is adjacent to $p_{i,j}^{a,b}$ for every $i,j \in [\kappa]$.
What remains to show is that the disks $\mathcal{S}[h_{i,j}]$ and \NewText{$\mathcal{S}[v_{i,j}]$} are connected to the rest of the graph, for every $i\in[\kappa], j \in [\kappa]$ (if they exist).
Consider the tuples $x_{i,j} = (a,b),x_{i + 1, j}=(a',b') \in \mathcal{Y}$.
Using \Cref{prop:connected-scale-relations}, we derive that $p_{i,j}^{a,b}$ intersects with the disks $h_{i,j}^{q}$, $q < a$, and $p_{i + 1,j}^{a',b'}$ intersects with the disks $h_{i,j}^{q'}$, $q' > a'$.
Since $\mathcal{Y}$ is a solution, we have $a > a'$.
Consequently, every disk $h_{i,j}^{q}$ intersects with $p_{i,j}^{a,b}$ or $p_{i + 1,j}^{a',b'}$ (or both).
In particular, they are connected to the $\kappa \times \kappa$-grid induced by the disks scaled according to $\mathcal{Y}$.
Hence, all disks $\mathcal{S}[h_{i,j}]$ are connected with the rest of the graph.
Similar arguments apply for the disks $\mathcal{S}[v_{i,j}]$ and combining all, we conclude that $r$ is a solution to \instance.

\proofsubparagraph*{($\boldsymbol{\Leftarrow}$)}
Assume that $\instance$ is a yes-instance and consider a solution $r$.
Since connectivity is maintained under enlarging disks, we can assume without loss of generality that $r(p) = \rmax$ for every $p \in \mathcal{T}(r)$.
We construct a set $\mathcal{Y}$ as follows.
For every tile $X_{i,j}$, we add the unique entry $x_{i,j} = (a,b)$ with $r(p_{i,j}^{a,b}) = \rmax$ to $\mathcal{Y}$.
By \Cref{prop:connected-scale-one-per-set}, there is a single disk $p_{i,j}^{a,b}$ with this property, so this operation is well-defined.
Moreover, together with \Cref{prop:connected-dummy-disks}, we obtain that no other disk is scaled except the $4\kappa$ dummy disks.

We now show that $\mathcal{Y}$ is a solution to $\instance_{>}$ and assume, towards a contradiction, that this is not the case.
This implies that there exist two neighboring tiles $X_{i,j}, X_{i',j'} \in \mathcal{X}$ such that $x_{i,j} = (a,b)$ and $x_{i',j'} = (a',b')$ do not fulfill the $>$-constraints.
We consider the case $i' = i + 1$ and $j = j'$, the remaining cases are symmetric.
Hence, we have $a \leq a'$.
From \Cref{prop:connected-scale-relations}, we get that $p_{i,j}^{a,b}$ intersects with the disks $h_{i,j}^{q}$, $q < a$, and $p_{i + 1,j}^{a',b'}$ intersects with the disks $h_{i,j}^{q'}$, $q' > a'$
In particular, since $a \leq a'$, this implies that neither of the two disks intersects the disk $h_{i,j}^{a}$.
Moreover, $\mathcal{T}(r) \cap (\mathcal{S}[X_{i,j}] \cup \mathcal{S}[h_{i,j}] \cup \mathcal{S}[X_{i + 1,j}]) = \{p_{i,j}^{a,b}, p_{i',j'}^{a',b'}\}$ as derived above from combining \Cref{prop:connected-scale-one-per-set,prop:connected-dummy-disks}.
Thus, $h_{i,j}^a$ is an isolated disk in $G(\mathcal{S}, r)$, contradicting the fact that $r$ is a solution.
Combining all, we conclude that $\mathcal{Y}$ is a solution to $\instance_{>}$.
\end{prooflater}

\section{Generalizations and Concluding Remarks}
\label{sec:conclusion}
We contributed to the study of geometric-aware modification operations for \onlyLong{geometric} intersection graphs, focusing on disk-scaling in unit disk graphs where each modified disk can choose a radius from \onlyLong{the interval }$[\rmin, \rmax]$.
Our work opens up several promising avenues for future research:
\onlyLong{%
\subparagraph*{Extension to Disk Graphs.}

Most results extend to (non-unit) disk graphs by adapting the LP formulation (\Cref{sec:frameworks}).
The exception is \Cref{thm:complete}, since the complexity of \probname{Clique} in disk graphs is unknown~\cite{KM.MCP.2025}.
Overcoming this barrier is a promising direction to pursue.

\subparagraph*{Optimization Variants.}
}%
Fomin et al.~\cite{FG00Z25} also studied an optimization variant where we scale (at most) $k$~disks and minimize $\sum_{p \in \mathcal{S}} \Size{1 - r(p)}$.
Generalizing our algorithmic results to this \emph{min}-variant is non-trivial and remains open.
In particular, %
we cannot simply adapt the optimization function of the LP due to our (maximized) ``slack-variable'' $\varepsilon$ (and only determining the disks to scale is not sufficient since disks might require different radii). %
\shortLong{%
It is also open whether \Cluster admits a polynomial kernel for $k$ or is in \NP.
Finally, it would be interesting to combine scaling with other modification operations, such as disk dispersion~\cite{FG00Z23}.
}{
\subparagraph*{Other Open Questions.}
The most prominent open question from our work is whether we can obtain a polynomial kernel for \Cluster.
Furthermore, \NP-containment of \Cluster\ remains open.
Finally, it would be interesting to combine scaling with other modification operations, such as disk dispersion~\cite{FG00Z23}, or study the impact of structural graph parameters, such as the vertex cover number, on our problem.
}

\bibliography{bibliography}

@InProceedings{FG00Z23,
  author    = {Fedor V. Fomin and Petr A. Golovach and Tanmay Inamdar and Saket Saurabh and Meirav Zehavi},
  booktitle = {Proc. 31st European Symposium on Algorithms (ESA'23)},
  title     = {Kernelization for Spreading Points},
  year      = {2023},
  editor    = {Inge Li G{\o}rtz and Martin Farach{-}Colton and Simon J. Puglisi and Grzegorz Herman},
  pages     = {48:1--48:16},
  publisher = {Schloss Dagstuhl - Leibniz-Zentrum f{\"{u}}r Informatik},
  series    = {LIPIcs},
  volume    = {274},
  doi       = {10.4230/LIPICS.ESA.2023.48},
}

@InProceedings{Fishkin03,
  author    = {Aleksei V. Fishkin},
  booktitle = {Proc. 1st Workshop on Approximation and Online Algorithms (WAOA'03)},
  title     = {Disk Graphs: {A} Short Survey},
  year      = {2003},
  editor    = {Klaus Jansen and Roberto Solis{-}Oba},
  pages     = {260--264},
  publisher = {Springer},
  series    = {Lecture Notes in Computer Science},
  volume    = {2909},
  doi       = {10.1007/978-3-540-24592-6\_23},
}

@Article{HK01,
  author  = {Petr Hlinen{\'{y}} and Jan Kratochv{\'{\i}}l},
  journal = {Discrete Mathematics},
  title   = {Representing graphs by disks and balls (a survey of recognition-complexity results)},
  year    = {2001},
  number  = {1-3},
  pages   = {101--124},
  volume  = {229},
  doi     = {10.1016/S0012-365X(00)00204-1},
}

@InProceedings{DKHO24,
  author    = {Nicol{\'{a}}s Honorato Droguett and Kazuhiro Kurita and Tesshu Hanaka and Hirotaka Ono},
  booktitle = {Proc. 18th International Joint Conference on Frontiers in Algorithmics (IJTCS-FAW'24)},
  title     = {Algorithms for Optimally Shifting Intervals Under Intersection Graph Models},
  year      = {2024},
  editor    = {Bo Li and Minming Li and Xiaoming Sun},
  pages     = {66--78},
  publisher = {Springer},
  series    = {Lecture Notes in Computer Science},
  volume    = {14752},
  doi       = {10.1007/978-981-97-7752-5\_5},
}

@Article{ABUKHZAM2025,
  author  = {Faisal N. Abu{-}Khzam and Judith Egan and Serge Gaspers and Alexis Shaw and Peter Shaw},
  journal = {Discrete Applied Mathematics},
  title   = {Cluster Editing with Vertex Splitting},
  year    = {2025},
  pages   = {185-195},
  volume  = {371},
  doi     = {https://doi.org/10.1016/j.dam.2025.04.013},
}

@Article{CDFG23,
  author  = {Christophe Crespelle and P{\aa}l Gr{\o}n{\aa}s Drange and Fedor V. Fomin and Petr A. Golovach},
  journal = {Computer Science Review},
  title   = {A survey of parameterized algorithms and the complexity of edge modification},
  year    = {2023},
  pages   = {100556},
  volume  = {48},
  doi     = {10.1016/J.COSREV.2023.100556},
}

@InProceedings{FG00Z25,
  author    = {Fedor V. Fomin and Petr A. Golovach and Tanmay Inamdar and Saket Saurabh and Meirav Zehavi},
  booktitle = {Proc. 16th Innovations in Theoretical Computer Science (ITCS'25)},
  title     = {Parameterized Geometric Graph Modification with Disk Scaling},
  year      = {2025},
  editor    = {Raghu Meka},
  pages     = {51:1--51:17},
  publisher = {Schloss Dagstuhl - Leibniz-Zentrum f{\"{u}}r Informatik},
  series    = {LIPIcs},
  volume    = {325},
  doi       = {10.4230/LIPICS.ITCS.2025.51},
}

@InProceedings{DKHOW25,
  author    = {Nicol{\'{a}}s Honorato Droguett and Kazuhiro Kurita and Tesshu Hanaka and Hirotaka Ono and Alexander Wolff},
  booktitle = {Proc. 20th International Conference and Workshops on Algorithms and Computation (WALCOM'26)},
  title     = {Further Results on Rendering Geometric Intersection Graphs Sparse by Dispersion},
  year      = {2025},
  editor    = {Di Giacomo, Emilio and Mondal, Debajyoti},
  pages     = {451--466},
  publisher = {Springer},
  series    = {Lecture Notes in Computer Science},
  volume    = {16444},
  doi       = {https://doi.org/10.1007/978-981-95-7127-7_30},
}

@Article{EvdHM.SGN.2024,
  author  = {Jeff Erickson and Ivor {van der Hoog} and Tillmann Miltzow},
  journal = {SIAM Journal on Computing},
  title   = {Smoothing the Gap Between {NP} and {ER}},
  year    = {2024},
  number  = {6},
  pages   = {S20--102},
  volume  = {53},
  doi     = {10.1137/20M1385287},
}

@InProceedings{DKHO25,
  author    = {Nicol{\'{a}}s Honorato Droguett and Kazuhiro Kurita and Tesshu Hanaka and Hirotaka Ono},
  booktitle = {Proc. 19th International Symposium on Algorithms and Data Structures (WADS'25)},
  title     = {On the Complexity of Minimising the Moving Distance for Dispersing Objects},
  year      = {2025},
  editor    = {Pat Morin and Eunjin Oh},
  pages     = {36:1--36:14},
  publisher = {Schloss Dagstuhl - Leibniz-Zentrum f{\"{u}}r Informatik},
  series    = {LIPIcs},
  volume    = {349},
  doi       = {10.4230/LIPICS.WADS.2025.36},
}

@Book{Die.GT4.2012,
  author    = {Reinhard Diestel},
  publisher = {Springer},
  title     = {{G}raph Theory, {4th} {E}dition},
  year      = {2012},
  isbn      = {978-3-642-14278-9},
  series    = {Graduate {T}exts in {M}athematics},
  volume    = {173},
}

@InProceedings{HS95,
  author       = {Huson, M.L. and Sen, A.},
  booktitle    = {Proceedings of the Military Communications Conference ({MILCOM}~'95)},
  title        = {Broadcast scheduling algorithms for radio networks},
  year         = {1995},
  organization = {IEEE},
  pages        = {647--651},
  volume       = {2},
  doi          = {https://doi.org/10.1109/MILCOM.1995.483546},
}

@Article{XZ25,
  author  = {Jie Xue and Meirav Zehavi},
  journal = {Computer Science Review},
  title   = {Parameterized algorithms on geometric intersection graphs},
  year    = {2025},
  pages   = {100796},
  volume  = {58},
  doi     = {10.1016/J.COSREV.2025.100796},
}

@InProceedings{Karp72,
  author    = {Richard M. Karp},
  booktitle = {Proc. Computer Science Conferences \& Workshops},
  title     = {Reducibility {A}mong {C}ombinatorial {P}roblems},
  year      = {1972},
  editor    = {Raymond E. Miller and James W. Thatcher},
  pages     = {85--103},
  publisher = {Plenum Press, New York},
  series    = {The {IBM} Research Symposia Series},
  doi       = {10.1007/978-1-4684-2001-2\_9},
}

@InProceedings{Eppstein09,
  author    = {David Eppstein},
  booktitle = {Proc. 35th International Workshop on Graph-Theoretic Concepts in Computer Science (WG'09)},
  title     = {Graph-Theoretic Solutions to Computational Geometry Problems},
  year      = {2009},
  editor    = {Christophe Paul and Michel Habib},
  pages     = {1--16},
  publisher = {Springer},
  series    = {Lecture Notes in Computer Science},
  volume    = {5911},
  doi       = {10.1007/978-3-642-11409-0\_1},
}

@Article{CCJ90,
  author  = {Brent N. Clark and Charles J. Colbourn and David S. Johnson},
  journal = {Discrete Mathematics},
  title   = {Unit disk graphs},
  year    = {1990},
  number  = {1-3},
  pages   = {165--177},
  volume  = {86},
  doi     = {10.1016/0012-365X(90)90358-O},
}

@Article{BBKMZ18,
  author  = {Mark de Berg and Hans L. Bodlaender and S{\'{a}}ndor Kisfaludi{-}Bak and D{\'{a}}niel Marx and Tom C. van der Zanden},
  journal = {SIAM Journal on Computing},
  title   = {A Framework for Exponential-Time-Hypothesis-Tight Algorithms and Lower Bounds in Geometric Intersection Graphs},
  year    = {2020},
  number  = {6},
  pages   = {1291--1331},
  volume  = {49},
  doi     = {10.1137/20M1320870},
}

@Article{FLPSZ19,
  author  = {Fedor V. Fomin and Daniel Lokshtanov and Fahad Panolan and Saket Saurabh and Meirav Zehavi},
  journal = {Discrete \& Computational Geometry},
  title   = {Finding, Hitting and Packing Cycles in Subexponential Time on Unit Disk Graphs},
  year    = {2019},
  number  = {4},
  pages   = {879--911},
  volume  = {62},
  doi     = {10.1007/S00454-018-00054-X},
}

@InProceedings{FLS12,
  author    = {Fedor V. Fomin and Daniel Lokshtanov and Saket Saurabh},
  booktitle = {Proc. 23rd ACM-SIAM Symposium on Discrete Algorithms (SODA'12)},
  title     = {Bidimensionality and geometric graphs},
  year      = {2012},
  editor    = {Yuval Rabani},
  pages     = {1563--1575},
  publisher = {{SIAM}},
  doi       = {10.1137/1.9781611973099.124},
}

@Article{LY80,
  author  = {John M. Lewis and Mihalis Yannakakis},
  journal = {Journal of Computer and System Sciences (JCSS)},
  title   = {The Node-Deletion Problem for Hereditary Properties is {NP}-Complete},
  year    = {1980},
  number  = {2},
  pages   = {219--230},
  volume  = {20},
  doi     = {10.1016/0022-0000(80)90060-4},
}

@Article{Yannakakis81,
  author  = {Mihalis Yannakakis},
  journal = {SIAM Journal on Computing},
  title   = {Edge-Deletion Problems},
  year    = {1981},
  number  = {2},
  pages   = {297--309},
  volume  = {10},
  doi     = {10.1137/0210021},
}

@Article{Santi05,
  author  = {Paolo Santi},
  journal = {ACM Computing Surveys},
  title   = {Topology control in wireless ad hoc and sensor networks},
  year    = {2005},
  number  = {2},
  pages   = {164--194},
  volume  = {37},
  doi     = {10.1145/1089733.1089736},
}

@Article{CHAN18,
  author  = {Timothy M. Chan},
  journal = {ACM Transactions on Algorithms (TALG)},
  title   = {Improved Deterministic Algorithms for Linear Programming in Low Dimensions},
  year    = {2018},
  number  = {3},
  pages   = {30:1--30:10},
  volume  = {14},
  doi     = {10.1145/3155312},
}

@Article{Komusiewicz18,
  author  = {Christian Komusiewicz},
  journal = {ACM Transactions on Computation Theory (ToCT)},
  title   = {Tight Running Time Lower Bounds for Vertex Deletion Problems},
  year    = {2018},
  number  = {2},
  pages   = {6:1--6:18},
  volume  = {10},
  doi     = {10.1145/3186589},
}

@Book{FLSZ19,
  author    = {Fomin, Fedor V and Lokshtanov, Daniel and Saurabh, Saket and Zehavi, Meirav},
  publisher = {Cambridge University Press},
  title     = {Kernelization: theory of parameterized preprocessing},
  year      = {2019},
  doi       = {https://doi.org/10.1017/9781107415157},
}

@Book{CFK+.PA.2015,
  author    = {Marek Cygan and Fedor V. Fomin and Lukasz Kowalik and Daniel Lokshtanov and D{\'{a}}niel Marx and Marcin Pilipczuk and Michal Pilipczuk and Saket Saurabh},
  publisher = {Springer},
  title     = {Parameterized Algorithms},
  year      = {2015},
  isbn      = {978-3-319-21274-6},
  doi       = {10.1007/978-3-319-21275-3},
}

@Article{PSZ24,
  author  = {Fahad Panolan and Saket Saurabh and Meirav Zehavi},
  journal = {ACM Transactions on Algorithms (TALG)},
  title   = {Contraction Decomposition in Unit Disk Graphs and Algorithmic Applications in Parameterized Complexity},
  year    = {2024},
  number  = {2},
  pages   = {15},
  volume  = {20},
  doi     = {10.1145/3648594},
}

@InProceedings{EKM.FMC.2023,
  author    = {Espenant, Jared and Keil, J. Mark and Mondal, Debajyoti},
  booktitle = {Proc. 39th International Symposium on Computational Geometry (SoCG'23)},
  title     = {Finding a Maximum Clique in a Disk Graph},
  year      = {2023},
  editor    = {Erin W. Chambers and Joachim Gudmundsson},
  pages     = {30:1--30:17},
  publisher = {Schloss Dagstuhl - Leibniz-Zentrum f{\"{u}}r Informatik},
  series    = {LIPIcs},
  volume    = {258},
  doi       = {10.4230/LIPICS.SOCG.2023.30},
}

@InProceedings{KM.MCP.2025,
  author    = {Keil, J. Mark and Mondal, Debajyoti},
  booktitle = {Proc. 41st International Symposium on Computational Geometry (SoCG'25)},
  title     = {The Maximum Clique Problem in a Disk Graph Made Easy},
  year      = {2025},
  editor    = {Oswin Aichholzer and Haitao Wang},
  pages     = {63:1--63:16},
  publisher = {Schloss Dagstuhl – Leibniz-Zentrum für Informatik},
  series    = {LIPIcs},
  volume    = {332},
  doi       = {10.4230/LIPICS.SOCG.2025.63},
}

@Book{GJ.CIG.1979,
  author    = {Michael R. Garey and David S. Johnson},
  publisher = {W. H. Freeman},
  title     = {{C}omputers and Intractability: {A} Guide to the Theory of {NP}-{C}ompleteness},
  year      = {1979},
  booktitle = {Computers and intractability : a guide to the theory of NP-completeness},
}

@Article{LMS.LA2.1998,
  author  = {Yanpei Liu and Aurora Morgana and Bruno Simeone},
  journal = {Discrete Applied Mathematics},
  title   = {A Linear Algorithm for 2-bend Embeddings of Planar Graphs in the Two-dimensional Grid},
  year    = {1998},
  number  = {1–3},
  pages   = {69--91},
  volume  = {81},
  doi     = {10.1016/s0166-218x(97)00076-0},
}

@Article{HK.nAM.1973,
  author  = {John E. Hopcroft and Richard M. Karp},
  journal = {SIAM Journal on Computing},
  title   = {An n\({}^{\mbox{5/2}}\) Algorithm for Maximum Matchings in Bipartite Graphs},
  year    = {1973},
  number  = {4},
  pages   = {225--231},
  volume  = {2},
  doi     = {10.1137/0202019},
}

@InProceedings{JSWZ.fas.2021,
  author    = {Jiang, Shunhua and Song, Zhao and Weinstein, Omri and Zhang, Hengjie},
  booktitle = {Proc. 53rd Symposium on the Theory of Computing (STOC'21)},
  title     = {A faster algorithm for solving general {LP}s},
  year      = {2021},
  editor    = {Samir Khuller and Virginia Vassilevska Williams},
  pages     = {823--832},
  publisher = {ACM},
  doi       = {10.1145/3406325.3451058},
}

@Article{Ren.pta.1988,
  author  = {Renegar, James},
  journal = {Mathematical Programming},
  title   = {A polynomial-time algorithm, based on {N}ewton’s method, for linear programming},
  year    = {1988},
  number  = {1–3},
  pages   = {59--93},
  volume  = {40–40},
  doi     = {10.1007/bf01580724},
}

\onlyShort{

\appendix

\newpage
\section{Omitted Details from \Cref{sec:frameworks}}

\subsection{A Linear Program for Finding Radii in Polynomial Time}
\label{app:linear-programming}
\linearProgrammingFormulation

\subsection{Omitted Proofs}

\lemmaLinearProgramCorrectness*
\label{lem:linear-program-correctness*}
\plemmaLinearProgramCorrectness

\theoremXPFramework*
\label{thm:xp-framework*}
\ptheoremXPFramework

\newpage
\section{Omitted Details from \Cref{sec:cluster-fpt}}
\label{app:cluster-fpt}

\clusterFPTDetails

\subsection{Omitted Proofs}

\theoremClusterFPT*
\label{thm:cluster-fpt*}

Before we can prove \Cref{thm:cluster-fpt}, we first have to establish correctness of our algorithm.
\clusterFPTCorrectness

We now use \Cref{lem:cluster-fpt-correctness} to show that \Cluster is \FPT\ with respect to $k$, i.e., establish \Cref{thm:cluster-fpt}, which we recall below.
\theoremClusterFPT*
\ptheoremClusterFPT

\newpage
\section{Omitted Details from \Cref{sec:cluster-hardness}}
\label{app:cluster-hardness}

\clusterHardnessShrinkDetails

\clusterHardnessEnlargeDetails

\clusterHardnessCombinedDetails

\subsection{Omitted Proofs}
\propositionClusterHardnessShrink*
\label{prop:cluster-hardness-shrink*}
\ppropositionClusterHardnessShrink

\propositionClusterHardnessEnlarge*
\label{prop:cluster-hardness-enlarge*}
\ppropositionClusterHardnessEnlarge

\newpage
\section{Omitted Proof from \Cref{sec:complete}}
\label{app:complete}

\lemmaScalingCompleteClique*
\label{lem:scale-complete-clique*}
\plemmaScalingCompleteClique

\newpage
\section{Omitted Details from \Cref{sec:connected}}
\label{app:connected}

For a yes-instance $\instance_{>} = (\eta, \kappa, \mathcal{S})$ of \GridTiling, we let $\mathcal{Y} = \{x_{i,j} \in X_{i,j} : i,j \in [\kappa]\}$ denote the set of selected tuples, i.e., a \emph{solution}.

\connectedConstructionDetails

\subsection{Omitted Proofs}

\lemmaGridTilingHardness*
\label{lem:our-grid-tiling-hardness*}
\plemmaGridTilingHardness

\lemmaConnectedProperties*
\label{lem:connected-properties*}

To establish \Cref{lem:connected-properties}, we show for each property individually that our instance fulfills it.

\connectedLemmasForProperties
\lemmaConnectedProperties*

\theoremConnectedHard*
\label{thm:connected-w1-hard*}
\ptheoremConnectedHard
}

\end{document}